\newtheorem{theorem}{Theorem}[section]
\newtheorem{lemma}[theorem]{Lemma}
\newtheorem{corollary}[theorem]{Corollary}
\newtheorem{proposition}[theorem]{Proposition}
\theoremstyle{definition}
\newtheorem{definition}[theorem]{Definition}
\newtheorem{example}[theorem]{Example}
\theoremstyle{remark}
\newtheorem{remark}[theorem]{Remark}
\newcommand{\G}{\mathcal G}
\newcommand{\F}{\mathbb F}
\newcommand{\cyc}{\operatorname{Cyc}}
\newcommand{\Z}{\mathbb Z}
\newcommand{\C}{\mathcal C}
\newcommand{\GL}{\mathrm{GL}}
\newcommand{\PGL}{\mathrm{PGL}}
\newcommand{\seq}[1]{\textrm{{\boldmath $#1$}}}
\DeclareMathOperator{\ord}{o}
\DeclareMathOperator{\lcm}{lcm}
\newcommand{\mqureshi}[1]{{#1}}
\newcommand{\doublespace}
\author[C. Qureshi]{Claudio Qureshi}
\address{Instituto de Matem\'atica y Estad{\'\i}stica Rafael Laguardia, Facultad de Ingenier{\'\i}a, Universidad de la Rep\'ublica, Montevideo 11300, Uruguay.}
\email{cqureshi@gmail.com}
\author[L. Reis]{Lucas Reis}
\address{Departamento de Matem\'{a}tica, Universidade Federal de Minas Gerais, Belo Horizonte, MG, 30270-901, Brazil.}
\email{lucasreismat@gmail.com}
\keywords{functional graph, finite groups, power map}
\subjclass[2010]{Primary 20D60, Secondary 05C20}
\begin{document}

\title{On the functional graph of the power map over finite groups}

\begin{abstract}
In this paper we study the description of the functional graphs associated with the power maps over finite groups. \mqureshi{We present a structural result which describes the isomorphism class of these graphs for abelian groups and also for flower groups, which is a special class of non abelian groups introduced in this paper. Unlike the abelian case where all the trees associated with periodic points are isomorphic, in the case of flower groups we prove that several different classes of trees can occur. The class of central trees (i.e. associated with periodic points that are in the center of the group) are in general non-elementary and a recursive description is given in this work. Flower groups include many non abelian groups such as dihedral and generalized quaternion groups, and the projective general linear group of order two over a finite field. In particular, we provide improvements on past works regarding the description of the dynamics of the power map over these groups.}
\end{abstract}

\maketitle

\section{Introduction}
Given a pair $(f, S)$ of a finite set $S$ and a map $f:S\to S$, we can associate with it a graph $\mathcal G(f/S)$, called the functional graph of $f$ over $S$. This is the directed graph with vertex set $V=\{s\,|\, s\in S\}$ and directed edges $\{s\to f(s)\,|\, s\in S\}$. The functional graph encodes the dynamics of $f$ over $S$. For instance, the $f$-orbit $\{s, f(s), f^{(2)}(s), \ldots\}$ of an element $s\in S$  is described by a path in $\mathcal G(f/S)$. Moreover, an element $s\in S$ is $f$-periodic if and only if it belongs to a cycle of $\mathcal G(f/S)$. One of the motivation of studying finite dynamical systems is due to their applications such as integer factorization methods in cryptography \cite{P75,WZ98} and pseudo-random number generators \cite{BBS}. The description of $\mathcal G(f/S)$ has been considered for many algebraic structures $S$ and well behaved maps $f$. In many cases, the functional graphs turns out to have many remarkable properties such as regularity on the indegrees and symmetries on its cycles that allow us to obtain a partial or complete description of their structure. See~\cite{Gassert14, MV88,PR18, PMMY01, QP15, QP18, QPM17, Rogers96, T05, %Ugolini13, Ugolini14, Ugolini18, 
VS04} for a rich source of results regarding these issues and \cite{MPQ19} for a survey including some applications of dynamical systems over finite fields.

When $S=G$ is a finite group, it is natural to consider the power map $\varphi_t:G\to G$ with $\varphi_t(g)=g^t$ (or $tg$ if $G$ is written additively), where $t$ is an integer. When $G$ is cyclic, the graph $\mathcal G(\varphi_t/G)$ is completely described in~\cite{QP15} \mqureshi{and the trees attached to the periodic points are described by a non-increasing sequence of positive integers. Trees constructed in this way are called elementary and they turn out to be useful to describe the non-periodic part of the dynamic of many interesting class of maps such as Chebyshev polynomials over finite fields, Redei functions and some maps related to certain endomorphism of elliptic curves over finite fields \cite{QR19}. In \cite{deKlerketal} the authors describe the graph $\mathcal G(\varphi_t/G)$ for special cases of abelian groups. In section \ref{subsection:powermap-abeliangroup} we obtain a complete description of this graph for general abelian groups, mainly using the explicit description given in~\cite{QP15} for the cyclic case and using some results of~\cite{QR19}. The trees attached to the periodic points in this case (which are all isomorphic) are expressed as a tensor product of elementary trees. When the group is non abelian the power map $\varphi_t$ is no longer an homomorphism and the trees attached to the periodic points are no longer isomorphic and in general it is a difficult problem to describe them. The trees attached to periodic points in the center of the group are isomorphic as we show in Section \ref{subsection:powermap-finitegroup}, these trees are called {\it central trees} and are, in general, non-elementary. Regarding the problem of describing the graph $G(\varphi_t/G)$} for non abelian groups $G$, only results for special families are known. More specifically, in \cite{ahmad}, \cite{a2} and \cite{deng} the cases where $G$ is a dihedral, a generalized quaternion and some semidirect product of cyclic groups are explored, respectively. However the digraphs are not explicitly determined there. Instead, only partial descriptions of such graph are given like the distribution of indegrees, cycle lengths and number of cycles. Further results of this kind are given in~\cite{larsen} for special classes of finite groups.

In this paper we introduce the class of {\it flower groups}, which contains the dihedral and generalized quaternion groups, and also some semidirect product of cyclic groups. Our main result is the description of the functional graph structure of the power map over flower groups. Such description is given in two parts: in Theorem \ref{thm:main} we provide a complete description of the cyclic structure and also of the non central trees, and in Theorem \ref{thm:tree} we provide some properties of central trees which allow us to obtain a explicit description of such trees for very special cases of interest.

Our paper is organized as follows. In Section \ref{Section:Preliminaries} we introduce the basic notation and operations on graphs and review the structure theorem for the functional graph of the power map on cyclic groups. Then, using results from~\cite{QR19}, we show how to extend this structure theorem for finite abelian groups. In the last part of this section we provide some general results on the functional graph of the power map on arbitrary finite groups that is further used to prove our main results. In Section \ref{Section:main} we introduce the notion of flower groups, obtain some properties about these groups and also prove our main results. In Section \ref{Section:applications} we apply our main results to the explicit description of functional graphs associated to some non abelian groups.

\section{Preliminaries} \label{Section:Preliminaries}
This section provides some background material and minor results. We start by fixing some basic notation and reviewing the structural description of the functional graph of the power map on cyclic groups. Then we show how to extend this structural description for abelian groups. At the end of this section we provide some general results on the functional graph of the power map over arbitrary finite groups.

Along the paper we use the letters $G, H$ to denote finite groups and $C$ to denote cyclic groups. For a positive integer $m$, $\mathcal C_m$ denotes the cyclic group of order $m$. We use multiplicative notation for the operation of these groups unless otherwise specified. The ring of integers modulo $d$ is denoted by $\Z_d$ and their multiplicative group of units is denoted by $\Z_d^*$. The Euler totient function is denoted by $\varphi(d)$ and the (multiplicative) order of $t \in \Z_d^*$ is denoted by $\ord_d(t)$. For a group $G$, the centralizer of $S\subseteq G$ is denoted by $C_G(S)$ and the center of $G$ is denoted by $Z(G)=C_G(G)$.

\subsection{Basics on functional graphs}
%It is well known that the connected components of functional graphs contain a single directed cycle of nodes and each of these nodes is the root of a directed tree from leaves to its roots. In our case of study the trees associated with cyclic nodes in the same connected component are isomorphic and the functional graph is determined, up to isomorphism, by its underlying undirected graph.
We use the same terminology as in \cite{QP15} and \cite{QR19}. It is a well known fact that the connected components of functional graphs are composed of a cycle and each vertex of this cycle is the root of a tree (the direction of the edges is from leaves to the root). Several interesting classes of functions over finite fields or finite structures studied in the literature present certain regularity that is reflected in some symmetry conditions on their functional graphs. We denote by $\cyc(m,T)$ a directed graph composed by an $m$-cycle (i.e. a cycle of length $m$) where every node in this cycle is the root of a tree isomorphic to $T$. When every connected component of a functional graph $\G(f/S)$ is of this form we say that this functional graph is {\it regular}, in this case there are integers $m_1,\ldots,m_k$ and rooted trees $T_1,\ldots, T_k$ such that $\G(f/S)=\bigoplus_{i=1}^{k}\cyc(m_i,T_i)$, where the circled plus symbol denotes a disjoint union of graphs. When the rooted tree $T$ has a unique vertex we write $T=\bullet$. The $m$-cycle $\cyc(m,\bullet)$ is denoted by $\cyc(m)$ and the rooted tree $T$ with a loop in the root $\cyc(1,T)$ is denoted by $\{T\}$. The notation $G=k\times H$ means that $G=\bigoplus_{i=1}^{k}H_i$ with each $H_i$ isomorphic to the graph $H$. A forest is a disjoint union of rooted trees. Given a forest $G=\bigoplus_{i=1}^{k}T_i$, we denote by $\langle G \rangle$ the rooted tree with $k$ children where these children are roots of trees isomorphic to $T_1,\ldots, T_k$. We consider the empty graph $\emptyset$ with the property that $\langle \emptyset \rangle = \bullet$. Given rooted trees $T_1=\langle G_1 \rangle, \cdots, T_k=\langle G_k \rangle$ where each $G_i$ is a forest  we define $\sum_{i=1}^{k}T_i := \langle \bigoplus_{i=1}^{k}G_i \rangle$ (i.e. the sum of rooted trees is a new rooted tree which is obtained by identifying all the roots). For a tree $T=\langle G \rangle$ and $k\in \mathbb{Z}^{+}$ we denote $k\cdot T = \langle k \times G \rangle$ (note that $k\cdot T \neq k \times T$ if $k>1$). \\

The vertices belonging to the cycles of the functional graph $\G(f/S)$ correspond to the periodic points of $f$. Given a point $x_0 \in S$, the least natural number $\delta=\delta(x_0)$ such that $f^{(\delta)}(x_0)$ is a periodic point is called the {\it preperiod} of $x_0$ (the periodic points corresponds to points $x_0$ with $\delta(x_0)=0$). Regarding the functional graph $\G(f/S)$, the number $\delta(x_0)$ equals the depth of $x_0$ in the rooted tree which it belongs to (i.e. the distance of $x_0$ to the root).\\

There is a simple way to associate with each non-increasing finite sequence of positive integers $\seq{v}=(\nu_1,\nu_2,\ldots, \nu_d)$ (i.e. $\nu_1\geq \nu_2 \geq \cdots, \nu_d \geq 1$) a rooted tree $T_{\seq{v}}$ defined recursively as follows:

\begin{equation}\label{TreeAssociatedEq}
\left\{ \begin{array}{l}
           T_{\seq{v}}^{0}= \bullet,   \\
          G_{\seq{v}}^{k}= \nu_k \times T_{\seq{v}}^{k-1} \oplus
                 \bigoplus_{i=1}^{k-1}(\nu_{i}-\nu_{i+1})\times T_{\seq{v}}^{i-1}
                  \textrm{ and } T_\seq{v}^{k}=\langle G_{\seq{v}}^{k} \rangle \textrm { for } 1\leq k <d, \\
    G_{\seq{v}} =  (\nu_d-1) \times T_{\seq{v}}^{d-1} \oplus
                 \bigoplus_{i=1}^{d-1}(\nu_{i}-\nu_{i+1})\times T_{\seq{v}}^{i-1}
                  \textrm{ and } T_\seq{v}= \langle G_{\seq{v}} \rangle.
         \end{array}
\right.\end{equation}

Trees associated with non-increasing sequences as above are called {\it elementary trees}, see \cite{QR19} for more details on elementary trees. Figure \ref{inductivo} shows the inductive process to contruct $T_\seq{v}$ for a $4$-term sequence.

%%%%%%%%%%%%%%%%Figure: Tree description%%%%%%%%%%%%%%%%%%

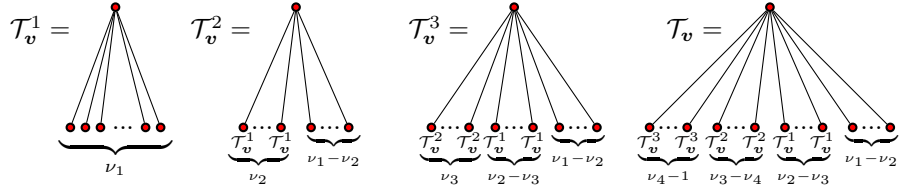
\begin{figure}[H]
\begin{center}
\begin{tikzpicture}
  [scale=1, place/.style={circle,draw=black,thick,fill=red,
                 inner sep=0pt,minimum size=1mm}]
  \node (T1) at (-1,-0.3) {$\mathcal{T}_{\seq{v}}^{1}=$};
  \node (R) at ( 0,0) [place] {};
  \node (1) at ( -0.6,-1.6) [place] {};
  \node (2) at ( -0.4,-1.6) [place] {};
  \node (3) at ( -0.2,-1.6) [place] {};
  \node (ptos) at ( 0.1, -1.6) {...};
  \node (4) at ( 0.4,-1.6) [place] {};
  \node (5) at ( 0.6,-1.6) [place] {};
  \node (underbrace) at (0, -2) {$\underbrace{\hspace{1.5cm}}_{\nu_1}$};
\foreach \x/\y in {R/1, R/2, R/3, R/4, R/5}
	\draw (\x) to (\y);
	
\begin{scope}[xshift=2.4cm, yshift=0cm]
  \node (T2) at (-1,-0.3) {$\mathcal{T}_{\seq{v}}^{2}=$};
  \node (R) at ( 0,0) [place] {};
  \node (1) at ( -0.7,-1.6) [place] {};
  \node (1T) at (-0.7, -1.8) {\tiny $\mathcal{T}_{\seq{v}}^{1}$};
  \node (ptos1) at ( -0.45,-1.6) {...};
  \node (2) at ( -0.2,-1.6) [place] {};
  \node (2T) at (-0.2,-1.8) {\tiny $\mathcal{T}_{\seq{v}}^{1}$}; 
  \node (underbrace1) at (-0.5, -2.15) {\scriptsize $\underbrace{\hspace{0.8cm}}_{\nu_2}$};
  \node (3) at ( 0.2, -1.6) [place] {};
  \node (ptos2) at (0.45,-1.6) {...};
  \node (4) at ( 0.7,-1.6) [place] {};
  \node (underbrace2) at (0.5, -1.9) {\scriptsize $\underbrace{\hspace{0.7cm}}_{\nu_1-\nu_2}$};
\foreach \x/\y in {R/1, R/2, R/3, R/4}
	\draw (\x) to (\y);
\end{scope}

\begin{scope}[xshift=5.3cm, yshift=0cm]
  \node (T3) at (-1,-0.3) {$\mathcal{T}_{\seq{v}}^{3}=$};
  \node (R) at (0,0) [place] {};
  \node (1) at (-1.1,-1.6) [place] {};
  \node (1T) at (-1.1, -1.8) {\tiny $\mathcal{T}_{\seq{v}}^{2}$};
  \node (ptos1) at (-0.85,-1.6) {...};
  \node (2) at ( -0.6,-1.6) [place] {};
  \node (2T) at (-0.6,-1.8) {\tiny $\mathcal{T}_{\seq{v}}^{2}$}; 
  \node (underbrace1) at (-0.85, -2.15) {\scriptsize $\underbrace{\hspace{0.8cm}}_{\nu_3}$};
  \node (3) at ( -0.25, -1.6) [place] {};
  \node (3T) at (-0.25, -1.8) {\tiny $\mathcal{T}_{\seq{v}}^{1}$}; 
  \node (ptos2) at (0,-1.6) {...};
  \node (4) at ( 0.25,-1.6) [place] {};
  \node (4T) at (0.25,-1.8) {\tiny $\mathcal{T}_{\seq{v}}^{1}$};
  \node (underbrace2) at (0, -2.15) {\scriptsize $\underbrace{\hspace{0.7cm}}_{\nu_2-\nu_3}$};
  \node (5) at ( 0.6, -1.6) [place] {};
  \node (ptos3) at (0.85,-1.6) {...};
  \node (6) at ( 1.1,-1.6) [place] {};
  \node (underbrace3) at (0.85, -1.9) {\scriptsize $\underbrace{\hspace{0.7cm}}_{\nu_1-\nu_2}$};
\foreach \x/\y in {R/1, R/2, R/3, R/4, R/5, R/6}
	\draw (\x) to (\y);
\end{scope}

\begin{scope}[xshift=8.7cm, yshift=0cm]
  \node (T) at (-1,-0.3) {$\mathcal{T}_{\seq{v}}=$};
  \node (R) at (0,0) [place] {};
  \node (1) at (-1.6,-1.6) [place] {};
  \node (1T) at (-1.6,-1.8) {\tiny $\mathcal{T}_{\seq{v}}^{3}$};
  \node (ptos1) at (-1.35,-1.6) {...};
  \node (2) at (-1.1,-1.6) [place] {};
  \node (2T) at (-1.1,-1.8) {\tiny $\mathcal{T}_{\seq{v}}^{3}$}; 
  \node (underbrace1) at (-1.35, -2.15) {\scriptsize $\underbrace{\hspace{0.8cm}}_{\nu_4-1}$};
  \node (3) at ( -0.7, -1.6) [place] {};
  \node (3T) at (-0.7, -1.8) {\tiny $\mathcal{T}_{\seq{v}}^{2}$}; 
  \node (ptos2) at (-0.45,-1.6) {...};
  \node (4) at (-0.2,-1.6) [place] {};
  \node (4T) at (-0.2,-1.8) {\tiny $\mathcal{T}_{\seq{v}}^{2}$};
  \node (underbrace2) at (-0.45, -2.15) {\scriptsize $\underbrace{\hspace{0.7cm}}_{\nu_3-\nu_4}$};  
  \node (5) at (0.2, -1.6) [place] {};
  \node (5T) at (0.2, -1.8) {\tiny $\mathcal{T}_{\seq{v}}^{1}$}; 
  \node (ptos3) at (0.45,-1.6) {...};
  \node (6) at ( 0.7,-1.6) [place] {};
  \node (7T) at (0.7,-1.8) {\tiny $\mathcal{T}_{\seq{v}}^{1}$};
  \node (underbrace2) at (0.45, -2.15) {\scriptsize $\underbrace{\hspace{0.7cm}}_{\nu_2-\nu_3}$};
  \node (7) at (1.1, -1.6) [place] {};
  \node (ptos3) at (1.35,-1.6) {...};
  \node (8) at (1.6,-1.6) [place] {};
  \node (underbrace3) at (1.35, -1.9) {\scriptsize $\underbrace{\hspace{0.7cm}}_{\nu_1-\nu_2}$};
\foreach \x/\y in {R/1, R/2, R/3, R/4, R/5, R/6, R/7, R/8}
	\draw (\x) to (\y);
\end{scope}

\end{tikzpicture}
\end{center}
\caption{This figure (taken from \cite{QP15}) illustrates the inductive 
definition of $T_\seq{v}$ for a $4$-term sequence $\seq{v}=(\nu_1,\nu_2,\nu_3,\nu_4)$. A node labelled by a rooted tree $T$ indicates that it is the root of a tree isomorphic to $T$.}
\label{inductivo}
\end{figure}

%%%%%%%%%%%%%%%%End Figure: Tree description%%%%%%%%%%%%%%%%%%

Elementary trees appear in a wide range of context when we describe the dynamics of different classes of maps, see for example \cite{Gassert14, PR18, QP15,QP18, QPM17,QR19,Ugolini18}. They will play a key role also in our description of the functional graph associated with the power map over flower groups. An important special case of elementary trees $T_\seq{v}$ is when the sequence $\seq{v}$ is a multiplicative chain (i.e. when each term of the sequence divides the previous term), these are the trees that appear in the functional graph of power maps on abelian groups.

%\begin{remark}\label{rem:seq-ones}
%We observe that, from the definition, for  each non-increasing finite sequence of positive integers $\seq{v}=(\nu_1,\nu_2,\ldots, \nu_d)$, we have that $T_{\seq{v}}$ is isomorphic to $T_{\seq{v'}}$, where $\seq{v'}$ is any sequence obtained from $\seq{v}$ by adding a finite number of extra $1$'s. 
%\end{remark}

A rooted tree $T$ is {\it homogeneous} if it can be written as $T=\langle \bigoplus_{i=0}^{d-1}  n_i \times T_i \rangle$ where $T_i$ is a rooted tree with depth $i$ for $0\leq i <d$ (i.e. all its subtrees with the same depth are isomorphic). In particular elementary trees are homogeneous since the subtrees $T_{\seq{v}}^i$ of $T_{\seq{v}}$ associated with the $d$-term sequence $\seq{v}$ has depth $i$, for $0\leq i < d $. The following operation will be usefull to describe trees in the functional graph associated with the power map on some finite non abelian groups.

\begin{definition}
Let $T =\langle \bigoplus_{i=0}^{d-1}  n_i \times T_i \rangle$ be an homogeneous rooted tree where each $T_i$ has depth $i$ for $0\leq i <d$ and $S$ be a rooted tree. For $0\leq j<d $, the $j$-sum of $T$ and $S$ is given by $T +_{j} S = \langle  \bigoplus_{i\neq j}  n_i \times T_i \oplus (n_j-1)\times T_j \oplus (T_j +S)  \rangle$.
\end{definition}

We note that $T +_{j} S$ is obtaining by replacing one of the depth-$j$-subtree $T_j$ of $T$ with $T_j+S$. The resulting rooted tree will be not an homogeneous tree in general.

\subsection{The power map on finite cyclic groups}\label{subsection:powermap-finitegroup}
%The distribution of $\varphi_t$-periodic elements on cyclic groups is given as follows.
%\begin{lemma}\label{lem:cyc}
%Fix $t$ a positive integer and let $G$ be a cyclic group of order $n$ and write $n=n_1n_2$, where $n_1$ is the greatest divisor of $n$ that is relatively prime with $t$. Then $G$ has exactly $n_1$ elements that are $\varphi_t$-periodic. Moreover such elements yield the following cycle decomposition in $\G(\varphi_t/G)$:
%$$\bigoplus_{d|n_1}\frac{\varphi(d)}{\ord_dt}\times \cyc(\ord_dt).$$ 
%\end{lemma}

Let $\mathcal C_n$ be the (multiplicative) cyclic group of order $n$ and $\varphi_t:\mathcal C_n \to \mathcal C_n$ be the power map $x\to x^t$. The trees attached to the cyclic points in $\G(\varphi_t/\mathcal C_n)$ are described by the iterated gcd of $n$ relative to $t$ (also called $\nu$-series in \cite{QP15}), denoted by $\gcd_t(n)$, and defined as follows\footnote{In \cite{QP15} the iterated gcd of $n$ relative to $t$ is called $\nu$-series generated by $n$ and $t$, denoted by $n(t)$, and defined only in the case when each prime divisor of $n$ divides $t$. In order to avoid possible confusion we decide to use the alternative notation $\gcd_t(n)$.}: $\gcd_t(n)=(1)$ if $\gcd(t,n)=1$, otherwise $\gcd_t(n)=(\nu_1,\ldots,\nu_D)$ where 
$$\nu_1=\gcd(t,n), \quad \nu_{i+1}=
  \gcd\left(t,\frac{n}{\nu_1\cdots \nu_i}\right), \mbox{ for } i \geq 1,$$ 
and $D$ is the least positive integer such that $\nu_{D+1}=1$. It is easy to see that if $\nu:=\nu_1\cdots \nu_D$ then $\nu \mid n$, $\gcd_t(n)=\gcd_t(\nu)$ and $\omega:=n/\nu$ is the greatest divisor of $n$ that is relatively prime with $t$ (see \cite{QP15} for more details). The following proposition gives an explicit description for the functional graph $\G(\varphi_t/\mathcal C_n)$.

% Talvez definir los elementary trees asociados a secuencias. En el caso de grupos ciclicos estas secuencias son de una forma muy especial (iterated gcds) y en particular forman una cadena bajo la relaci\'on de divisibilidad. Mas generalmente estos arboles elementales aparecen cuando estudiamos ciertos mapas sobre dominios de dedekind. La figura muestra un arbol elemental para secuencias de largo 4.

% Buscar todos los lem:cyc y cambiarlos por prop:cyc.
\begin{proposition}[\cite{QPM17}, Proposition 2.1]\label{prop:cyc}
Let $n=\nu\omega$, where $\omega$ is the greatest divisor of $r$ that is relatively prime with $t$. Let $\gcd_{t}(\nu)=(\nu_1,\nu_2,\ldots,\nu_{D})$ be the iterated gcd of $\nu$ relative to $t$ and $T_{\gcd_{t}(\nu)}$ be the elementary tree associated with this sequence. Then $\mathcal C_n$ has exactly $\omega$ elements that are $\varphi_t$-periodic and the following isomorphism formula holds: 
\begin{equation}\label{IsomorphismFormulaEq}
\G(\varphi_t/\mathcal C_n)= \bigoplus_{d\mid \omega}
\left(\frac{\varphi(d)}{\ord_d(t)} \times \cyc\left(\ord_d(t),T_{\gcd_{t}(\nu)}\right) \right).
\end{equation}
Moreover, the tree $T_{\gcd_{t}(\nu)}$ has $\nu$ vertices and depth $D$.
\end{proposition}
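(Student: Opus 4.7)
The plan is to split the argument into three stages, leveraging the canonical splitting $\mathcal C_n\cong \mathcal C_\nu\times \mathcal C_\omega$ (valid since $\gcd(\nu,\omega)=1$) under which $\varphi_t$ factors as $\varphi_t\times\varphi_t$. For convenience I would write the group additively, so that $\varphi_t(x)=tx$. Then $x$ is $\varphi_t$-periodic iff $t^k\equiv 1\pmod{\ord(x)}$ for some $k\geq 1$, i.e.\ iff $\gcd(t,\ord(x))=1$, which by the very definition of $\omega$ is equivalent to $\ord(x)\mid\omega$. Hence the periodic points form the unique subgroup of $\mathcal C_n$ of order $\omega$. Restricted to this subgroup $\varphi_t$ is a bijection (since $\gcd(t,\omega)=1$); for each divisor $d\mid\omega$ the set of elements of order exactly $d$ is $\varphi_t$-invariant of size $\varphi(d)$, and each orbit has common length $\ord_d(t)$, yielding $\varphi(d)/\ord_d(t)$ cycles of length $\ord_d(t)$, as required.

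For the tree structure, the map $(x,z)\mapsto(x,0)$ from the tree at a periodic point $(0,y)\in\mathcal C_\nu\times\mathcal C_\omega$ to the tree at $(0,0)$ is a $\varphi_t$-equivariant isomorphism of rooted trees: the depth of $(x,z)$ depends only on $x$ (by invertibility of $t$ modulo $\omega$), the equation $t^k z=y$ in $\mathcal C_\omega$ admits a unique solution for each $k$, and edges $(x,z)\to (tx,tz)$ correspond to edges $(x,0)\to (tx,0)$. Since an element reaches $(0,0)$ under iteration iff its $\omega$-coordinate vanishes, the tree at $(0,0)$ identifies with the tree at $0$ of $\varphi_t$ on $\mathcal C_\nu$. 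Because every prime dividing $\nu$ also divides $t$, the element $0$ is the only periodic point in $\mathcal C_\nu$, so this tree is all of $\mathcal C_\nu$. The crucial arithmetic identity is $\mu_k:=\gcd(t^k,\nu)=\nu_1\nu_2\cdots\nu_k$ for $0\leq k\leq D$, proved by induction on $k$ from the recursion defining the $\nu_i$ together with the elementary fact that $\gcd(ts,M)=\gcd(t,M)$ whenever $\gcd(s,M)=1$. It shows that the set of vertices of depth $\leq k$ is the unique subgroup of $\mathcal C_\nu$ of order $\mu_k$, hence there are $\nu_1\cdots\nu_{k-1}(\nu_k-1)$ vertices at depth exactly $k$, for a total of $\nu$ vertices and overall depth $D$.

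The main obstacle is matching the recursive description~\eqref{TreeAssociatedEq}. The key structural observation is that for any non-root vertex $x\in\mathcal C_\nu$ the subtree hanging below $x$ has depth equal to the largest $j$ with $\mu_j\mid x$, since this is precisely the condition for $x$ to lie in the image of $\varphi_t^j$ and hence to admit a descendant at distance~$j$. Stratifying the $\nu_1-1$ nontrivial children of $0$ by this invariant produces $\nu_i-\nu_{i+1}$ children whose subtrees have depth $i-1$ for $i=1,\ldots,D-1$, together with $\nu_D-1$ children whose subtrees have depth $D-1$, matching the outer layer of~\eqref{TreeAssociatedEq}. Homogeneity of subtrees at each level --- namely, that any two non-root vertices with the same value of $\gcd(x,\nu)$ are roots of isomorphic subtrees --- follows from the multiplicative action of $(\mathbb Z/\nu\mathbb Z)^{*}$, which commutes with $\varphi_t$, preserves depth and subtree depth, and acts transitively on vertices sharing a given gcd with $\nu$. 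Iterating the same stratification inside each subtree closes the induction and produces the required isomorphism with $T_{\gcd_t(\nu)}$.
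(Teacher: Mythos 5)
Your proposal is correct, but there is no internal proof to compare it against: the paper quotes Proposition \ref{prop:cyc} from \cite{QPM17} (the underlying arguments go back to \cite{QP15}) and offers no derivation, so what you have written is a genuinely self-contained proof of a result the paper treats as a black box. Your three stages are sound, and it is worth noting how they parallel machinery the paper deploys elsewhere: the splitting $\mathcal C_n\cong \mathcal C_\nu\times \mathcal C_\omega$ with $\varphi_t$ acting coordinatewise is the same mechanism as Lemma \ref{lem:FDS} in the abelian-group section, and your tree isomorphism $(x,z)\mapsto(x,0)$, with the $\omega$-coordinate recovered as the unique solution of $t^kz=y$, is a concrete cyclic-group instance of the translation map $\tau(g)=gh^{t^{\delta(g)(s-1)}}$ of Proposition \ref{lem:centre} --- simpler in your setting because $t$ is invertible modulo $\omega$. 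The arithmetic core $\gcd(t^k,\nu)=\nu_1\cdots\nu_k$ is correct (it follows by induction from $\gcd(t^{k+1},\nu)=\gcd(t^k,\nu)\cdot\gcd\bigl(t,\nu/\gcd(t^k,\nu)\bigr)$, or prime by prime), and it does yield the vertex count $\nu$, the depth $D$, and the stratified child counts $\nu_i-\nu_{i+1}$ and $\nu_D-1$ at the root. The one compressed step is the closing induction: the action of $(\Z/\nu\Z)^*$ shows that subtrees depend only on $\gcd(x,\nu)$, but two vertices of equal subtree depth can have \emph{different} gcds with $\nu$, so homogeneity by units alone does not identify all depth-$j$ subtrees; to match \eqref{TreeAssociatedEq} you must redo the stratified count at a general vertex $x$ of subtree depth $j\ge 1$, checking that $ty=x$ has exactly $\nu_1$ solutions, of which (by the same coset--subgroup intersection count used at the root, via $\gcd(t\mu_i,\nu)=\mu_{i+1}$) exactly $\nu_i-\nu_{i+1}$ have subtree depth $i-1$ for $1\le i<j$ and $\nu_j$ have subtree depth $j-1$. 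This shows the child profile depends only on $j$, which is what actually closes the induction and identifies each such subtree with $T_{\seq{v}}^{j}$ and the whole tree with $T_{\gcd_t(\nu)}$; since that computation is the same routine one you already performed at the root, I regard the proposal as complete and correct.
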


\subsection{The power map on finite abelian groups}\label{subsection:powermap-abeliangroup}
Next we show how formula (\ref{IsomorphismFormulaEq}) can be extended to general abelian groups. Note that if \seq{u} and \seq{v} are non-increasing finite sequences which differ only possibly in the last terms that are all equal to $1$, then the corresponding elementary trees are equal (i.e. $T_{\seq{u}}=T_{\seq{v}}$). In these case we say that these sequences are {\it equivalent}. The product  $\seq{uv}$ of two non-increasing sequences is the coordinatewise product (substituting the smallest sequence by other equivalent in such a wat that both sequences have the same length).
For what follows, $\otimes$ denotes the {\em tensor product} of digraphs.

\begin{lemma}[\cite{QR19}, Lemma 3.4]\label{lem:FDS}
For any maps of finite sets $f:X\to X$ and $g:Y\to Y$ we have the following graph isomorphism:
$$\G(f\times g/ X\times Y) \cong \G(f/X)\otimes \G(g/Y),$$
where $f\times g:X\times Y\to X\times Y$ is the map $f\times g\ (x, y)=(f(x), g(y))$. 
\end{lemma}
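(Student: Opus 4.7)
The plan is to prove the isomorphism by exhibiting the identity map on vertex sets and verifying that it preserves edges. Both digraphs share the vertex set $X\times Y$: on the left, $\G(f\times g/X\times Y)$ has vertex set $X\times Y$ by definition, while on the right, the tensor product $\G(f/X)\otimes \G(g/Y)$ has vertex set $V(\G(f/X))\times V(\G(g/Y))=X\times Y$. So the only thing to check is that the edge sets coincide under the identity bijection.

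First I would recall the definition of the tensor product of digraphs: given directed graphs $A$ and $B$, the tensor product $A\otimes B$ has an edge $(a,b)\to (a',b')$ if and only if $a\to a'$ is an edge of $A$ and $b\to b'$ is an edge of $B$. Applying this to $A=\G(f/X)$ and $B=\G(g/Y)$, and using that the edge sets of these functional graphs are $\{x\to f(x):x\in X\}$ and $\{y\to g(y):y\in Y\}$ respectively, we see that the edges of $\G(f/X)\otimes \G(g/Y)$ are exactly
$$\{(x,y)\to (x',y') : x'=f(x) \text{ and } y'=g(y)\} = \{(x,y)\to (f(x),g(y)) : (x,y)\in X\times Y\}.$$

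Next I would compare this with the edge set of $\G(f\times g/X\times Y)$, which by definition consists of the edges $(x,y)\to (f\times g)(x,y)=(f(x),g(y))$ for each $(x,y)\in X\times Y$. Clearly this is the same set of ordered pairs as above, so the identity map on $X\times Y$ is a graph isomorphism.

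There is essentially no obstacle: the result is a straightforward unfolding of definitions. The one subtle point worth emphasizing is that functional graphs are deterministic (each vertex has out-degree exactly one), so the tensor product automatically inherits this property with no ``extra'' edges appearing, making the match with the functional graph of $f\times g$ exact.
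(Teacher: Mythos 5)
Your proof is correct: the paper states this lemma as an imported result from \cite{QR19} without reproducing a proof, and your argument---the identity on the common vertex set $X\times Y$, together with the observation that both edge sets are exactly $\{(x,y)\to (f(x),g(y))\}$ by the definitions of the tensor product and of functional graphs---is precisely the standard definitional unfolding that the cited source uses. Nothing is missing, and your closing remark about out-degree one correctly explains why no extraneous edges can arise in the tensor product.
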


\begin{lemma}[\cite{QR19}, Lemma 3.5]\label{lem:FDS-2}
Let $r\in \mathbb{Z}^{+}$ and $T$ be a rooted tree. The following isomorphism holds: $$\mathrm{Cyc}(r,T) \cong \mathrm{Cyc}(r)\otimes \{T\}. $$
\end{lemma}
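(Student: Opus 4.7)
The plan is to apply Lemma~\ref{lem:FDS} in order to reinterpret the tensor product on the right-hand side as the functional graph of a product map, and then to exhibit an explicit rooted-tree isomorphism with $\mathrm{Cyc}(r,T)$.

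First I would realize both factors as functional graphs. The graph $\mathrm{Cyc}(r)$ is $\G(\sigma/\mathbb Z_r)$ for the successor map $\sigma(i)=i+1\pmod r$, while $\{T\}=\mathrm{Cyc}(1,T)$ is $\G(f/V(T))$, where $f$ sends each non-root vertex of $T$ to its parent and fixes the root $\rho$. Applying Lemma~\ref{lem:FDS} then gives
\[
\mathrm{Cyc}(r)\otimes\{T\}\;\cong\;\G\bigl(\sigma\times f\,/\,\mathbb Z_r\times V(T)\bigr),
\]
so it remains to recognize the right-hand side as $\mathrm{Cyc}(r,T)$.

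Next I would analyze the dynamics of $\sigma\times f$. Every element of $\mathbb Z_r$ is $\sigma$-periodic, whereas the only $f$-periodic vertex of $T$ is the root. Hence the periodic points of $\sigma\times f$ form the set $\{(i,\rho):i\in\mathbb Z_r\}$, on which the map acts as the single $r$-cycle $(0,\rho)\to(1,\rho)\to\cdots\to(r-1,\rho)\to(0,\rho)$, matching the cyclic part of $\mathrm{Cyc}(r,T)$. For the trees hanging off this cycle, I would fix $i\in\mathbb Z_r$ and consider the map $\psi_i\colon V(T)\to\mathbb Z_r\times V(T)$ defined by $\psi_i(v)=(i-\mathrm{depth}(v)\bmod r,\,v)$. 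Since $\mathrm{depth}(\rho)=0$ and $\mathrm{depth}(\mathrm{parent}(v))=\mathrm{depth}(v)-1$ for any non-root $v$, the map $\psi_i$ sends $\rho$ to $(i,\rho)$ and carries each edge $v\to\mathrm{parent}(v)$ of $T$ to an edge of $\G(\sigma\times f/\mathbb Z_r\times V(T))$; thus $\psi_i$ embeds $T$ as a rooted subtree at $(i,\rho)$. A short bookkeeping check shows that each non-periodic vertex $(j,v)$ lies in the image of exactly one $\psi_i$, namely the one with $i\equiv j+\mathrm{depth}(v)\pmod r$, so the $r$ trees partition the non-cycle vertices as required.

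The main obstacle is purely combinatorial bookkeeping: ensuring the index shifts are consistent across the $r$ copies of $T$, and verifying that the maps $\psi_i$ collectively cover each non-cycle vertex of the tensor product exactly once while preserving the rooted-tree edge structure. Beyond this verification the argument is direct, as the dynamics of the product map $\sigma\times f$ naturally splits into an $r$-cycle at depth zero with a full copy of $T$ growing off each cyclic vertex.
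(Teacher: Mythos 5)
Your proof is correct; note that the paper itself gives no argument for this lemma, quoting it from \cite{QR19} (Lemma 3.5), so your write-up supplies a proof the paper omits. Your route is the natural one and mirrors the technique the paper does use nearby, in the proof of Lemma~\ref{lem:CyclesTensor}: realize each tensor factor as a functional graph, here $\mathrm{Cyc}(r)=\mathcal G(\sigma/\mathbb{Z}_r)$ with $\sigma(i)=i+1$ and $\{T\}=\mathrm{Cyc}(1,T)=\mathcal G(f/V(T))$ with $f$ the parent map fixing the root $\rho$, then invoke Lemma~\ref{lem:FDS} to replace the tensor product by $\mathcal G(\sigma\times f/\mathbb{Z}_r\times V(T))$ and analyze the product dynamics directly. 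The two checks that carry the argument are both present and correct: the periodic points of $\sigma\times f$ are exactly $\{(i,\rho): i\in\mathbb{Z}_r\}$, forming a single $r$-cycle; and the first-hitting computation $(\sigma\times f)^{(d)}(j,v)=(j+d \bmod r,\,\rho)$ for $d=\mathrm{depth}(v)$ shows that the tree hanging at $(i,\rho)$ consists precisely of the vertices $(j,v)$ with $j\equiv i-\mathrm{depth}(v)\pmod r$, i.e.\ the image of your map $\psi_i$. Since $\psi_i$ is injective (the second coordinate is the identity), sends the root to $(i,\rho)$, and carries each child-to-parent edge $v\to f(v)$ to the edge $(i-d,v)\to(i-d+1,f(v))$, it is a rooted-tree isomorphism onto that hanging tree, and the partition of the non-periodic vertices among the $r$ maps $\psi_i$ follows from the same congruence. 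One could shorten the "bookkeeping" by observing that the first-hitting formula alone simultaneously identifies each hanging tree with the image of the corresponding $\psi_i$ and yields the partition, but as written the argument is complete.
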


\begin{lemma}[\cite{QR19}, Prop. 2.10]\label{lem:tree-1}
For any non-decreasing sequences \seq{u}, \seq{v} we have $\{T_\seq{u}\}\otimes \{T_\seq{v}\}=\{T_{\seq{uv}}\}$.
\end{lemma}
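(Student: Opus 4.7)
My plan is to leverage the description of the power map on cyclic groups (Proposition~\ref{prop:cyc}) together with Lemma~\ref{lem:FDS} to realize the tensor product on the left-hand side as the functional graph of a power map on a product of cyclic groups.

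First, I would realize $\{T_\seq{u}\}$ as a connected component of $\G(\varphi_t/\mathcal{C}_{n_1})$ for a suitable integer $t$ and a positive integer $n_1$ all of whose prime factors divide $t$, and analogously for $\{T_\seq{v}\}$ with an $n_2$ coprime to $n_1$. Given a multiplicative chain $\seq{w}$, such a realization amounts to choosing appropriate prime-power exponents in $n_i$ and $t$ so that $\gcd_t(n_i) = \seq{w}$; the coprimality of $n_1$ and $n_2$ can always be arranged by drawing their prime factors from disjoint sets. By Proposition~\ref{prop:cyc}, under these choices the connected component of the identity of $\G(\varphi_t/\mathcal{C}_{n_i})$ equals $\{T_\seq{u}\}$ (resp.\ $\{T_\seq{v}\}$), since $\omega=1$ in each case.

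Next, the Chinese Remainder Theorem gives $\mathcal{C}_{n_1 n_2} \cong \mathcal{C}_{n_1} \times \mathcal{C}_{n_2}$, and under this isomorphism the power map $\varphi_t$ acts coordinatewise. Applying Lemma~\ref{lem:FDS} then produces the graph isomorphism
\[
\G(\varphi_t/\mathcal{C}_{n_1 n_2}) \cong \G(\varphi_t/\mathcal{C}_{n_1}) \otimes \G(\varphi_t/\mathcal{C}_{n_2}),
\]
whose identity components give $\{T_{\gcd_t(n_1 n_2)}\}$ and $\{T_\seq{u}\} \otimes \{T_\seq{v}\}$ respectively. It then remains to check that $\gcd_t(n_1 n_2) = \seq{uv}$, which follows from the multiplicativity of $\gcd$ on coprime arguments: at each step of the iterated-gcd construction, the $n_1$-part and the $n_2$-part decouple, so the iterated gcd of $n_1 n_2$ is the coordinatewise product of the two iterated gcds (padded with trailing $1$'s when the lengths differ).

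The main obstacle is the realization step for general multiplicative chains. A single prime $p$ appearing in $n_i$ with $v_p(t) = a$ contributes only a $p$-adic profile of the form $(a,a,\ldots,a,b)$ to $\gcd_t(n_i)$, so reproducing more intricate profiles like $(p^3,p^2,p)$ requires combining several primes with carefully prescribed valuations via CRT, or possibly working on non-cyclic abelian $p$-groups and invoking Lemma~\ref{lem:FDS} iteratively. For non-increasing sequences that cannot arise as any $\gcd_t(n)$, one has to fall back on a purely combinatorial argument: induction on $\max(|\seq{u}|,|\seq{v}|)$ together with the recursive formula~(\ref{TreeAssociatedEq}), showing that a pair $(x,y)$ sits at depth $\max(\delta(x),\delta(y))$ in the tensor product and matching the resulting multiplicities of subtrees of each depth against those of $T_{\seq{uv}}$ level by level.
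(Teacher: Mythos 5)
You have correctly diagnosed the weak point of your own argument, but the fallback you offer does not close it, so the proposal as written has a genuine gap. First, note that the paper itself contains no proof of this lemma: it is imported from \cite{QR19} (Prop.~2.10), where it is established by exactly the kind of direct combinatorial induction on the recursive definition \eqref{TreeAssociatedEq} that you relegate to your final sentence. Your arithmetic realization is sound as far as it goes, but it provably cannot reach the general case: for a prime $p$ with $v_p(t)=a$ and $v_p(n)=M$, the $p$-adic profile of $\gcd_t(n)$ is necessarily of the form $(a,a,\ldots,a,r)$ with $0\le r<a$, so a sequence such as $\seq{u}=(8,4,2)$, whose $2$-adic profile $(3,2,1)$ is strictly decreasing, is not $\gcd_t(n)$ for any $t$ and $n$. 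No re-encoding rescues this, because the sequence is an isomorphism invariant of the tree (the number of vertices of $T_{\seq{v}}$ at depth at most $k$ is $\nu_1\cdots\nu_k$, so $\seq{v}$ is recoverable from $T_{\seq{v}}$ up to trailing $1$'s). Moreover, your suggested workaround via non-cyclic abelian $p$-groups is circular: the identity tree of $\varphi_t$ on a product of cyclic groups is, by Lemma~\ref{lem:FDS}, a tensor product of elementary trees, and identifying that tensor product as the elementary tree of the product sequence is precisely the statement being proved.

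Since cases like $(8,4,2)$ fall outside the realizable range, the burden of proof lands entirely on your one-sentence combinatorial fallback, and that sentence is a plan rather than an argument. The easy half is the depth formula: in the tensor product of two rooted trees with loops at the roots, the pair $(x,y)$ indeed has preperiod $\max(\delta(x),\delta(y))$, and the product is connected with a loop at the root pair, so it equals $\{T\}$ for some rooted tree $T$. The hard half is showing $T\cong T_{\seq{uv}}$: ``matching the multiplicities of subtrees of each depth level by level'' presupposes that each subtree hanging off the root of the product is itself an elementary tree associated with the appropriate product of truncated subsequences, and establishing that (together with homogeneity of the product tree) is the entire content of the lemma. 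Carrying out this induction carefully --- tracking how the forests $G^k_{\seq{u}}$ and $G^k_{\seq{v}}$ combine under the tensor product and verifying the resulting multiplicities $(\nu_i\mu_i-\nu_{i+1}\mu_{i+1})$-type bookkeeping --- is what \cite{QR19} actually does, and it is the part your proposal leaves unproven.
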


\begin{lemma}\label{lem:CyclesTensor}
Let $r_1,\ldots, r_k$ be positive integers. We have $$\bigotimes_{i=1}^{k}\mathrm{Cyc}(r_i) = \frac{r_1r_2\cdots r_k}{\lcm(r_1,r_2,\ldots,r_k)} \times \mathrm{Cyc}\left(\lcm(r_1,r_2,\ldots,r_k)\right). $$
\end{lemma}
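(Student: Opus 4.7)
The plan is to realize each $\mathrm{Cyc}(r_i)$ as a functional graph and apply Lemma~\ref{lem:FDS} to turn the tensor product of graphs into a functional graph on a product set, where the orbit structure can be read off directly.

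First, I would observe that $\mathrm{Cyc}(r_i)$ is isomorphic to the functional graph $\G(s_i/\Z_{r_i})$ of the shift map $s_i: \Z_{r_i}\to\Z_{r_i}$ defined by $s_i(x)=x+1$. Applying Lemma~\ref{lem:FDS} inductively (or in one shot, since the lemma extends trivially to $k$ factors by associativity of $\otimes$), we get
\[
\bigotimes_{i=1}^{k}\mathrm{Cyc}(r_i) \;\cong\; \G\!\left(s_1\times\cdots\times s_k\,\Big/\,\prod_{i=1}^{k}\Z_{r_i}\right),
\]
where $F:=s_1\times\cdots\times s_k$ acts by $F(x_1,\ldots,x_k)=(x_1+1,\ldots,x_k+1)$.

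Next, I would analyze this functional graph directly. Since each $s_i$ is a bijection, so is $F$, hence every vertex is periodic and the graph is a disjoint union of cycles. The $F$-orbit of a point $(x_1,\ldots,x_k)$ has length equal to the smallest positive integer $n$ with $F^{(n)}(x_1,\ldots,x_k)=(x_1,\ldots,x_k)$, i.e.\ the smallest $n>0$ with $n\equiv 0\pmod{r_i}$ for all $i$. This is exactly $L:=\lcm(r_1,\ldots,r_k)$, independent of the starting point. Therefore every connected component of $\G(F/\prod\Z_{r_i})$ is an $L$-cycle.

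Finally, since the total number of vertices is $r_1r_2\cdots r_k$ and every orbit has size $L$, the number of cycles is $\frac{r_1 r_2\cdots r_k}{L}$. Combining,
\[
\bigotimes_{i=1}^{k}\mathrm{Cyc}(r_i) \;\cong\; \frac{r_1r_2\cdots r_k}{\lcm(r_1,\ldots,r_k)}\times \mathrm{Cyc}\bigl(\lcm(r_1,\ldots,r_k)\bigr),
\]
as required. There is no real obstacle here: the only point requiring any care is confirming that the bijectivity of each $s_i$ (so that every vertex is periodic) combined with the congruence characterization of orbit length gives a uniform cycle length $L$ across all components.
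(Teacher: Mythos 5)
Your proof is correct and follows essentially the same route as the paper: both realize each $\mathrm{Cyc}(r_i)$ as the functional graph of the shift map $s_i(x)=x+1$ on $\Z_{r_i}$, invoke Lemma~\ref{lem:FDS} to pass to the product map, and then identify every component as a cycle of length $\lcm(r_1,\ldots,r_k)$. The only cosmetic difference is that the paper phrases the orbit computation via cosets of the subgroup $H=\langle(1,\ldots,1)\rangle$, while you compute the orbit length directly from the congruences $n\equiv 0\pmod{r_i}$ --- these are the same observation.
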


\begin{proof}
Consider the maps $s_i:\Z_{r_i}\to \Z_{r_i}$, $s_i(x)=x+1$. Then $\mathrm{Cyc}(r_i)=\G(s_i / \Z_{r_i})$ and $\bigotimes_{i=1}^{k}\mathrm{Cyc}(r_i)=\G(s_1\times \cdots \times s_k, \Z_{r_1}\times \cdots \times \Z_{r_k})$ which is a union of disjoint cycles. Each one of this cycles is in correspondence with the cosets of $H:=\langle (1,1,\ldots,1) \rangle$ in $\Z_{r_1}\times \cdots \times \Z_{r_k}$ and each one of these cosets has $|H|=\lcm(r_1,r_2,\ldots,r_k)$ elements. 
\end{proof}

Now consider any (multiplicative) abelian group $G$ and the map $\varphi_t: G \to G$, $\varphi_t(g):=g^t$.  By the fundamental theorem of finite abelian groups, there is an isomorphism $\eta:G\to \mathcal C_{r_1}\times \cdots \times \mathcal C_{r_k}$. Since $\eta\circ \varphi_t = \varphi_t\circ \eta$ we have that $\eta$ induces an isomorphism between the functional graph of $\varphi_t$ over $G$ and over the direct product of cyclic groups, so we can assume $G=\mathcal C_{r_1}\times \cdots \times \mathcal C_{r_k}$. If $\seq{d}$ and $\seq{\omega}$ are $k$-term sequences, $\seq{d}\mid \seq{\omega}$ means $d_i\mid \omega_i$ for every $1\leq i \leq k$.

\begin{proposition}
Let $G$ be an abelian group and write $G=\mathcal C_{r_1}\times \cdots \times \mathcal C_{r_k}$, where $\mathcal C_r$ denotes a cyclic group of order $r$. Let $r_i=\nu_i \omega_i$ where $\omega_i$ is the greatest divisor of $r_i$ that is relatively prime with $t$, $\seq{\nu}:=(\nu_1,\ldots,\nu_k)$, $\seq{\omega}:=(\omega_1,\ldots, \omega_k)$ and $\gcd_t(\nu):=\prod_{i=1}^{k}\gcd_t(\nu_i)$. For $\seq{d}=(d_1,\ldots, d_k)$ define $\varphi(\seq{d}):=\prod_{i=1}^{k}\varphi(d_i)$ and $\ord_{\seq{d}}(t)=\lcm\{\ord_{d_i}(t): 1\leq i \leq k \}$. Then G has exactly $\prod_{i=1}^{k}\omega_i$ elements that are $\varphi_t$-periodic and the following isomorphism formula holds:
\begin{equation}\label{IsomorphismFormulaEq2}
\G(\varphi_t/G)= \bigoplus_{\seq{d}\mid\seq{\omega}} \frac{\varphi(\seq{d})}{\ord_{\seq{d}}(t)} \times \cyc\left( \ord_{\seq{d}}(t), T_{\gcd_t(\seq{\nu})}\right) 
\end{equation}
\end{proposition}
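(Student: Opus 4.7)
The plan is to reduce the abelian case to the cyclic case treated in Proposition~\ref{prop:cyc} and to assemble the pieces via the tensor product identities in Lemmas~\ref{lem:FDS}, \ref{lem:FDS-2}, \ref{lem:tree-1}, and \ref{lem:CyclesTensor}. As remarked just before the statement, we may assume $G=\mathcal C_{r_1}\times\cdots\times\mathcal C_{r_k}$ and observe that $\varphi_t$ on $G$ coincides with the coordinate-wise product $\varphi_t\times\cdots\times\varphi_t$ of the power maps on the factors. The periodic count is immediate: a tuple is $\varphi_t$-periodic if and only if each coordinate is $\varphi_t$-periodic in $\mathcal C_{r_i}$, and by Proposition~\ref{prop:cyc} there are exactly $\omega_i$ such coordinates, so $G$ has $\prod_i\omega_i$ periodic elements.

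For the graph, I would first apply Lemma~\ref{lem:FDS} to obtain
\[
\G(\varphi_t/G)\;\cong\;\bigotimes_{i=1}^{k}\G(\varphi_t/\mathcal C_{r_i}),
\]
then plug in Proposition~\ref{prop:cyc} for each tensor factor. Using distributivity of $\otimes$ over $\oplus$ (and pulling the integer multiplicities out as scalar factors), the tensor product expands into a disjoint union indexed by tuples $\seq{d}=(d_1,\ldots,d_k)$ with $d_i\mid\omega_i$, with coefficient $\prod_i\varphi(d_i)/\ord_{d_i}(t)=\varphi(\seq{d})/\prod_i\ord_{d_i}(t)$ and term $\bigotimes_{i=1}^k \cyc(\ord_{d_i}(t),T_{\gcd_t(\nu_i)})$.

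Next I would split each factor $\cyc(\ord_{d_i}(t),T_{\gcd_t(\nu_i)})$ as $\cyc(\ord_{d_i}(t))\otimes\{T_{\gcd_t(\nu_i)}\}$ by Lemma~\ref{lem:FDS-2}, and regroup the two kinds of tensor factors. Lemma~\ref{lem:tree-1} (applied inductively in $k$) collapses the tree part to the single loop-tree $\{T_{\gcd_t(\seq{\nu})}\}$, since $\gcd_t(\seq{\nu})=\prod_i\gcd_t(\nu_i)$ by definition. Lemma~\ref{lem:CyclesTensor} collapses the cycle part to
\[
\bigotimes_{i=1}^{k}\cyc(\ord_{d_i}(t))\;=\;\frac{\prod_i\ord_{d_i}(t)}{\ord_{\seq{d}}(t)}\times\cyc\bigl(\ord_{\seq{d}}(t)\bigr),
\]
and then one more application of Lemma~\ref{lem:FDS-2} in the reverse direction re-packages $\cyc(\ord_{\seq{d}}(t))\otimes\{T_{\gcd_t(\seq{\nu})}\}$ as $\cyc(\ord_{\seq{d}}(t),T_{\gcd_t(\seq{\nu})})$. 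Multiplying the scalar coefficients, the factor $\prod_i\ord_{d_i}(t)$ cancels, leaving exactly $\varphi(\seq{d})/\ord_{\seq{d}}(t)$.

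There is no conceptual obstacle here: the argument is pure bookkeeping built on Proposition~\ref{prop:cyc} and the four tensor lemmas. The only thing one must be careful with is tracking the scalar multiplicities through the distributive law and verifying that the exponents $\nu_i,\omega_i$ assembled from the individual cyclic factors match the global quantities $\gcd_t(\seq{\nu})$ and $\seq{\omega}$ defined in the statement; once those identifications are in place the proof assembles in a single chain of equalities.
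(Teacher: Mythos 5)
Your proposal is correct and follows essentially the same route as the paper's own proof: reduce to the cyclic case via Lemma~\ref{lem:FDS}, expand with Proposition~\ref{prop:cyc}, and collapse the cycle and tree tensor factors using Lemmas~\ref{lem:FDS-2}, \ref{lem:tree-1} and \ref{lem:CyclesTensor}, with the same cancellation of $\prod_i\ord_{d_i}(t)$ in the multiplicities. The only (harmless) deviation is the periodic-point count, which you argue directly coordinatewise while the paper reads it off from the isomorphism formula by summing $\varphi(\seq{d})$ over $\seq{d}\mid\seq{\omega}$.
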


\begin{proof}
If formula (\ref{IsomorphismFormulaEq2}) holds, the number of $\varphi_t$-periodic points in $G$ is $\sum_{\seq{d}\mid \seq{\omega}}\varphi(\seq{d})= \prod_{i=1}^{k}\left( \sum_{d_i\mid \omega_i}\varphi(d_i)\right)$ $= \prod_{i=1}^{k}\omega_i$. To prove the formula we use Proposition \ref{prop:cyc} and Lemma \ref{lem:FDS} to obtain: $$\G(\varphi_t/G)= \bigotimes_{i=1}^{k} \G(\varphi_t/\mathcal C_{r_i}) = \bigotimes_{i=1}^{k}\bigoplus_{d_i\mid \omega_i}
\left( \frac{\varphi(d_i)}{\ord_{d_i}(t)} \times \cyc\left(\ord_{d_i}(t),T_{\gcd_{t}(\nu_i)}\right) \right).$$
By the commutativity of the tensor product and Lemma \ref{lem:FDS-2} we have:
\begin{eqnarray*}
\G(\varphi_t/G) &=& \bigoplus_{\seq{d}\mid\seq{\omega}}\bigotimes_{i=1}^{k}\left( \frac{\varphi(d_i)}{\ord_{d_i}(t)}\times \cyc\left( \ord_{d_i}(t),T_{\gcd_{t}(\nu_i)}  \right)\right)  \\ 
&=& \bigoplus_{\seq{d}\mid\seq{\omega}}   \left(\prod_{i=1}^{k} \frac{\varphi(d_i)}{\ord_{d_i}(t)} \right)\times  \left(\bigotimes_{i=1}^{k}\cyc\left( \ord_{d_i}(t)\right) \right)  \otimes \left( \bigotimes_{i=1}^{k}   T_{\gcd_{t}(\nu_i)}  \right)
\end{eqnarray*}
Finally, using Lemmas \ref{lem:tree-1}, \ref{lem:CyclesTensor} and again Lemma \ref{lem:FDS-2} we obtain the desired formula.

\end{proof}

\begin{example}    
Consider the power map $\varphi_{14}$ over the group $\Z_{91}^{*}\cong \C_{6}\times \C_{12}$ of invertible elements modulo $91$. Figure~\ref{fig:abelian} shows the functional graph of this map. In this case $k=2$, $r_1=6$, $r_2=12$, $\seq{\nu}=(2,4)$, $\seq{\omega}=(3,3)$ and $\gcd_{14}(\seq{\nu})= \gcd_{14}(2)\cdot \gcd_{14}(4)= (2)\cdot (2,2) = (4,2)$. Then,
\begin{eqnarray*}
\mathcal{G}(\varphi_{14}/G) &=& \bigoplus_{d_1\mid 3 \atop d_2\mid 3} \frac{\varphi(d_1)\varphi(d_2)}{\lcm\{o_{d_1}(14),o_{d_2}(14)\}}\times \cyc\left(\lcm\{o_{d_1}(14),o_{d_2}(14)\},\mathcal{T}_{(4,2)}\right)\\
&=& \{\mathcal{T}_{(4,2)}\} \oplus 4\times\cyc\left(2, \mathcal{T}_{(4,2)} \right).
\end{eqnarray*}
\begin{figure}[H]
\begin{center}
\begin{tikzpicture}
  [scale=1, place/.style={circle,draw=black,thick,fill=red,
                 inner sep=0pt,minimum size=1mm}]
\draw (0,0.25) circle (2.5mm);
  \node (1) at ( 0,0) [place] {};
  \node (2) at ( 0.5,-0.5) [place] {};
  \node (3) at ( 0, -0.5) [place] {};
  \node (4) at ( -0.5,-0.5) [place] {};
  \node (5) at ( 0.5,-1) [place] {};
  \node (6) at ( 0.25,-1) [place] {};
  \node (7) at ( -0.25,-1) [place] {};
  \node (8) at ( -0.5,-1) [place] {};
\foreach \x/\y in {1/2, 1/3, 1/4, 3/5, 3/6, 3/7, 3/8}
	\draw (\x) to (\y);

\begin{scope}[xshift=2cm, yshift=0cm]
  \node (1) at ( 0,0.5) [place] {};
  \node (2) at ( 0.5,1) [place] {};
  \node (3) at ( 0, 1) [place] {};
  \node (4) at ( -0.5,1) [place] {};
  \node (5) at ( 0.5,1.5) [place] {};
  \node (6) at ( 0.25,1.5) [place] {};
  \node (7) at ( -0.25,1.5) [place] {};
  \node (8) at ( -0.5,1.5) [place] {};
  \draw (0,0.25) circle (2.5mm);
  \node (11) at ( 0,0) [place] {};
  \node (12) at ( 0.5,-0.5) [place] {};
  \node (13) at ( 0, -0.5) [place] {};
  \node (14) at ( -0.5,-0.5) [place] {};
  \node (15) at ( 0.5,-1) [place] {};
  \node (16) at ( 0.25,-1) [place] {};
  \node (17) at ( -0.25,-1) [place] {};
  \node (18) at ( -0.5,-1) [place] {};
\foreach \x/\y in {11/12, 11/13, 11/14, 13/15, 13/16, 13/17, 13/18}
	\draw (\x) to (\y);
\foreach \x/\y in {1/2, 1/3, 1/4, 3/5, 3/6, 3/7, 3/8}
	\draw (\x) to (\y);
\end{scope}

\begin{scope}[xshift=4cm, yshift=0cm]
  \node (1) at ( 0,0.5) [place] {};
  \node (2) at ( 0.5,1) [place] {};
  \node (3) at ( 0, 1) [place] {};
  \node (4) at ( -0.5,1) [place] {};
  \node (5) at ( 0.5,1.5) [place] {};
  \node (6) at ( 0.25,1.5) [place] {};
  \node (7) at ( -0.25,1.5) [place] {};
  \node (8) at ( -0.5,1.5) [place] {};
  \draw (0,0.25) circle (2.5mm);
  \node (11) at ( 0,0) [place] {};
  \node (12) at ( 0.5,-0.5) [place] {};
  \node (13) at ( 0, -0.5) [place] {};
  \node (14) at ( -0.5,-0.5) [place] {};
  \node (15) at ( 0.5,-1) [place] {};
  \node (16) at ( 0.25,-1) [place] {};
  \node (17) at ( -0.25,-1) [place] {};
  \node (18) at ( -0.5,-1) [place] {};
\foreach \x/\y in {11/12, 11/13, 11/14, 13/15, 13/16, 13/17, 13/18}
	\draw (\x) to (\y);
\foreach \x/\y in {1/2, 1/3, 1/4, 3/5, 3/6, 3/7, 3/8}
	\draw (\x) to (\y);
\end{scope}

\begin{scope}[xshift=6cm, yshift=0cm]
  \node (1) at ( 0,0.5) [place] {};
  \node (2) at ( 0.5,1) [place] {};
  \node (3) at ( 0, 1) [place] {};
  \node (4) at ( -0.5,1) [place] {};
  \node (5) at ( 0.5,1.5) [place] {};
  \node (6) at ( 0.25,1.5) [place] {};
  \node (7) at ( -0.25,1.5) [place] {};
  \node (8) at ( -0.5,1.5) [place] {};
  \draw (0,0.25) circle (2.5mm);
  \node (11) at ( 0,0) [place] {};
  \node (12) at ( 0.5,-0.5) [place] {};
  \node (13) at ( 0, -0.5) [place] {};
  \node (14) at ( -0.5,-0.5) [place] {};
  \node (15) at ( 0.5,-1) [place] {};
  \node (16) at ( 0.25,-1) [place] {};
  \node (17) at ( -0.25,-1) [place] {};
  \node (18) at ( -0.5,-1) [place] {};
\foreach \x/\y in {11/12, 11/13, 11/14, 13/15, 13/16, 13/17, 13/18}
	\draw (\x) to (\y);
\foreach \x/\y in {1/2, 1/3, 1/4, 3/5, 3/6, 3/7, 3/8}
	\draw (\x) to (\y);
\end{scope}

\begin{scope}[xshift=8cm, yshift=0cm]
  \node (1) at ( 0,0.5) [place] {};
  \node (2) at ( 0.5,1) [place] {};
  \node (3) at ( 0, 1) [place] {};
  \node (4) at ( -0.5,1) [place] {};
  \node (5) at ( 0.5,1.5) [place] {};
  \node (6) at ( 0.25,1.5) [place] {};
  \node (7) at ( -0.25,1.5) [place] {};
  \node (8) at ( -0.5,1.5) [place] {};
  \draw (0,0.25) circle (2.5mm);
  \node (11) at ( 0,0) [place] {};
  \node (12) at ( 0.5,-0.5) [place] {};
  \node (13) at ( 0, -0.5) [place] {};
  \node (14) at ( -0.5,-0.5) [place] {};
  \node (15) at ( 0.5,-1) [place] {};
  \node (16) at ( 0.25,-1) [place] {};
  \node (17) at ( -0.25,-1) [place] {};
  \node (18) at ( -0.5,-1) [place] {};
\foreach \x/\y in {11/12, 11/13, 11/14, 13/15, 13/16, 13/17, 13/18}
	\draw (\x) to (\y);
\foreach \x/\y in {1/2, 1/3, 1/4, 3/5, 3/6, 3/7, 3/8}
	\draw (\x) to (\y);
\end{scope}

\end{tikzpicture}
\end{center}

\caption{The graph $\{\mathcal{T}_{(4,2)}\} \oplus 4 \times \operatorname{Cyc}\left( 2,\mathcal{T}_{(4,2)}  \right)$.}
\label{fig:abelian}
\end{figure}

%%%%%%%%%%%%%%%%End Figure: Abelian case%%%%%%%%%%%%%%%%%%

\end{example}

\subsection{Some results on the power map over finite groups}
%Let $G$ be a finite group, let $t$ be a positive integer and let $\varphi_t:G\to G$ be the map $\varphi_t(g)=g^t$. In this section we discuss some preliminary results on the digraph $\mathcal G(\varphi_t/G)$. For simplicity, we usually see the nodes of $\mathcal G(\varphi_t/G)$ as elements of $G$ and vice versa.
% Isto j\'a foi mencionado na Introdu\c{c}\~ao por isso estou cortando.

Let $G$ be a finite group and let $d$ be a divisor of $|G|$. We denote by $G[d]$ the group generated by the elements $g\in G$ such that $g^d=1$. In the next lemma we consider the factorization $|G|=\nu \omega$ where $\omega$ is the greatest divisor of $|G|$ that is relatively prime with $t$. We note that if $g\in G$ verifies $g^{t^n}=1$ then the order of $g$ is a divisor of $\nu$ since $\gcd(t^n,|G|)$ divides $\nu$. In particular, the tree attached to $1$ in $\mathcal G(\varphi_t/G)$, is contained in $G[\nu]$.

\begin{definition}
An element $g\in G$ is $\varphi_t$-periodic if there exists a positive integer $n$ such that $g^{t^n}=\varphi_t^{(n)}(g)=g$. Moreover, the {\em preperiod} of $g\in G$ under $\varphi_t$ is the least integer $i\ge 0$ such that  $\varphi_t^{(i)}(g)$ is $\varphi_t$-periodic.
\end{definition}

\mqureshi{
\begin{definition}
The {\em central tree} of $\mathcal{G}(\varphi_t/G)$ is the rooted tree attached to neutral element $1\in G$ and it is denoted by $\mathcal{T}_{t}(G)$.
\end{definition}}

\begin{proposition}\label{lem:centre}
Let $G$ be a finite group with identity $1$ and consider the power map $\varphi_t:G\to G$ with $g\mapsto g^t$. Let $|G|=\nu \omega$ where $\omega$ is the greatest divisor of $|G|$ that is relatively prime with $t$. If $h\in G$ is $\varphi_t$-periodic and $h$ is in the centralizer of $G[\nu]$, then the rooted tree attached to $h$ in $\mathcal G(\varphi_t/G)$ is isomorphic to \mqureshi{the central tree $\mathcal{T}_{t}(G)$}.
\end{proposition}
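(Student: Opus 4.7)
The approach is to describe the central tree $\T_t(G)$ explicitly and then construct an explicit rooted-tree isomorphism $\psi\colon \T_t(G)\to T_h$, where $T_h$ denotes the rooted tree attached to $h$ in $\G(\varphi_t/G)$. I would first note that $g\in \T_t(G)$ iff $g^{t^n}=1$ for some $n$, and, using that every prime factor of $\nu$ divides $t$ while $\gcd(\omega,t)=1$, this reduces to $g^{\nu}=1$. In particular $\T_t(G)\subseteq G[\nu]$, so the centralizing hypothesis gives that $h$ (and every power of $h$) commutes with every vertex of $\T_t(G)$. Let $c$ be the $\varphi_t$-cycle length of $h$, write $h_i:=\varphi_t^{\,i}(h)$ for $i\in\Z/c\Z$ (so $h_0=h$), and let $\delta(g)$ denote the depth of $g\in \T_t(G)$. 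Define
\[
\psi(g):=h_{-\delta(g)}\,g,
\]
so that $\psi(1)=h$.

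Since $h_{-\delta(g)}$ commutes with $g$, one has the key identity $\psi(g)^{t^k}=h_{-\delta(g)+k}\,g^{t^k}$ for every $k\ge 0$. From this I read off that $\psi$ commutes with $\varphi_t$, because $\psi(g^t)=h_{-\delta(g)+1}\,g^t=\psi(g)^t$; and that $\psi(g)^{t^{\delta(g)}}=h_0\cdot 1=h$. To check that the preperiod of $\psi(g)$ equals $\delta(g)$ (so that $\psi(g)$ really lies in $T_h$ rather than in the tree at another point of the cycle of $h$), I would argue that for $0\le k<\delta(g)$ the commuting elements $h_{-\delta(g)+k}$ and $g^{t^k}$ have orders dividing $\omega$ and $\nu$ respectively, with the second strictly greater than $1$; coprimality of $\nu$ and $\omega$ then makes the order of the product divisible by a prime factor of $t$, whereas $\varphi_t$-periodic elements must have order coprime to $t$, so $\psi(g)^{t^k}$ is non-periodic.

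For injectivity, $\psi(g_1)=\psi(g_2)$ yields $g_2g_1^{-1}=h_{-\delta(g_2)}^{-1}h_{-\delta(g_1)}\in\langle h\rangle$, which commutes with $g_1\in G[\nu]$ and therefore forces $g_1g_2=g_2g_1$; then $(g_2g_1^{-1})^{t^k}=g_2^{t^k}g_1^{-t^k}=1$ for $k$ large, so the corresponding power of $h$ has order dividing both a power of $t$ and $\omega$, hence is trivial, and $g_1=g_2$. For surjectivity, the decisive observation is that any $y\in T_h$ of depth $m$ satisfies $y^{t^m}=h$, exhibiting $h$ (and hence each $h_i$) as a power of $y$; thus $h$ commutes with $y$ even when $y\notin G[\nu]$. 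Setting $g:=h_{-m}^{-1}y$ and exploiting this commutation gives $g^{t^m}=h_0^{-1}h=1$ (so $g\in \T_t(G)$), together with $g^{t^k}\ne 1$ for $k<m$ (otherwise $y^{t^k}=h_{-m+k}$ would be periodic, contradicting the depth of $y$). Hence $\delta(g)=m$ and $\psi(g)=y$.

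The main obstacle I anticipate is precisely this surjectivity step: the hypothesis $h\in C_G(G[\nu])$ furnishes commutation only with vertices of the central tree, not with vertices of $T_h$, and in a non-abelian group one really needs some form of commutation in order to transport $y$ back into $\T_t(G)$ by left-multiplication by $h_{-m}^{-1}$. The observation that $h=y^{t^m}$ is itself a power of $y$—and therefore automatically commutes with $y$—is what unlocks the non-abelian case and makes the construction of the inverse map as clean as in the abelian setting.
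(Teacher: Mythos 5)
Your proposal is correct, and at its core it is the paper's own construction in different clothing: your map $\psi(g)=h_{-\delta(g)}\,g$ coincides with the paper's $\tau(g)=g\,h^{t^{\delta(g)(s-1)}}$, since $\delta(g)(s-1)\equiv -\delta(g)\pmod{s}$ and the two factors commute ($g\in G[\nu]$ and $h$ centralizes $G[\nu]$). Where you genuinely diverge is in the verification. The paper establishes well-definedness and preperiod preservation by induction on the preperiod and then disposes of the inverse map $g\mapsto g\,h^{-t^{\delta(g)(s-1)}}$ with the phrase ``by the same reasoning''; you instead argue directly, splitting each $\psi(g)^{t^k}=h_{-\delta(g)+k}\,g^{t^k}$ into commuting factors of coprime orders (the $\omega$-part from $\langle h\rangle$, the $\nu$-part from $\mathcal{T}_t(G)$) and using that $\varphi_t$-periodic elements have order coprime to $t$. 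This order-theoretic argument buys you a non-inductive proof and, more importantly, forces you to confront the one delicate point that the paper's ``same reasoning'' leaves implicit: the inverse direction requires $h$ to commute with an arbitrary vertex $y\in\mathcal{T}_h$, and this does \emph{not} follow from $h\in C_G(G[\nu])$, since $y$ need not lie in $G[\nu]$. Your observation that $y^{t^m}=h$ exhibits $h$ (hence every $h_i$) as a power of $y$ supplies exactly the missing commutation, and it is the same fact that silently underlies the paper's treatment of $\tau^*$. One microscopic caveat: your identity $\psi(g^t)=\psi(g)^t$ uses $\delta(g^t)=\delta(g)-1$, which holds only for non-root vertices $g\neq 1$, but that is all a rooted-tree isomorphism requires, so nothing is lost.
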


\begin{proof}
Suppose that $s$ is the least positive integer such that $\varphi_t^{(s)}(h)=h$, i.e., $h^{t^s}=h$. Let $\mathcal T_h$ be the rooted tree attached to $h$ in $\mathcal G(\varphi_t/G)$. Set $\tau: \mqureshi{\mathcal{T}_{t}(G)}\to \mathcal T_h$ with $\tau(g)=gh^{t^{\delta(g)(s-1)}}$, where $\delta(g)$ is the preperiod of $g$ under $\varphi_t$. As previously remarked, each element of $\mqureshi{\mathcal{T}_{t}(G)}$ is in $G[\nu]$.

{\bf Claim.} {\em The map $\tau$ is well defined and preserves the preperiod, i.e, $\tau(g)\in \mathcal T_h$ and $\delta(\tau(g))=\delta(g)$ for every $g\in \mqureshi{\mathcal{T}_{t}(G)}$. }

We proceed by induction on $n=\delta(g)$. If $n=0$, then $g=1$ and $\tau(1)=h\in \mathcal T_h$ with $\delta(h)=0$. If $n=1$, then $g\ne 1$, $g^t=1$ and $\tau(g)=gh^{t^{s-1}}$. In particular, since $h$ is in the centralizer of $G[\nu]$ and $g\in \mqureshi{\mathcal{T}_{t}(G)}$, we have that $\varphi_t(\tau(g))=g^th^{t^{s}}=h$. Since $g\ne 1$, it follows that $\tau(g)=gh^{t^{s-1}}\ne h^{t^{s-1}}$, the unique $\varphi_t$-periodic element in the set $\varphi_t^{-1}(h)$. The latter implies that $\tau(g)\in \mathcal T_h$ and $\delta(\tau(g))=1$.
Suppose that $\tau(g)\in\mathcal T_h$ and $\delta(\tau(g))=\delta(g)$ whenever $g\in \mqureshi{\mathcal{T}_{t}(G)}$ and $\delta(g)= n$ for some $n\ge 1$ and let $g\in \mqureshi{\mathcal{T}_{t}(G)}$ with $\delta(g)=n+1$. Since $n\ge 1$, we necessarily have that $\varphi_t(g)\in \mqureshi{\mathcal{T}_{t}(G)}$ and $\delta(\varphi_t(g))=n$. By induction hypothesis, $\tau(\varphi_t(g))\in \mathcal T_h$ and
$\delta(\tau(\varphi_t(g)))=n$. Since $\mqureshi{\mathcal{T}_{t}(G)}\subseteq G[\nu]$, $h$ is in the centralizer of $G[\nu]$ and $n(s-1)\equiv (n+1)(s-1)+1\pmod{s}$ we have that 
\begin{equation}\label{EqTauPreserveAdj}
\tau(\varphi_t(g)) =g^th^{t^{n(s-1)}}=\left(gh^{t^{(n+1)(s-1)}}\right)^{t}=\varphi_t(\tau(g)).
\end{equation}
In particular, $\tau(g)$ is in the set $\varphi_t^{-1}(f)$ for some element $f\in \mathcal T_h$ of preperiod $n$. Since $n\ge 1$, the latter implies that $\tau(g)$ is in $\mathcal T_h$ and has preperiod $n+1$. The proof of the claim is complete.

From the claim, $\tau$ is well defined and preserves the preperiod under $\varphi_t$. By the same reasoning, the map $\tau^*:\mathcal T_h\to \mqureshi{\mathcal{T}_{t}(G)}$ that sends the element $g\in \mathcal T_h$ to the element $gh^{-t^{\delta(g)(s-1)}}$ is well defined and preserves the preperiod under $\varphi_t$. It is direct to verify that $\tau$ and $\tau^*$ are the compositional inverses of each other, hence $\tau$ is a bijection. By Equation (\ref{EqTauPreserveAdj}), we have that $\tau\circ \varphi_t=\varphi_t\circ \tau$ and then $\tau$ preserves adjacency.

\end{proof}

\begin{corollary} Let $h\in G$ be a $\varphi_t$-periodic element. If $h \in Z(G)$ then the rooted tree attached to $h$ in $\mathcal G(\varphi_t/G)$ is isomorphic to the \mqureshi{central tree $\mathcal{T}_t(G)$}.
\end{corollary}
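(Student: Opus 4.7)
The plan is to derive this corollary as an immediate specialization of Proposition \ref{lem:centre}. The key observation is that the hypothesis on $h$ in the proposition, namely that $h$ lies in the centralizer $C_G(G[\nu])$ of the subgroup $G[\nu]$, is weaker than the hypothesis $h \in Z(G)$ of the corollary.

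First I would note that since $Z(G) = C_G(G)$ consists of those elements of $G$ that commute with every element of $G$, and $G[\nu]$ is a subgroup (hence a subset) of $G$, we immediately get the inclusion $Z(G) \subseteq C_G(G[\nu])$. Therefore, any $h \in Z(G)$ automatically commutes with every element of $G[\nu]$.

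The second step is simply to invoke Proposition \ref{lem:centre}: given that $h$ is $\varphi_t$-periodic and $h \in C_G(G[\nu])$, the rooted tree attached to $h$ in $\mathcal{G}(\varphi_t/G)$ is isomorphic to the central tree $\mathcal{T}_t(G)$. This completes the proof.

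There is no real obstacle here; the corollary is essentially a restatement of the proposition under a stronger (and more transparent) hypothesis. Its purpose seems to be to highlight the particularly clean statement one obtains for central elements, which is the motivation for the terminology ``central tree'' introduced earlier.
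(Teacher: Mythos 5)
Your proposal is correct and matches the paper's intended argument: the paper states this corollary without proof, precisely because it follows immediately from Proposition~\ref{lem:centre} via the inclusion $Z(G) = C_G(G) \subseteq C_G(G[\nu])$, which is exactly the observation you make. Nothing is missing.
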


\section{On flower groups}\label{Section:main}
Given a group $G$, a cyclic subgroup $H$ is a $\mu$-subgroup of $G$ if $H$ is not contained in any cyclic subgroup of $G$ other than $H$ itself. We have the following definition.

\begin{definition}
Let $G$ be a finite noncyclic group and let $S=\{C_1,C_2,\ldots,C_k\}$ be the collection of its $\mu$-subgroups. The group $G$ is a flower group if there exists a subgroup $C_0$ of $G$ such that $C_i\cap C_j=C_0$ for $1\leq i<j\leq k$. The subgroup $C_0$ is the {\em pistil} of $G$ and the elements of $S$ are the {\em petals} of $G$. We define the type of $G$ as $(c_0;c_1,\ldots,c_k)$ where $c_i:=|C_i|$ for $0\leq i \leq k$.
\end{definition}

In the following proposition we provide equivalent conditions for a group $G$ to be a flower group.

\begin{proposition}\label{prop:count}
For a finite noncyclic group $G$ and a cyclic subgroup $C_0$ of $G$, the following are equivalent:

\begin{enumerate}[(i)]
\item $G$ is a flower group with pistil $C_0$;
\item for any $g\in G\setminus C_0$, there exists a unique $\mu$-subgroup $C$ such that $g\in C$ and $C_0\subseteq C$;
\item there exist $k\ge 1$ and distinct $\mu$-subgroups $C_1, \ldots, C_k$ of $G$ such that $C_i\cap C_j=C_0$ for any $1\le i<j\le k$, verifying
$$\sum_{i=1}^k|C_i|-(k-1)|C_0|=|G|.$$
\end{enumerate}
\end{proposition}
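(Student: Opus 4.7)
The plan is to prove the cycle $(i) \Rightarrow (ii) \Rightarrow (iii) \Rightarrow (i)$. The underlying observation used throughout is that every element $g$ of a finite group lies in at least one $\mu$-subgroup, namely any maximal cyclic subgroup containing $\langle g\rangle$, and that a $\mu$-subgroup $C$ contained in any cyclic subgroup $C'$ must equal $C'$ by maximality.

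For $(i) \Rightarrow (ii)$, let $g\in G\setminus C_0$. Existence of a $\mu$-subgroup containing $g$ is automatic; since $G$ is noncyclic there are at least two petals, so any petal $C$ has $C_0=C\cap C'\subseteq C$ for any other petal $C'$, which gives $C_0\subseteq C$. Uniqueness is immediate: if $g$ lay in two distinct petals $C_i,C_j$, then $g\in C_i\cap C_j=C_0$, a contradiction.

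For $(ii) \Rightarrow (iii)$, I would take $C_1,\ldots,C_k$ to be the $\mu$-subgroups of $G$ containing $C_0$. This list is nonempty because $G\supsetneq C_0$ (as $G$ is noncyclic), so any $g\in G\setminus C_0$ supplies one such petal. The clause $(ii)$ says exactly that $G\setminus C_0$ is the disjoint union of the sets $C_i\setminus C_0$, which yields $|G|-|C_0|=\sum_{i=1}^k(|C_i|-|C_0|)$ and rearranges to the desired identity. The uniqueness part of $(ii)$ also forces $C_i\cap C_j=C_0$ for $i\ne j$.

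For $(iii) \Rightarrow (i)$, first one rules out $k=1$: that case forces $|C_1|=|G|$ and hence $G=C_1$ would be cyclic, contradicting the hypothesis. So $k\ge 2$, and then the relations $C_i\cap C_j=C_0$ give $C_0\subseteq C_i$ for every $i$. Disjointness of the sets $C_i\setminus C_0$ combined with the counting identity then shows $G=C_1\cup\cdots\cup C_k$. To conclude it remains to check that $\{C_1,\ldots,C_k\}$ is the complete list of $\mu$-subgroups: any $\mu$-subgroup $C$ has a generator $g$, which by the covering property lies in some $C_i$, so $C=\langle g\rangle\subseteq C_i$, and maximality of $C$ yields $C=C_i$. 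Combined with the pairwise intersection condition, this is the definition of a flower group with pistil $C_0$.

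The argument is essentially bookkeeping driven by one double-counting identity. The only mildly delicate step is in $(iii)\Rightarrow(i)$: the hypothesis that $G$ is noncyclic is needed precisely to eliminate $k=1$ and thereby guarantee $C_0\subseteq C_i$ for all $i$, without which the partition picture would collapse and the identification of the $C_i$ with the full set of $\mu$-subgroups would fail.
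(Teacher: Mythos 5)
Your proof is correct and takes essentially the same route as the paper's: the cycle (i)$\Rightarrow$(ii)$\Rightarrow$(iii)$\Rightarrow$(i), driven by the covering of $G$ by its $\mu$-subgroups and the disjointness/counting identity for the sets $C_i\setminus C_0$. You are in fact slightly more explicit than the paper, e.g.\ in ruling out $k=1$ in (iii)$\Rightarrow$(i) and in writing out the counting argument that the paper dismisses as ``a simple counting argument.''
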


\begin{proof}
Let $S$ be the collection of the $\mu$-subgroups of $G$. It is direct to verify that $S$ covers $G$, i.e., $G=\bigcup_{C\in S}C$. For the implication (i)$\rightarrow$ (ii), we observe that if $g\in G$ belongs to two distinct $\mu$-subgroups $C_1, C_2$ of $G$, then $g\in C_1\cap C_2=C_0$. The implication (ii)$\rightarrow$(iii) follows by a simple counting argument. It remains to prove that (iii)$\rightarrow$ (i). If (iii) holds, it follows that $G$ is the disjoint union of the sets $C_i\setminus C_0$ and $C_0$. In particular, since each $C_i$ is a $\mu$-subgroup of $G$ and $S$ covers $G$, we have that $S=\{C_1, \ldots, C_k\}$. Therefore, $G$ is a flower group with pistil $C_0$.
\end{proof}

In the following proposition we provide some basic properties of flower groups.

\begin{proposition}\label{prop:properties}
Let $G$ be a flower group with pistil $C_0$ and center $Z(G)$. Then the following hold:

\begin{enumerate}[(i)]
\item If $C=\langle g \rangle$ is a $\mu$-subgroup of $G$ then $C\subseteq C_G(g)$. In particular, $C_0 \subseteq Z(G)$.
\item If $H$ is any noncyclic subgroup of $G$, $H$ is a flower group with pistil $H\cap C_0$;
%\item If $H$ is another flower group with pistil $D_0$ and $\gcd(|G|, |H|)=1$, then the direct product $G\times H$ is a flower group with pistil $C_0\times D_0$ \mqureshi{(It is false but perhaps it is true if $\gcd(|G|, |H|)\neq 1$)}.
\end{enumerate}
\end{proposition}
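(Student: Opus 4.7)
The plan is to handle the two assertions separately, invoking the equivalent characterization (iii) of flower groups from Proposition \ref{prop:count} for part (ii).

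For (i), the inclusion $C\subseteq C_G(g)$ is immediate: the cyclic group $C=\langle g\rangle$ is abelian, so every element of $C$ commutes with $g$. For the consequence $C_0\subseteq Z(G)$, I would first remark that since $G$ is noncyclic, the cover of $G$ by its petals forces at least two petals to exist, and since $C_0=C_i\cap C_j$ for any pair $i\neq j$, the pistil is contained in every petal. Then, for $x\in C_0$ and any $h\in G$, the element $h$ lies in some petal $C_i$; both $x$ and $h$ sit in the cyclic, hence abelian, group $C_i$ and therefore commute.

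For (ii), the strategy is to produce explicit candidates for the petals of $H$ and verify the counting criterion (iii) of Proposition \ref{prop:count}. For each petal $C_i$ of $G$, set $D_i:=H\cap C_i$ and let $I:=\{i:D_i\supsetneq H\cap C_0\}$. Each $D_i$ is cyclic (as a subgroup of $C_i$), and for distinct $i,j\in I$ we have $D_i\cap D_j=H\cap C_i\cap C_j=H\cap C_0$; in particular, the $D_i$ with $i\in I$ are pairwise distinct. Starting from the partition $G=C_0\sqcup\bigsqcup_{i=1}^{k}(C_i\setminus C_0)$ established in the proof of Proposition \ref{prop:count} and intersecting with $H$, a short counting argument yields $|H|=\sum_{i\in I}|D_i|-(|I|-1)|H\cap C_0|$, after noting that indices $i\notin I$ satisfy $|D_i|=|H\cap C_0|$ and cancel. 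That $H$ is noncyclic forces $|I|\geq 2$.

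The one nontrivial step, and thus the main obstacle, is to verify that each $D_i$ with $i\in I$ is actually a $\mu$-subgroup of $H$. I would argue by contradiction: if $D_i\subsetneq\langle h\rangle\subseteq H$ for some $h\in H$, then $\langle h\rangle$ is contained in some petal $C'$ of $G$ (via (ii) of Proposition \ref{prop:count} applied to $h$, or trivially if $h\in C_0$). The inclusion $D_i\subseteq C_i\cap C'$ forces $C'=C_i$, since otherwise $D_i\subseteq C_0$ would contradict $i\in I$; but then $h\in H\cap C_i=D_i$, contradicting the strict inclusion. Once this is established, Proposition \ref{prop:count}(iii) immediately yields that $H$ is a flower group with pistil $H\cap C_0$.
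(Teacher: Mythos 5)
Your proof is correct, and while part (i) matches the paper, your part (ii) takes a genuinely different route. For (i) you give the same two-step argument as the paper: $C=\langle g\rangle$ is abelian, so $C\subseteq C_G(g)$, and since every element of $G$ lies in some petal while the pistil lies in every petal, each element of $C_0$ commutes with everything (the paper phrases this as ``the intersection of all $\mu$-subgroups is contained in the intersection of all centralizers,'' but it is the same computation). For (ii) the paper argues top-down and never counts: given an arbitrary $\mu$-subgroup $C'$ of $H$, it embeds $C'$ in a $\mu$-subgroup $C$ of $G$ (any cyclic subgroup of a finite group lies in a maximal one), notes $C'\subseteq H\cap C$ with $H\cap C$ cyclic, and concludes $C'=H\cap C$ by maximality of $C'$ in $H$; since distinct $\mu$-subgroups of $H$ then come from distinct petals of $G$ and $(H\cap C_i)\cap (H\cap C_j)=H\cap C_0$, the flower property with pistil $H\cap C_0$ follows directly from the definition, in about three lines. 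You instead work bottom-up: you construct the candidate petals $D_i=H\cap C_i$ for $i\in I$, prove their maximality in $H$ by what is essentially the mirror image of the paper's embedding argument (your handling of the case $C'\ne C_i$ via $D_i\subseteq C_0$ contradicting $i\in I$ is exactly right, as is the observation that $i\in I$ guarantees $D_i\not\subseteq C_0$), and then verify the counting criterion (iii) of Proposition~\ref{prop:count} using the partition of $G$ into $C_0$ and the sets $C_i\setminus C_0$. Both arguments are sound and rest on the same two facts (cyclic subgroups embed in petals; distinct petals meet in $C_0$). The paper's version is shorter and avoids the partition and the cardinality bookkeeping entirely; yours is more laborious but buys something concrete: it identifies exactly which intersections $H\cap C_i$ are the petals of $H$ (those with $D_i\supsetneq H\cap C_0$), and hence in passing determines the type of the flower group $H$, which the paper's proof does not exhibit.
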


\begin{proof}
\begin{enumerate}[(i)]
\item Since $g\in C_G(g)$ we have $C=\langle g \rangle\subseteq C_G(g)$. Observe that every element of $G$ is contained in at least one $\mu$-subgroup of $G$. Therefore, the intersection of all the $\mu$-subgroups is contained in the intersection of all the centralizers, i.e., $C_0\subseteq Z(G)$. 
\item It suffices to prove that the $\mu$-subgroups of $H$ are of the form $H \cap C$ with $C$ a $\mu$-subgroup of $C$. Let $C'$ be a $\mu$-subgroup of $H$. Since $C'$ is a cyclic subgroup of $G$, it is contained in a $\mu$-subgroup $C$ of $G$. We have that  $C'\subseteq H\cap C$ and $H\cap C$ is a cyclic subgroup of $H$, hence $C'=H\cap C$.
\end{enumerate} 
\end{proof}

%\begin{proof}
%
%\begin{enumerate}[(i)]
%\item Pick $h\in G$ and $f\in C_0$. Consider a $\mu$-subgroup $C$ of $G$ that contains $h$ and let $g$ be a generator of $C$. Since $h$ and $f$ are both power of $g$, we have that$hf=fh$, hence $f\in Z(G)$.
%
%\item We observe that any subgroup of $H$ is contained in at least one subgroup of $G$. The latter implies that the $\mu$-subgroups of $H$ are of the form $H\cap C$ with $C$ a $\mu$-subgroup of $G$. In particular, if $C_1, C_2$ are $\mu$-subgroups of $G$ in a way that $H\cap C_1, H\cap C_2$ are $\mu$-subgroups of $H$, we have that $(H\cap C_1)\cap (H\cap C_2)=H\cap C_0$. Therefore, $H$ is a flower group with pistil $H\cap C_0$.
%
%\end{enumerate}
%\end{proof}

%\noindent\fbox{\parbox{\textwidth}{\emph{Coment\'ario: Achei melhor acrescentar o enunciado de (i) pois isso \'e usado na prova da Proposition 3.4 ii, a prova que coloquei foi simplesmente adaptada para este novo enunciado. Na prova anterior (que ficou comentada) de (ii) n\~ao entend\'\i  `` We observe that any subgroup of $H$ is contained in at least one subgroup of $G$. The latter implies that the $\mu$-subgroups of $H$ are of the form $H\cap C$ with $C$ a $\mu$-subgroup of $G$.''  A primeira afirma\c{c}\~ao est\'a b\'asicamente dizendo que $K\subseteq K$ para todo subgrupo $K\subseteq H$, o qual \'e obviamente verdade mas n\~ao entendo porque isso implica a  segunda afirma\c{c}\~ao. De fato acho que a segunda afirma\c{c}\~ao n\~ao \'e t\~ao obvia, eu coloquei acima uma prova, talvez voc\^e tenha uma argumento mais simples.}}}\\

The following result provides a special class of flower groups $G$ whose pistil $C_0$ coincides with $Z(G)$.

%\begin{proposition}\label{prop:com}
%Let $G$ be a non abelian finite group such that one of the following holds:
%\begin{enumerate}[(i)]
%\item for every element $g\in G\setminus Z(G)$, its centralizer $C_G(g)$ is cyclic;
%\item for every element $g\in G\setminus Z(G)$, there exists a unique $\mu$-subgroup $C_g^*$ of the centralizer $C_G(g)$ containing $g$ and $Z(G)$.
%\end{enumerate}
%Then $G$ is a flower group with pistil $Z(G)$ and the set of petals of $G$ equals $\{C_G(g)\,|\, g\in G\setminus Z(G)\}$ if $G$ satisfies (i), and equals $\{C_g^*\,|\, g\in G\setminus Z(G)\}$ if $G$ satisfies (ii).
%\end{proposition}

%\begin{proof}We only prove item (i) since item (ii) follows in a similar way. We observe that, for every $g\in G\setminus Z(G)$,  the inclusion $Z(G)\subseteq C_G(g)$ holds and, in particular, $Z(G)$ is cyclic. From Proposition~\ref{prop:count}, it suffices to prove that  for every $g\in G\setminus Z(G)$, the group $C_G(g)$ is the unique $\mu$-subgroup of $G$ that contains $g$. The latter follows from the fact that if $g\in H=\langle h\rangle$, then $h\in C_G(g)$.
%\end{proof}

%\noindent\fbox{\parbox{\textwidth}{\emph{Coment\'ario: Eu trocar{\'\i}a em ii a condi\c{c}\~ao $Z(G)=1$ pela condi\c{c}\~ao mais geral $Z(G)\subseteq C_g^*$ para cada $g\in G\setminus Z(G)$, se quiser acrecentando que this latter always happens when $Z(G)$ is trivial.}}}\\

\begin{proposition}\label{prop:com}
Let $G$ be a finite non abelian group such that for every element $g\in G\setminus Z(G)$, there exists a unique $\mu$-subgroup $C_g^*$ of the centralizer $C_G(g)$ containing $g$ and $Z(G)$. Then $G$ is a flower group with pistil $Z(G)$ and the set of petals of $G$ equals $\{C_g^*\,|\, g\in G\setminus Z(G)\}$.
\end{proposition}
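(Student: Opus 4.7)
The plan is to verify condition (ii) of Proposition \ref{prop:count} with $C_0 = Z(G)$, so that $G$ is automatically a flower group with pistil $Z(G)$, and then identify the petals. Since $G$ is non-abelian it is non-cyclic and $Z(G)\subsetneq G$, so $G\setminus Z(G)$ is nonempty, and for each such $g$ the hypothesis furnishes the cyclic subgroup $C_g^*\subseteq C_G(g)$ containing $g$ and $Z(G)$.

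The first key step, and the main technical point, is a ``transfer'' of the property of being a $\mu$-subgroup between $C_G(g)$ and $G$: any cyclic subgroup $D$ of $G$ containing $g$ is automatically contained in $C_G(g)$, because $D$ is abelian and so commutes with $g$. Consequently, for cyclic subgroups of $G$ that contain $g$, being a $\mu$-subgroup of $G$ is equivalent to being a $\mu$-subgroup of $C_G(g)$. In particular $C_g^*$ is itself a $\mu$-subgroup of $G$.

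For the uniqueness required by (ii), I would take an arbitrary $\mu$-subgroup $C$ of $G$ containing $g$ and $Z(G)$. Since $g\in C$, the transfer says $C$ is a $\mu$-subgroup of $C_G(g)$ as well; it contains both $g$ and $Z(G)$, so by the uniqueness built into the hypothesis it must equal $C_g^*$. Together with the previous paragraph this verifies (ii) of Proposition \ref{prop:count} with $C_0 = Z(G)$, so $G$ is a flower group with pistil $Z(G)$.

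Finally, to identify the petals, let $C$ be any $\mu$-subgroup of $G$. From the defining property $C_i\cap C_j = C_0$ of a flower group, every petal strictly contains the pistil (otherwise two distinct petals would both equal $C_0$, hence each other). Picking $g\in C\setminus Z(G)$ and applying the uniqueness step gives $C = C_g^*$; conversely, each $C_g^*$ is a petal by the first key step. The main obstacle is really the transfer argument between $C_G(g)$ and $G$; once that is in place, the proposition is a direct translation of the hypothesis through the equivalence (i)$\Leftrightarrow$(ii) of Proposition \ref{prop:count}.
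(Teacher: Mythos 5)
Your proposal is correct and takes essentially the same route as the paper's proof: both reduce the statement to the equivalence (i)$\Leftrightarrow$(ii) of Proposition~\ref{prop:count} with $C_0=Z(G)$, and both hinge on the same transfer observation that any cyclic subgroup of $G$ containing $g$ is abelian and hence lies in $C_G(g)$, so that $\mu$-subgroups of $G$ through $g$ coincide with $\mu$-subgroups of $C_G(g)$ through $g$. You merely make explicit some details the paper compresses into two lines (both directions of the transfer, and the identification of every petal as some $C_g^*$ by choosing $g\in C\setminus Z(G)$); note also, as the paper does, that $Z(G)\subseteq C_g^*$ forces $Z(G)$ to be cyclic, as required to serve as a pistil in Proposition~\ref{prop:count}.
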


\begin{proof}
We observe that, for every $g\in G\setminus Z(G)$,  the inclusion $Z(G)\subseteq C_g^*$ holds and, in particular, $Z(G)$ is cyclic. From Proposition~\ref{prop:count}, it suffices to prove that  for every $g\in G\setminus Z(G)$, the group $C_g^*$ is the unique $\mu$-subgroup of $G$ that contains $g$. The latter follows from our hypothesis and the fact that any $\mu$-subgroup of $G$ containing $g$ is necessarily a $\mu$-subgroup of $C_G(g)$.
\end{proof}

We observe that the condition of Proposition~\ref{prop:com} is satisfied if, for instance, $C_G(g)$ is cyclic for every $g\in G\setminus Z(G)$. The groups satisfying the latter are fully characterized in~\cite{cent}. 

\subsection{Flower groups under the power map}
%Here we provide the description of the functional graph induced by the power map on flower groups. 
%We start with the following important auxiliary result.

In this part we prove that the functional graph induced by the power map on flower groups depends only on the type of the group and provide a description of the structure of this graph.

\begin{definition}
Let $G$ be a flower group of type $(c_0;c_1,\ldots,c_k)$ and petals $C_1,\ldots,C_k$. A compatible system of generators for $G$ is $(g_1,\ldots,g_k)$ where $g_i$ is a generator of $C_i$ for $1\leq i \leq k$ and $g_i^{c_i/c_0}=g_j^{c_j/c_0}$ for $1\leq i <j\leq k$.
\end{definition}
We obtain the following result.
%We prove next that every flower group has a compatible system of generators, we start with a lemma.
%
%\begin{lemma}\label{lem:csg-preliminar}
%Let $f,c_0$ and $c_1$ be positive integers with $\gcd(f,c_0)=1$ and $c_0\mid c_1$. Then there is a positive integer $f'$ such that $\gcd(f',c_1)=1$ and $f'\equiv f \pmod{c_0}$.
%\end{lemma}
%
%\begin{proof}
%If every prime $p\mid c_1$ satisfies $p\mid c_0$ we can take $f'=f$. Otherwise, for every prime $p\mid c_1$ such that $p\nmid c_0$ we consider $g_p$ such that $f+g_p c_0\equiv 1 \pmod{p}$ and using the Chinese remainder theorem there is a positive integer $g$ such that $g\equiv g_p \pmod{p}$ for each of these primes. Since $\gcd(f,c_0)=1$, for every prime $p\mid c_0$ we have $p\nmid f+gc_0$, then we can take $f'=f+gc_0$.
%\end{proof}

\begin{lemma}\label{lem:csg}
Every flower group has a compatible system of generators.
\end{lemma}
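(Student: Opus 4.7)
The plan is to use the fact that each petal $C_i$ is a cyclic group containing the pistil $C_0$, and then modify a chosen generator of $C_i$ by an exponent coprime to $c_i$ so that its $(c_i/c_0)$-th power lands on a fixed generator of $C_0$. First, since $G$ is noncyclic there must be at least two petals, so for any petal $C_i$ one can pick $j\ne i$ and obtain $C_0=C_i\cap C_j\subseteq C_i$; in particular $c_0\mid c_i$. The crucial observation is that if $g$ generates the cyclic group $C_i$, then $g^{c_i/c_0}$ has order exactly $c_0$, so it generates the unique subgroup of order $c_0$ inside $C_i$, which must be $C_0$.

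Concretely, I would proceed as follows. Fix an arbitrary generator $g_1$ of $C_1$ and let $h:=g_1^{c_1/c_0}$, a generator of $C_0$. For each $i\ge 2$, choose any generator $g_i'$ of $C_i$; by the observation above, $(g_i')^{c_i/c_0}=h^{a_i}$ for some $a_i\in(\Z_{c_0})^*$. The goal is to find an integer $b_i$ with $\gcd(b_i,c_i)=1$ and $a_ib_i\equiv 1\pmod{c_0}$, for then $g_i:=(g_i')^{b_i}$ is still a generator of $C_i$ and satisfies
\[
g_i^{c_i/c_0}=(g_i')^{b_i c_i/c_0}=h^{a_ib_i}=h=g_1^{c_1/c_0}.
\]
Iterating over $i=2,\ldots,k$ yields a compatible system $(g_1,\ldots,g_k)$.

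The main (and only) obstacle is producing the integer $b_i$. I would handle it by the Chinese Remainder Theorem. Start from $b_i^{(0)}:=a_i^{-1}\bmod c_0$, and look for $b_i$ in the arithmetic progression $b_i^{(0)}+c_0\Z$. For a prime $p\mid c_0$, any element of this progression is congruent to $a_i^{-1}$ mod $p$, hence nonzero mod $p$ since $\gcd(a_i,c_0)=1$. For a prime $p\mid c_i$ with $p\nmid c_0$, $c_0$ is invertible mod $p$, so one can choose the multiplier of $c_0$ mod $p$ to force $b_i\not\equiv 0\pmod p$; combining these choices over the finitely many such primes via CRT produces $b_i$ coprime to every prime divisor of $c_i$, as required. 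This completes the construction.
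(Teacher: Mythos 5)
Your proof is correct and follows essentially the same route as the paper: fix a generator $g_1$ of one petal, note that the $(c_i/c_0)$-th power of any generator of $C_i$ generates $C_0$, and adjust each generator by an exponent congruent to the appropriate unit modulo $c_0$ and coprime to the petal's order. The only difference is in producing that exponent: the paper lifts $f$ to an $f_i\equiv f\pmod{c_0}$ with $\gcd(f_i,|G|)=1$ by invoking Dirichlet's theorem on primes in arithmetic progressions, whereas you build $b_i$ with $\gcd(b_i,c_i)=1$ elementarily via the Chinese Remainder Theorem --- a more self-contained justification of the same step.
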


\begin{proof}
Let $G$ be a flower group with pistil $C_0$ and petals $C_1,\ldots,C_k$ and let $c_i:=|C_i|$ for $1\leq i \leq k$. Let $g_1$ be a generator of $C_1$ and consider generators $h_i$ of $C_i$ for $2\leq i \leq k$. Since $h_i^{c_i/c_0}$ and $g_1^{c_1/c_0}$ are both generators of $C_0$ we have that $h_i^{c_i f/c_0}=g_1^{c_1/c_0}$ for some integer $f$ with $\gcd(f,c_0)=1$. Consider a positive integer $f_i$ such that $f_i\equiv f\pmod{c_0}$ and $\gcd(f_i,|G|)=1$ (for example, using Dirichlet's Theorem on arithmetic progressions we can take a prime $f_i$ such that $f_i\equiv f \pmod{c_0}$ and $f_i>|G|$). If we set $g_i=h_i^{f_i}$ for $2\leq i \leq k$, we have that $g_i$ is a generator of $C_i$ satistying $g_i^{c_i/c_0}=h_i^{c_if_i/c_0}=g_1^{c_1/c_0}$.
\end{proof}

The following proposition entails that the isomorphism class of the functional graph induced by power maps on flower groups, depend only on their type.
\begin{proposition} \label{prop:type_isomorphism}
Let $G$ and $H$ be two flower groups of the same type, then there is a graph isomorfism $\psi:\mathcal{G}(\varphi_t/G) \to \mathcal{G}(\varphi_t/H)$ such that $\psi(1_G)=\psi(1_{H})$. 
\end{proposition}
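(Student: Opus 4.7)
The approach is to transfer the decomposition of $G$ as a ``bouquet'' of cyclic subgroups glued along the pistil to $H$, via a compatible system of generators, and then check that this transfer intertwines the power map $\varphi_t$. (I read the conclusion $\psi(1_G)=\psi(1_H)$ as the typo $\psi(1_G)=1_H$.)

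First, I would invoke Lemma~\ref{lem:csg} to pick compatible systems of generators $(g_1,\ldots,g_k)$ for $G$ and $(h_1,\ldots,h_k)$ for $H$. Since $G$ and $H$ have the same type $(c_0;c_1,\ldots,c_k)$, we may index the petals so that $|\langle g_i\rangle|=|\langle h_i\rangle|=c_i$ for each $i$. Because every element of $G$ lies in some petal, I would define $\psi\colon G\to H$ by $\psi(g_i^a)=h_i^a$. The main point is well-definedness on the pistil $C_0$: if $g_i^a=g_j^b$ with $i\neq j$, then this element lies in $C_i\cap C_j=C_0$, so $a$ is a multiple of $c_i/c_0$ and $b$ is a multiple of $c_j/c_0$, say $a=(c_i/c_0)\alpha$ and $b=(c_j/c_0)\beta$. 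The compatibility identities $g_i^{c_i/c_0}=g_j^{c_j/c_0}=:z_G$ then yield $z_G^{\alpha}=z_G^{\beta}$, i.e., $\alpha\equiv\beta\pmod{c_0}$; applying exactly the same relations in $H$ (which hold because $(h_1,\ldots,h_k)$ is also compatible) gives $h_i^a=h_j^b$. So $\psi$ is well defined.

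Next I would show $\psi$ is a bijection. Injectivity is the symmetric statement, obtained by running the previous argument in reverse using the compatibility in $H$. Surjectivity then follows from $|G|=|H|$, which is a consequence of condition (iii) in Proposition~\ref{prop:count} since that count depends only on the type. To see that $\psi$ intertwines $\varphi_t$, for any $g_i^a\in G$ I would compute
\[
\psi(\varphi_t(g_i^a))=\psi(g_i^{at})=h_i^{at}=(h_i^a)^t=\varphi_t(\psi(g_i^a)),
\]
so $\psi$ maps every directed edge $g\to g^t$ of $\mathcal G(\varphi_t/G)$ to a directed edge of $\mathcal G(\varphi_t/H)$. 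Finally, $\psi(1_G)=\psi(g_1^0)=h_1^0=1_H$.

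I do not expect any serious obstacle: the entire argument is a bookkeeping task, and the only delicate point is the well-definedness of $\psi$ on the pistil, which is precisely the reason the compatibility condition $g_i^{c_i/c_0}=g_j^{c_j/c_0}$ is built into the definition of a compatible system of generators. Everything else (bijectivity, the intertwining identity, preservation of the identity element) reduces to manipulating exponents inside a single cyclic petal, where no flower structure is used.
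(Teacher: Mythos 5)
Your proposal is correct and follows essentially the same route as the paper: both invoke Lemma~\ref{lem:csg} to get compatible systems of generators and define $\psi$ petal-by-petal via $g_i^a\mapsto h_i^a$, with the compatibility relation $g_i^{c_i/c_0}=g_j^{c_j/c_0}$ doing exactly the work of making $\psi$ well defined on the pistil. The only difference is cosmetic: the paper checks agreement on $C_0$ by observing that all $\psi_i$ send the generator $g_0=g_1^{c_1/c_0}$ to $h_0$, while you do the equivalent exponent bookkeeping by hand (and you spell out injectivity plus the counting argument for surjectivity, which the paper leaves as ``clear''); your reading of the statement's typo as $\psi(1_G)=1_H$ is also the intended one.
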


\begin{proof}
Let $C_0$ and $C_0'$ be the pistils of $G$ and $H$, respectively. Let $S=\{C_1,\ldots,C_k\}$ and $S'=\{C_1',\ldots,C_k'\}$ be the set of petal of $G$ and $H$, respectively. Since $G$ and $H$ are flower groups of the same type, with no loss of generality, we may assume that $c_i:=|C_i|=|C_i'|$ for $0\leq i \leq k$. By Lemma \ref{lem:csg} we consider a compatible system of generators $(g_1,\ldots,g_k)$ and $(h_1,\ldots,h_k)$ for $G$ and $H$, respectively. We note that $g_0:=g_1^{c_1/c_0}$ and $h_0:=h_1^{c_1/c_0}$ are generators of $C_0$ and $C_0'$, respectively. For each $i$ with $1\leq i \leq k$ we consider the group isomorphism $\psi_i:C_i\to C_i'$ such that $\psi(g_i)=h_i$ and define $\psi:G\to H$ such that $\psi|_{C_i}=\psi_i$. To prove that this map is well defined it suffices to prove that the maps $\psi_i$ and $\psi_j$ coincides in $C_0$ for $\leq i <j \leq k$. This last assertion follows from the fact that $\psi_i(g_0)=\psi_i(g_i^{c_i/c_0})=h_i^{c_i/c_0}=h_0$ for $1\leq i \leq k$ and that $g_0$ is a generator of $C_0$. It is clear that $\psi$ is a bijection and since $\varphi_t \psi = \psi \varphi_t$ in each petal $C_i$, this also happens globally and $\psi$ induces an isomorphism between the functional graphs $\mathcal{G}(\varphi_t/G)$ and $\mathcal{G}(\varphi_t/H)$.
\end{proof}

Next we provide a description of the functional graph induced by the power map on flower groups. We start with the following important auxiliary result.

\begin{lemma}\label{lem:stable}
Let $G$ be a flower group with petals $C_1, \ldots, C_k$ and pistil $C_0$. Then for any $1\le i\le k$, $\varphi_t^{-1}(C_i\setminus C_0)\subseteq C_i\setminus C_0$. In particular, for an element $g\in C_i\setminus C_0$ that is $\varphi_t$-periodic, we have that the tree attached to $g$ in $\mathcal G(\varphi_t/G)$ is isomorphic to the tree attached to $g$ in $\mathcal G(\varphi_t/C_i)$ and the cycle containing $g$ in $\mathcal G(\varphi_t/G)$ comprises vertices from $C_i\setminus C_0$.  
\end{lemma}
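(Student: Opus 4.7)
The whole statement reduces to verifying the preimage stability $\varphi_t^{-1}(C_i\setminus C_0)\subseteq C_i\setminus C_0$; the "in particular" part is then a direct consequence. The plan is to prove the preimage statement first using the uniqueness of the $\mu$-subgroup through a non-pistil element (Proposition~\ref{prop:count}(ii)), and then to bootstrap from that to the tree/cycle description.

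\textbf{Step 1: The preimage statement.} Suppose $g\in G$ satisfies $\varphi_t(g)=g^t\in C_i\setminus C_0$. Since the petals cover $G$, there is some $\mu$-subgroup $C_j$ containing $g$; then $g^t\in C_j$. Because $g^t\notin C_0$, Proposition~\ref{prop:count}(ii) guarantees that $g^t$ lies in a \emph{unique} $\mu$-subgroup, forcing $C_j=C_i$ and hence $g\in C_i$. Moreover, if $g\in C_0$, then $g^t\in C_0$ as $C_0$ is a subgroup, contradicting $g^t\in C_i\setminus C_0$. Therefore $g\in C_i\setminus C_0$.

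\textbf{Step 2: Cycle description.} Now let $g\in C_i\setminus C_0$ be $\varphi_t$-periodic of period $s$, so $g^{t^s}=g$. Since $C_i$ is a subgroup, the full forward orbit $\{g,g^t,\dots,g^{t^{s-1}}\}$ lies in $C_i$. If some $g^{t^j}$ were in $C_0$, applying $\varphi_t^{s-j}$ and using that $C_0$ is $\varphi_t$-stable would give $g\in C_0$, a contradiction. Hence the cycle through $g$ in $\mathcal{G}(\varphi_t/G)$ is contained in $C_i\setminus C_0$.

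\textbf{Step 3: Tree description.} Iterating Step~1, one obtains $\varphi_t^{-n}(g)\subseteq C_i\setminus C_0$ for every $n\geq 1$. Consequently every vertex lying in the tree attached to $g$ in $\mathcal{G}(\varphi_t/G)$ belongs to $C_i$. Conversely, since $\varphi_t(C_i)\subseteq C_i$, the functional graph $\mathcal{G}(\varphi_t/C_i)$ embeds as a subgraph of $\mathcal{G}(\varphi_t/G)$, so the tree attached to $g$ in $\mathcal{G}(\varphi_t/C_i)$ is contained in the tree attached to $g$ in $\mathcal{G}(\varphi_t/G)$. Both trees having the same vertex set and inherited edges, they are equal (in particular isomorphic).

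\textbf{Main obstacle.} There is no deep obstacle; the only delicate point is Step~1, where one must take care to invoke the uniqueness clause of Proposition~\ref{prop:count}(ii) correctly, since applying it to $g^t$ (rather than $g$) is what pins down the petal. Once that is clear, Steps~2 and~3 are formal consequences of the $\varphi_t$-stability of $C_i$ and the fact that periodicity of an element does not depend on the ambient set in which one considers the map.
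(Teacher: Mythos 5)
Your proof is correct and follows essentially the same route as the paper: you establish the preimage stability $\varphi_t^{-1}(C_i\setminus C_0)\subseteq C_i\setminus C_0$ by placing a preimage $g$ in some petal $C_j$, noting $g^t\in C_i\cap C_j\setminus C_0$, and invoking the flower-group intersection property (your appeal to Proposition~\ref{prop:count}(ii) is just a repackaging of $C_i\cap C_j=C_0$), exactly as the paper does. Your Steps~2 and~3 merely spell out the details that the paper dismisses with ``the remaining statement follows directly,'' and they do so correctly, including the observation that $\varphi_t$-periodicity is intrinsic to the element and independent of the ambient set.
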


\begin{proof}
Suppose that $h^t\in C_i\setminus C_0$. Hence $h\not\in C_0$ and then $h\in C_j\setminus C_0$ for some $j$ with $1\leq j \leq k$. Since $C_i \cap C_j \setminus C_0 \neq \emptyset$, we conclude that $j=i$. The remaining statement follows directly by the inclusion $\varphi_t^{-1}(C_i\setminus C_0)\subseteq C_i\setminus C_0$.
\end{proof}

%%\noindent\fbox{\parbox{\textwidth}{\emph{Coment\'ario: Essa \'ultima parte que acrescentei no statement do lemma 3.5 \'e usado na verdade no Theorem 3.6.}}}\\

We obtain the following result.

\begin{theorem}\label{thm:main}
Let $t$ be a positive integer and $G$ be a flower group of type $(c_0; c_1,\ldots, c_k)$. Set $c_i=\nu_i\cdot \omega_i$ in a way that $\omega_i$ is the greatest divisor of $c_i$ that is relatively prime with $t$. Then the functional graph $\mathcal G(\varphi_t/G)$ of the map $\varphi_t:G\to G$ with $g\mapsto g^t$ is isomorphic to

$$\left(\bigoplus_{i=1}^k\bigoplus_{d_i|\omega_i\atop d_i\nmid\omega_0}\frac{\varphi(d_i)}{\ord_{ d_i}(t)}\times \cyc\left(\ord_{ d_i}(t), \mathcal T_{\gcd_t(\nu_i)}\right)\right)\oplus \left(\bigoplus_{d_0|\omega_0}\frac{\varphi(d_0)}{\ord_{ d_0}(t)}\times \cyc\left(\ord_{ d_0}(t), \mathcal T_{t}(G)\right)\right),$$
where $\mathcal T_{t}(G)$ is the \mqureshi{central tree which} has $\sum_{i=1}^k\nu_i-(k-1)\nu_0$ nodes.

%Moreover, the following hold:
%
%\begin{enumerate}[(i)]
%\item $\mathcal T_{G, t}$ equals the tree attached to $1\in G$ in $\mathcal G(\varphi_t/G)$ and has 
%$\sum_{i=1}^k\nu_i-(k-1)\nu_0$ nodes;
%
%\item if $\gcd(d_0, t)=1$ (i.e. $\nu_0=1$), then $\mathcal T_{G, t}$ is isomorphic to $\sum_{i=1}^k\mathcal T_{\gcd_t(\nu_i)}$;
%
%\item If $d_0\mid t$ and $\gcd(d_i/d_0, d_0)=1$ for every $1\le i \le k$,  then $\mathcal T_{G, t}$ is isomorphic to $\sum_{i=1}^k\mathcal T_{\gcd_t(\nu_i)}-T$ where $T=\langle(k-1)(d_0-1)\times \bullet \rangle$, i.e., $T$ is the rooted tree consisting of one node (the root) and $(k-1)(d_0-1)$ predecessors.
%\end{enumerate}
\end{theorem}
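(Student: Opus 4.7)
The plan is to partition the periodic points of $\varphi_t$ on $G$ into those lying in the pistil $C_0$ and those lying in $C_i\setminus C_0$ for each petal, and then to describe the cyclic and tree structure of each piece separately. As a preliminary observation, an element $g\in G$ is $\varphi_t$-periodic if and only if $\gcd(t,\ord(g))=1$; consequently $g^t$ has the same order as $g$ whenever $g$ is periodic, and every cycle in $\mathcal G(\varphi_t/G)$ consists of elements of a single common order $d$. Since $\omega_0$ is the largest divisor of $c_0$ coprime to $t$ and $\omega_0\mid \omega_i$, one checks that $\gcd(\omega_i,c_0)=\omega_0$, so a cycle lies in $C_0$ precisely when its common order $d$ divides $\omega_0$.

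For periodic elements in $C_i\setminus C_0$, I would invoke Lemma~\ref{lem:stable}, which ensures that the cycle containing such a $g$ and the tree attached to $g$ in $\mathcal G(\varphi_t/G)$ coincide with their counterparts in $\mathcal G(\varphi_t/C_i)$. Applying Proposition~\ref{prop:cyc} to the cyclic group $C_i$ of order $\nu_i\omega_i$ and keeping only the divisors $d_i\mid\omega_i$ with $d_i\nmid\omega_0$ (the remaining divisors produce cycles that live in $C_0$) yields the contribution
$$\bigoplus_{\substack{d_i\mid\omega_i\\ d_i\nmid\omega_0}}\frac{\varphi(d_i)}{\ord_{d_i}(t)}\times \cyc\bigl(\ord_{d_i}(t),\mathcal T_{\gcd_t(\nu_i)}\bigr),$$
and summing over $i=1,\ldots,k$ gives the first term of the statement; no double counting occurs because the flower property $C_i\cap C_j=C_0$ ensures that each element of $C_i\setminus C_0$ lies in a unique petal.

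For the pistil, the $\varphi_t$-periodic elements of $C_0$ are precisely those of order dividing $\omega_0$, and since $C_0$ is cyclic it contributes $\varphi(d_0)/\ord_{d_0}(t)$ cycles of length $\ord_{d_0}(t)$ for each $d_0\mid\omega_0$. Proposition~\ref{prop:properties}(i) gives $C_0\subseteq Z(G)$, so the corollary to Proposition~\ref{lem:centre} guarantees that every tree attached to such a periodic element is isomorphic to the central tree $\mathcal T_t(G)$, producing the second term. The vertex set of $\mathcal T_t(G)$ equals $\{g\in G : g^{t^n}=1 \text{ for some } n\ge 0\}$, i.e., the elements whose order uses only primes dividing $t$; restricted to each $C_i$ this is the unique subgroup of order $\nu_i$, while the flower property forces every pairwise (hence every higher-fold) intersection of these subgroups to coincide with the subgroup of order $\nu_0$ inside $C_0$. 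Inclusion--exclusion then gives the count $\sum_{i=1}^k \nu_i-(k-1)\nu_0$.

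The main obstacle I foresee is the careful bookkeeping required for the inclusion--exclusion step, namely verifying that all $m$-fold intersections with $m\ge 2$ collapse to the same order-$\nu_0$ subgroup of $C_0$; this is immediate from the flower axiom but should be spelled out. Everything else is a direct combination of Lemma~\ref{lem:stable}, Proposition~\ref{prop:cyc}, Proposition~\ref{prop:properties}(i), and the central-tree corollary of Proposition~\ref{lem:centre}.
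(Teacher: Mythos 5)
Your proof is correct and follows essentially the same route as the paper's: partition the periodic points into the pistil $C_0$ and the sets $C_i\setminus C_0$, handle the petal contributions via Lemma~\ref{lem:stable} together with Proposition~\ref{prop:cyc}, and handle the pistil via Proposition~\ref{prop:properties}(i) and the corollary to Proposition~\ref{lem:centre}. Your explicit order-based criterion for when a cycle lies in $C_0$ and the inclusion--exclusion count of the vertices of $\mathcal T_t(G)$ merely spell out details the paper leaves as ``a counting argument.''
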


\begin{proof}
Let $C_0$ be the pistil of $G$ and $C_1,\ldots,C_k$ be the petals such that $|C_i|=c_i$ for $1\leq i \leq k$. Fix $1\le i\le k$ and let $P_i$ be the set of $\varphi_t$-periodic elements of $C_i\setminus C_0$. From Proposition~\ref{prop:cyc}, $P_i$ has $\omega_i-\omega_0$ elements, corresponding to the cycle decomposition $\bigoplus_{d_i|\omega_i\atop d_i\nmid\omega_0}\frac{\varphi(d_i)}{\ord_{ d_i}(t)}\times \cyc(\ord_{ d_i}(t))$ in $\mathcal G(\varphi_t/G)$. From Proposition \ref{prop:cyc} and  Lemma~\ref{lem:stable}, to each element of $P_i$ is attached a rooted tree isomorphic to $\mathcal T_{\gcd_t(\nu_i)}$; this yields the first component in our statement. It remains to consider the set $P_0$ of $\varphi_t$-periodic elements of $C_0$. From Lemma~\ref{prop:cyc}, $P_0$ has $\omega_0$ elements, corresponding to the cycle decomposition $\bigoplus_{d_i|\omega_0}\frac{\varphi(d_0)}{\ord_{ d_0}(t)}\times \cyc(\ord_{ d_0}(t))$ in $\mathcal G(\varphi_t/G)$.
From Proposition~\ref{prop:properties}, $C_0$ is in the center of $G$ and then, by Proposition~\ref{lem:centre}, the trees attached to the elements of $P_0$ are all isomorphic to $\mqureshi{\mathcal{T}_t(G)}$. The statement about the number of nodes in such tree follows by a counting argument.

\end{proof}

\begin{remark}\label{RemarkNumberOfTrees}
From the above theorem we have that if $G$ is a flower group of type $(c_0;c_1,\ldots,c_k)$ and $t$ is a positive integer then the functional graph $\mathcal{G}(\varphi_t/G)$ has at most $k+1$ non isomorphic trees. 
\end{remark}

\subsection{The rooted tree $\mathcal T_{t}(G)$}

Theorem~\ref{thm:main} provided a description of the functional graph of the power map $\varphi_t$ on a flower group $G$. This description is complete except for the term $\mathcal T_{t}(G)$ which is the rooted tree associated with $1\in G$. In this part we describe it and provide some relations which under some mild conditions allows us to express it in terms of elementary trees.\\

By proposition \ref{prop:type_isomorphism}, the tree $\mathcal T_{t}(G)$ only depends on the type of the flower group $G$. If $G$ has type $(c_0; c_1,\ldots,c_k)$ we denote $\mathcal{T}_{t}(c_0;c_1,\ldots,c_k):=\mathcal T_{t}(G)$.\\

It is convenient to extend the definition of $\mathcal{T}_{t}(c_0;c_1,\ldots,c_k)$ for positive integers $c_0,c_1,\ldots,c_k$ with $c_0\mid c_i$, $1\leq i \leq k$, even if there is no flower group of type $(c_0;c_1,\ldots,c_k)$. We consider set $V= \bigcup_{i=0}^{k}\{i\}\times \Z_{c_i}$. Note that if $x\in \Z_{c_i}$ for some $i$, $1\leq i \leq k$, verifies $x\equiv 0 \pmod{c_i/c_0}$ then there is a unique $x'\in \Z_{c_0}$ such that $x\equiv x'c_i/c_0 \pmod{c_i}$ and we define the identification function $f:V\to V$ given by $$f(i,x)=\left\{ \begin{array}{ll} (i,x) & \textrm{if } i=0 \textrm{ or } x\not\equiv 0 \pmod{\frac{c_i}{c_0}}; \\
(0,x') & \textrm{if } i\neq 0 \textrm{ and } x\equiv x'\cdot \frac{c_i}{c_0} \pmod{c_i}.
\end{array} \right.$$
From the latter, we have the following definition.

\begin{definition}
The pseudo-flower group $F=F(c_0;c_1,\ldots,c_k)$ is the quotient set $V/f$ i.e. its elements are of the form $\overline{(i,x)}:=\{(j,y)\in V: f(j,y)=f(i,x)\}$. The set $V_0:=\{\overline{(i,x)}: x\in \Z_{c_0}\}$ is the pistil of $F$ and the sets $V_i:=\{\overline{(i,x)}: x\in \Z_{c_i}\}$ with $1\leq i \leq k$ are the petals of $F$.
\end{definition}

Note that, for each $i$, $0\leq i \leq k$, the sets $V_i$ have a natural structure of cyclic groups given by $\overline{(i,x)} + \overline{(i,y)} := \overline{(i,x+y)}$ and that this sum is well defined in $F$. Indeed, if $f(i,x_1)=f(j,x_2)=(0,x')$ and $f(i,y_1)=f(j,y_2)=(0,y')$ with $0\leq j < i \leq k$ we have that $x\equiv x'\cdot \frac{c_i}{c_0}\pmod{c_i}$, $y\equiv y'\cdot \frac{c_i}{c_0}\pmod{c_i}$ and $x+y\equiv (x'+y')\cdot \frac{c_i}{c_0}\pmod{c_i}$, thus $f(i,x_1+y_1)=(0,x'+y')$. In a similar way we have that $f(j,x_2+y_2)=(0,x'+y')$ (if $j=0$ it follows directly from the fact that $x_2=x'$ and $y_2=y'$). Then, $f(i,x_1+y_1)=f(j,x_2+y_2)$ for $0\leq j < i \leq k$. It is clear that each $V_i$ is a cyclic group (isomorphic to $\Z_{c_i}$) for $0\leq i \leq k$ and they satisfy $\bigcup_{i=1}^{k}V_i = F$ and $V_i \cap V_j = V_0$ for $1\leq i <j\leq k$ but $F$ is not a genuine flower group since the sum is not defined for every pair of elements of $F$. Nevertheless, we can define the power map $\varphi_t: F \to F$ given by  $\varphi_t(x)=tx$. The group $V_0$ is called the pistil of $F$ and the groups $V_1,\ldots, V_k$ are called the petals of $F$.\\

It is \mqureshi{straightforward to prove} that if $G$ is a flower group of type $(c_0;c_1,\ldots,c_k)$ and $(g_1,\ldots,g_k)$ is a compatible system of generators for $G$, the map $\psi:G \to F(c_0;c_1,\ldots,c_k)$ given by $g_i^j \mapsto \overline{(i,j)}$ for $1\leq i \leq k$ and $j\geq 0$ is well defined (using $\overline{(i,c_i/c_0)}=\overline{(0,1)}$ for $1\leq i \leq k$), bijective and satisfies $\varphi_t \psi = \psi \varphi_t$. Thus, the induced map $\tilde{\psi}: \mathcal{G}(\varphi_t/G) \to \mathcal{G}(\varphi_t/F(c_0;c_1,\ldots,c_k))$ is a graph isomorphism. Since $\psi(1)=\overline{(0,0)}$, the tree $\mathcal{T}_t(c_0;c_1,\ldots,c_k)$ is isomorphic to the tree attached to $\overline{(0,0)}$ in $\mathcal{G}(\varphi_t/F(c_0;c_1,\ldots,c_k))$.\\

Now we have defined the rooted tree $\mathcal{T}_t(c_0;c_1,\ldots,c_k)$ for every $c_0,c_1,\ldots,c_k$ satisfying $c_0\mid c_i$ even if there is no flower group of type $(c_0;c_1,\ldots,c_k)$. In the case that there is a flower group $G$ of type $(c_0;c_1,\ldots,c_k)$, this tree coincides with \mqureshi{the central tree $\mathcal T_{t}(G)$} in the functional graph $\mathcal{G}(\varphi_t/G)$. \\

The following proposition brings us a method which allows to simplify the tree structure of $\mathcal{T}_t(c_0;c_1,\ldots,c_k)$.

\begin{theorem}\label{thm:tree}
Let $c_0,c_1,\ldots, c_k$ be positive integers with $k\geq 2$ and $c_0\mid c_i$ for $1\leq i \leq k$. Then, the following holds:
\begin{enumerate}[i)]
\item $\mathcal{T}_t(c_0;c_1,\ldots,c_k)=\mathcal{T}_t(c_0;c_{\theta(1)},\ldots,c_{\theta(k)})$ for every permutation $\theta$ of the set $\{1, 2, \ldots, k\}$.
\item If $\gcd(c_0, t)=1$, then $\mathcal T_{t}(c_0;c_1,\ldots,c_k)=\sum_{i=1}^k\mathcal T_{\gcd_t(\nu_i)}$.
\item If $\gcd(t,\frac{c_k}{c_0})=1$ then $\mathcal{T}_t(c_0;c_1,\ldots,c_{k-1},c_k)=\mathcal{T}_t(c_0;c_1,\ldots,c_{k-1})$.
\item If $c_k \mid t$ then $\mathcal{T}_t(c_0;c_1,\ldots,c_{k-1},c_k)=\mathcal{T}_t(c_0;c_1,\ldots,c_{k-1})+\langle (c_k-c_0)\times \bullet \rangle$.
\item $\mathcal{T}_t(c_0;c_1)=\mathcal{T}_{\gcd_t(c_1)}$. 
\end{enumerate}
\end{theorem}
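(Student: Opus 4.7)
The plan is to work inside the pseudo-flower group $F = F(c_0; c_1, \ldots, c_k)$ and describe the tree $\mathcal T_t(c_0; c_1, \ldots, c_k)$ attached to $\overline{(0,0)}$ by tracking which petal $V_i$ each vertex lies in. The fundamental observation — a pseudo-flower analogue of Lemma~\ref{lem:stable} — is that if $\overline{(i,x)} \in V_i \setminus V_0$ and $\varphi_t(\overline{(j,y)}) = \overline{(i,x)}$, then $\overline{(j,y)} \in V_i \setminus V_0$. Indeed, if $j \neq i$ then $\overline{(j,y)}$ lies in $V_j$ and so does its image, forcing $\overline{(i,x)} \in V_i \cap V_j = V_0$, a contradiction; hence $j = i$, and $y$ cannot be a multiple of $c_i/c_0$ either. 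Parts (i) and (v) follow immediately: (i) because any permutation of the petal labels extends to an automorphism of $F$ fixing $\overline{(0,0)}$ and commuting with $\varphi_t$; (v) because when $k = 1$, $F \cong \mathcal C_{c_1}$, so Proposition~\ref{prop:cyc} gives $\mathcal T_t(c_0; c_1) = \mathcal T_{\gcd_t(c_1)}$.

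For part (ii), the hypothesis $\gcd(c_0, t) = 1$ makes $\varphi_t$ bijective on $V_0$, so every element of $V_0$ is $\varphi_t$-periodic and the root $\overline{(0,0)}$ is the only vertex of $V_0$ that can appear in the tree. By the stability observation, for each petal $V_i$ with $i \geq 1$ the union of $\{\overline{(0,0)}\}$ and the tree-vertices lying in $V_i \setminus V_0$ is exactly the tree attached to $0$ in $V_i \cong \mathcal C_{c_i}$, namely $\mathcal T_{\gcd_t(\nu_i)}$ by Proposition~\ref{prop:cyc}. These $k$ rooted trees share only the root, and identifying them at $\overline{(0,0)}$ yields $\sum_{i=1}^k \mathcal T_{\gcd_t(\nu_i)}$.

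For part (iii), $\gcd(t, c_k/c_0) = 1$ forces the $t$-parts of $c_0$ and $c_k$ to coincide, i.e.\ $\nu_k = \nu_0$. The unique subgroup of $V_k$ of order $\nu_k$ — which coincides with the set of elements of $V_k$ whose $\varphi_t$-orbit reaches $\overline{(0,0)}$ — therefore has order $\nu_0$ and lies entirely inside $V_0$. Hence no vertex of $V_k \setminus V_0$ belongs to the tree, and deleting the petal $V_k$ does not alter it. For part (iv), $c_k \mid t$ forces $\varphi_t(\overline{(k,x)}) = \overline{(0,0)}$ for every $x \in V_k$; the $c_k - c_0$ elements of $V_k \setminus V_0$ are thus depth-$1$ leaves attached to the root, contributing the forest $(c_k - c_0) \times \bullet$, while by the stability observation the rest of the tree (vertices in $V_0$ and in $V_i \setminus V_0$ for $i < k$) coincides with the tree attached to $\overline{(0,0)}$ in the smaller pseudo-flower $F(c_0; c_1, \ldots, c_{k-1})$. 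Identifying roots yields the claimed formula. The main subtlety, shared by (iii) and (iv), is justifying that ``removing'' the petal $V_k$ is legitimate, i.e.\ that the tree in $F$ restricted to $V_0 \cup V_1 \cup \cdots \cup V_{k-1}$ really matches the tree in $F(c_0; c_1, \ldots, c_{k-1})$; this is exactly what the stability observation in the first paragraph guarantees.
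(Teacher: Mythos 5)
Your proof is correct and takes essentially the same approach as the paper: both arguments work inside the pseudo-flower group $F(c_0;c_1,\ldots,c_k)$, rely on petal-invariance of $\varphi_t$ (your ``stability observation'' is exactly the analogue of Lemma~\ref{lem:stable} for $F$), and dispatch items (i)--(v) by the same case analysis, including the decomposition $F^*=(F\setminus V_k)\cup V_0$ in (iv) and the reduction of (v) to Proposition~\ref{prop:cyc}. The only cosmetic difference is in (iii), where you deduce $\nu_k=\nu_0$ and observe that the elements of $V_k$ whose orbit reaches the root form the subgroup of order $\nu_k\subseteq V_0$, whereas the paper checks directly that $\gcd(t,c_k/c_0)=1$ gives the forward invariance $\varphi_t(V_k\setminus V_0)\subseteq V_k\setminus V_0$; the two computations are interchangeable.
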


\begin{proof}
Consider the pseudo-flower group $F=F(c_0;c_1,\ldots,c_k)$ with pistil $V_0$ and petals $V_1,\ldots,V_k$ and the power map $\varphi_t:F \to F$. We split the proof into cases:

\begin{enumerate}[i)]
\item Item i) follows directly from the definition of $F$. 
\item We observe that, if $\gcd(c_0, t)=1$, then every element of $V_0$ is $\varphi_t$-periodic. In particular, the tree attached to $\overline{(0,0)}\in F$ contains, in addition to the root $\overline{(0,0)}$, only elements of the set $F\setminus V_0=\bigcup_{i=1}^k V_i\setminus V_0$. Such tree is the union of the rooted trees associated with $\overline{(0,0)}\in F$ in each of the functional graphs $\mathcal G(\varphi_t/V_i)$, that is, $\sum_{i=1}^k\mathcal T_{\gcd_t(c_i)}$. 
\item We first note that $\gcd(t,c_k/c_0)=1$ implies $\varphi_t(V_k\setminus V_0) \subseteq V_k\setminus V_0$. Indeed, if $\overline{(k,x)} \in V_k\setminus V_0$ then $\frac{c_k}{c_0} \nmid x$ and, since $\gcd\left(t,\frac{c_k}{c_0}\right)=1$, we also have $\frac{c_k}{c_0} \nmid tx$. Hence, $\varphi_t(\overline{(k,x)}) = \overline{(k,tx)} \in V_k\setminus V_0$. The inclusion $\varphi_t(V_k\setminus V_0) \subseteq V_k\setminus V_0$ implies that there are no points of the tree attached to $\overline{(0,0)}$ in $V_i\setminus V_0$ and it is equal to $\mathcal{T}_t(c_0;c_1,\ldots,c_{k-1})$.
\item We observe that if $c_k\mid t$ then $\varphi_t(V_k)=\{\overline{(0,0)}\}$. Set $F^* = (F\setminus V_k) \cup V_0$. The tree attached to $\overline{(0,0)}$ in $\mathcal{G}(\varphi_t/F)$ can be obtained as the union of the tree attached to $\overline{(0,0)}$ in $\mathcal{G}(\varphi_t/F^*)$ (which equals $\mathcal{T}_t(c_0;c_1,\ldots,c_{k-1})$) and the tree $\langle (c_k-c_0)\times \bullet \rangle$ (corresponding to the $c_k-c_0$ points of $V_k\setminus V_0$ mapping to $\overline{(0,0)}$ by $\varphi_t$), both trees with the same root, then $\mathcal{T}_t(c_0;c_1,\ldots,c_{k-1},c_k)=\mathcal{T}_t(c_0;c_1,\ldots,c_{k-1})+\langle (c_k-c_0)\times \bullet \rangle$.
\item We note that $F(c_0;c_1)=V_1$ is a cyclic group of order $c_1$ and then the equality $\mathcal{T}_t(c_0;c_1)=  \mathcal{T}_{\gcd_t(c_1)}$ follows from Proposition \ref{prop:cyc}.
\end{enumerate}

\end{proof}

\section{Applications}\label{Section:applications}

In this section we provide some applications of Theorem~\ref{thm:main} to the explicit description of the functional graph of the power map on certain classes of finite groups. For clarity and organization, we consider them separately.

\subsection{Generalized quaternions}
In~\cite{a2} the authors obtain an implicit description of the digraph associated to power maps over generalized quaternion groups, such as distribution of indegrees and cycle lengths. Here we a provide a more explicit description of these digraphs.
We start by showing that such groups are flower groups.
\begin{lemma}\label{lem:quat}
For $n\ge 2$, the generalized quaternion $$Q_{4n}=\langle a, b\,|\, a^{2n}=1, \, a^n=b^2\, , bab^{-1}=a^{-1}\rangle,$$ of order $4n$ is a flower group with pistil $C_0=\{1, a^n\}$. Moreover, the set of petals of $Q_{4n}$ comprises $n$ cyclic groups of order $4$ and one cyclic group of order $2n$.
\end{lemma}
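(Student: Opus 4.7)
The plan is to exhibit all cyclic subgroups of $Q_{4n}$ explicitly, identify the $\mu$-subgroups among them, and then invoke the counting criterion of Proposition~\ref{prop:count}(iii). First I would list the candidate $\mu$-subgroups: the cyclic group $\langle a\rangle$ of order $2n$ sitting in the ``rotation part'', together with the cyclic groups $\langle a^ib\rangle$ for $i=0,1,\ldots,n-1$. Using the defining relation $bab^{-1}=a^{-1}$ one gets $ba^i=a^{-i}b$, and hence the key identity
$$(a^ib)^2=a^i(ba^i)b=a^i\cdot a^{-i}b\cdot b=b^2=a^n,$$
so every element of the form $a^ib$ has order $4$, with $\langle a^ib\rangle=\{1,\,a^ib,\,a^n,\,a^{n+i}b\}$. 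In particular $\langle a^{n+i}b\rangle=\langle a^ib\rangle$, so these $n$ subgroups are pairwise distinct as $i$ runs over $\{0,1,\ldots,n-1\}$, and each of them meets $\langle a\rangle$ only in $\{1,a^n\}$; likewise any two of them intersect in $\{1,a^n\}$.

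Next I would verify that each of these subgroups is in fact a $\mu$-subgroup. For $\langle a\rangle$ this is immediate, since any strictly larger cyclic subgroup would have order exceeding $2n$ and dividing $4n$, hence order $4n$, forcing $Q_{4n}$ to be cyclic, contrary to the hypothesis $n\ge 2$. For $\langle a^ib\rangle$, suppose it is contained in some cyclic subgroup $\langle h\rangle$. If $h\in\langle a\rangle$ then $\langle h\rangle\subseteq\langle a\rangle$, forcing $a^ib\in\langle a\rangle$, impossible; hence $h$ lies outside $\langle a\rangle$, so $h=a^jb$ for some $j$, and by the computation above $h$ has order $4$, giving $\langle h\rangle=\langle a^ib\rangle$. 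The main (mild) obstacle here is ruling out the existence of exotic cyclic overgroups of $\langle a^ib\rangle$, and the argument just sketched handles it by exploiting that every element outside $\langle a\rangle$ has order exactly $4$.

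Finally, with $C_0:=\{1,a^n\}=\langle a^n\rangle$ of order $2$, pairwise intersections equal to $C_0$, and orders $|\langle a\rangle|=2n$ and $|\langle a^ib\rangle|=4$ for $i=0,\ldots,n-1$, I would apply Proposition~\ref{prop:count}(iii) with $k=n+1$ candidate petals. The counting identity
$$\sum_{i=1}^{k}|C_i|-(k-1)|C_0|\;=\;n\cdot 4+2n-n\cdot 2\;=\;4n\;=\;|Q_{4n}|$$
holds, so the criterion applies and $Q_{4n}$ is a flower group with pistil $C_0$, whose set of petals consists exactly of the $n$ subgroups $\langle a^ib\rangle$ of order $4$ together with $\langle a\rangle$ of order $2n$, as claimed.
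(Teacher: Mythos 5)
Your proof is correct and follows essentially the same route as the paper: identify the $\mu$-subgroups as $\langle a\rangle$ together with the $n$ subgroups $\langle a^ib\rangle$ of order $4$ (via the identity $(a^ib)^2=a^n$), note their pairwise intersections equal $\{1,a^n\}$, and conclude by the counting criterion of Proposition~\ref{prop:count}(iii). You merely spell out details the paper leaves implicit (the order computation, the verification that no larger cyclic subgroup exists, and the explicit count $4n+2n-2n=4n$), all of which check out.
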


\begin{proof}
We observe that every element of $Q_{4n}$ is written uniquely as $a^ib^j$, where $j=0, 1$ and $0\le i\le 2n-1$. Therefore, the $\mu$-subgroups of $Q_{4n}$ are the groups $C_{i}=\langle a^ib\rangle=\{1, a^ib, a^n, a^{n+i}b\}$ with $1\le i\le n$, each of order $4$ and $C_{n+1}=\langle a \rangle$, which has order $2n$. From this fact the result follows directly.
\end{proof}

We obtain the following corollary.

%The following corollary is a direct application of Lemma~\ref{lem:quat} and Theorem~\ref{thm:main}.

\begin{corollary}\label{cor:qua}
Fix $t$ an integer, let $Q_{4n}$ be as in Lemma~\ref{lem:quat} and set $2n=\nu\cdot \omega$ in a way that $\omega$ is the greatest divisor of $2n$ that is relatively prime with $t$. Let $\alpha$ be the integer such that $n/2^\alpha$ is odd. Then the functional graph $\mathcal G(\varphi_t/Q_{4n})$ of the map $\varphi_t:Q_{4n}\to Q_{4n}$ with $g\mapsto g^t$ is isomorphic to one of the following graphs:
$$\left(\bigoplus_{d|\omega\atop d> 2}\frac{\varphi(d)}{\ord_{d}(t)}\times \cyc(\ord_{d}(t), \mathcal T_{\gcd_t(\nu)})\right)\oplus \left( k_tn\times \cyc \left(\frac{2}{k_t}\right)\right) \oplus \left(2\times   \{\mathcal T_{\gcd_t(\nu)}\}\right),$$
if $t$ is odd, where $k_t=2$ if $t\equiv 1\pmod 4$ and $k_t=1$ if $t\equiv 3\pmod 4$, or
$$\left(\bigoplus_{d|\omega\atop d> 1}\frac{\varphi(d)}{\ord_{d}(t)}\times \cyc(\ord_{d}(t), \mathcal T_{\gcd_t(\nu)})\right) \oplus \{T_0\}, $$
if $t$ is even, where $T_0 = \left\{ \begin{array}{ll}
T_{\gcd_t(\nu)} + \langle 2n \times \bullet \rangle & \textrm{if } t\equiv 0 \pmod{4}; \\
T_{\gcd_t(\nu)} +_\alpha \langle 2n \times \bullet \rangle & \textrm{if } t\equiv 2 \pmod{4}.
\end{array} \right.$
\end{corollary}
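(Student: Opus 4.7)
My plan is to apply Theorem \ref{thm:main} to the flower group $Q_{4n}$, whose type $(2;4,\ldots,4,2n)$ (with $n$ copies of $4$) was established in Lemma \ref{lem:quat}, and then to identify the central tree $\mathcal{T}_t(G)$ via Theorem \ref{thm:tree}. The parameters $\omega_i$ split along the parity of $t$: if $t$ is odd, $\omega_0=2$, $\omega_i=4$ for $1\le i\le n$ (so $\nu_i=1$ and $\mathcal{T}_{\gcd_t(\nu_i)}=\bullet$), and $\omega_{n+1}=\omega$; if $t$ is even, $\omega_0=\omega_i=1$ for $0\le i\le n$, while $\omega_{n+1}=\omega$ is the odd part of $2n$ appearing in the statement.

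In the odd case each of the $n$ size-$4$ petals contributes $\frac{\varphi(4)}{\ord_4(t)}\times\cyc(\ord_4(t))$; writing $k_t=2/\ord_4(t)$ these aggregate to $k_tn\times\cyc(2/k_t)$. The size-$2n$ petal gives the first sum, indexed over $d\mid\omega$ with $d>2$ (because $\omega_0=2$). The two divisors of $\omega_0$ each produce $\{\mathcal{T}_t(G)\}$, since $\ord_d(t)=1$ for $d\mid 2$ and $t$ odd, so the central component is $2\times\{\mathcal{T}_t(G)\}$. Because $\gcd(c_0,t)=1$, Theorem \ref{thm:tree}(ii) gives $\mathcal{T}_t(Q_{4n})=\sum_{i=1}^{n+1}\mathcal{T}_{\gcd_t(\nu_i)}=\mathcal{T}_{\gcd_t(\nu)}$, since the $n$ summands coming from the size-$4$ petals equal $\bullet$ and do not alter the sum.

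In the even case the size-$4$ petals contribute nothing (no $d_i$ divides $\omega_i=1$ but not $\omega_0=1$), the size-$2n$ petal yields the sum over $d\mid\omega$ with $d>1$, and the central component is the single $\{\mathcal{T}_t(G)\}$. When $4\mid t$ each $c_i=4$ divides $t$, so Theorem \ref{thm:tree}(iv) applied $n$ times, followed by Theorem \ref{thm:tree}(v), gives $\mathcal{T}_t(Q_{4n})=\mathcal{T}_{\gcd_t(\nu)}+\langle 2n\times\bullet\rangle$.

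The subcase $t\equiv 2\pmod 4$ is the main obstacle, since none of the reduction rules of Theorem \ref{thm:tree} applies to the petals of size $4$. Here I plan to argue directly in the pseudo-flower group $F(2;4,\ldots,4,2n)$: every order-$4$ element $g$ of a size-$4$ petal satisfies $g^t=g^{t\bmod 4}=g^2=\overline{(0,1)}$, so all $2n$ non-pistil elements of these petals map directly to $\overline{(0,1)}$, which inside the size-$2n$ petal corresponds to $a^n$. The element $a^n$ has depth $1$ in $\mathcal{T}_{\gcd_t(\nu)}$ because $a^{nt}=1$. A short $2$-adic valuation computation shows that any preimage $a^k$ of $a^n$ in $\langle a\rangle$ must satisfy $v_2(k)=v_2(n)-1$, and iterating this the longest chain of preimages of $a^n$ has length exactly $\alpha=v_2(n)$; hence the subtree of $\mathcal{T}_{\gcd_t(\nu)}$ rooted at $a^n$ has intrinsic depth $\alpha$. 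Attaching $2n$ new leaves at $a^n$ is then, by the definition of the $j$-sum, the operation $\mathcal{T}_{\gcd_t(\nu)}+_\alpha\langle 2n\times\bullet\rangle$, producing the claimed $T_0$.
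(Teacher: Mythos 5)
Your proposal is correct and follows essentially the same route as the paper: Lemma~\ref{lem:quat} for the type $(2;4,\ldots,4,2n)$, Theorem~\ref{thm:main} for the global structure, Theorem~\ref{thm:tree} for the central tree when $t$ is odd or $4\mid t$ (the paper uses items i), iii)--v) where you invoke ii), an immaterial difference), and a direct analysis of the order-$4$ elements collapsing onto $a^n$ when $t\equiv 2\pmod 4$. Your $2$-adic computation of $\alpha$ is the paper's own argument in different clothing (the paper phrases it as solvability of $t^k x\equiv n \pmod{2n}$, i.e.\ $k\cdot e_2(t)\le e_2(n)$); the only point to make explicit is that the valuation condition $v_2(k)=v_2(n)-1$ alone gives just the upper bound on the chain length, and exactness requires exhibiting a chain of length $\alpha$, which the solvability criterion at each level $k\le\alpha$ supplies.
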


\begin{proof}
By Lemma~\ref{lem:quat}, the type of $Q_{4n}$ is $(c_0;c_1,\ldots,c_{n+1})$ with $c_0=2$, $c_i=4$ for $1\leq i \leq n$ and $c_{n+1}=2n$. Let $C_0$ be the pistil and $C_1,\ldots,C_{n+1}$ be the petals of $Q_{4n}$ with $|C_i|=c_i$ for $1\le i\le n+1$. Set $c_i=\nu_i\cdot \omega_i$ such that $\omega_i$ is the greatest divisor of $c_i$ that is relatively prime with $t$. In all the cases we apply Theorem~\ref{thm:main} to obtain the graph structure of $\mathcal G(\varphi_t/Q_{4n})$ except for the tree $\mathcal{T}_t(Q_{4n})$ which equals $T_0$ when $t\equiv 0 \pmod{2}$. We split the proof into cases.

\begin{enumerate}[a)]
\item If $t\equiv 1 \pmod{2}$ we have $\gcd(t,\frac{c_i}{c_0})=1$ for $1\leq i \leq n$ and by Theorem \ref{thm:tree} (repeated application of i. and iii. and using v. to finish) we obtain $\mathcal{T}_t(Q_{4n})=T_{\gcd_t(\nu)}$.

\item  If $t\equiv 0 \pmod{4}$ we have $c_i\mid t$ for $1\leq i \leq n$ and by Theorem \ref{thm:tree} (repeated application of i. and iv. and using v. to finish) we obtain $T_0= T_{\gcd_t(\nu)} + \langle 2n \times \bullet \rangle$. 

\item If $t\equiv 2 \pmod{4}$ and $1\leq i \leq n$, the power map $\varphi_t$ maps the two points of $C_i\setminus C_0$ in the unique point of order $2$ in $C_0$ and the hanging tree of $1$ in $\mathcal{G}(\varphi_t/C_{n+1})$ is $T_{\gcd_t(\nu)}$. Then, the tree $T_0$ can be obtained from $T_{\gcd_t(\nu)}$ by replacing the subtree $T$ whose node corresponds to the point of order $2$ in $C_{n+1}$ with $T+\langle 2n \times \bullet \rangle$, that is, $T_0 = T_{\gcd_t(\nu)} +_\alpha \langle 2n \times \bullet \rangle$ where $\alpha$ is the depth $T$. To conclude we observe that $\alpha$ equals the greatest integer $k$ for which there exists $x\in \Z$ such that $t^k x\equiv n \pmod{2n}$. This last equation has a solution if and only if $\gcd(t^k,2n)\mid n$. If  $e_2(m)$ denotes the exponent of $2$ in the prime decomposition of $m$, the latter is equivalent to $\min\{k\cdot e_2(t), 1+e_2(n)\}\leq e_2(n)$, i.e., $k\cdot e_2(t)\leq e_2(n)$. Since $e_2(t)=1$ we have $\alpha=e_2(n)$ as desired.

\end{enumerate}
\end{proof}

We provide two numerical  examples, showing the applicability of Corollary~\ref{cor:qua}.

\begin{example}
Consider the group $Q_{24} (n=6)$ and the map $\varphi_3:Q_{24}\to Q_{24}$ with $\varphi_3(g)=g^3$. From Corollary~\ref{cor:qua}, we obtain that $\mathcal G(\varphi_{3}/Q_{24})$ is isomorphic to
$\cyc(2, \mathcal T_{(3)})\oplus (2\times \cyc(1, \mathcal T_{(3)}))\oplus (6\times \cyc(2))$; see Figure~\ref{fig:Q24}.

%%%%%%%%%%%%%%%%Figure: Generalized quaternions Q_{24}%%%%%%%%%%%%%%%%%%

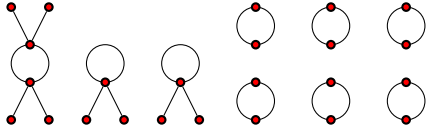
\begin{figure}[H]
\begin{center}
\begin{tikzpicture}
  [scale=1, place/.style={circle,draw=black,thick,fill=red,
                 inner sep=0pt,minimum size=1mm}]
\draw (0,0.25) circle (2.5mm);
  \node (1) at ( 0,0) [place] {};
  \node (2) at (-0.25 ,-0.5) [place] {};
  \node (3) at (0.25, -0.5) [place] {};
  \node (4) at ( 0,0.5) [place] {};
  \node (5) at (-0.25 ,1) [place] {};
  \node (6) at (0.25, 1) [place] {};
\foreach \x/\y in {1/2, 1/3, 4/5, 4/6}
	\draw (\x) to (\y);

\begin{scope}[xshift=1cm, yshift=0]
\draw (0,0.25) circle (2.5mm);
  \node (1) at ( 0,0) [place] {};
  \node (2) at (-0.25 ,-0.5) [place] {};
  \node (3) at (0.25, -0.5) [place] {};
\foreach \x/\y in {1/2, 1/3}
	\draw (\x) to (\y);
\end{scope}

\begin{scope}[xshift=2cm, yshift=0]
\draw (0,0.25) circle (2.5mm);
  \node (1) at ( 0,0) [place] {};
  \node (2) at (-0.25 ,-0.5) [place] {};
  \node (3) at (0.25, -0.5) [place] {};
\foreach \x/\y in {1/2, 1/3}
	\draw (\x) to (\y);
\end{scope}

\begin{scope}[xshift=3cm, yshift=0.5cm]
\draw (0,0.25) circle (2.5mm);
  \node (1) at ( 0,0) [place] {};
  \node (2) at (0,0.5) [place] {};
\end{scope}

\begin{scope}[xshift=4cm, yshift=0.5cm]
\draw (0,0.25) circle (2.5mm);
  \node (1) at ( 0,0) [place] {};
  \node (2) at (0,0.5) [place] {};
\end{scope}

\begin{scope}[xshift=5cm, yshift=0.5cm]
\draw (0,0.25) circle (2.5mm);
  \node (1) at ( 0,0) [place] {};
  \node (2) at (0,0.5) [place] {};
\end{scope}

\begin{scope}[xshift=3cm, yshift=-0.5cm]
\draw (0,0.25) circle (2.5mm);
  \node (1) at ( 0,0) [place] {};
  \node (2) at (0,0.5) [place] {};
\end{scope}

\begin{scope}[xshift=4cm, yshift=-0.5cm]
\draw (0,0.25) circle (2.5mm);
  \node (1) at ( 0,0) [place] {};
  \node (2) at (0,0.5) [place] {};
\end{scope}

\begin{scope}[xshift=5cm, yshift=-0.5cm]
\draw (0,0.25) circle (2.5mm);
  \node (1) at ( 0,0) [place] {};
  \node (2) at (0,0.5) [place] {};
\end{scope}

\end{tikzpicture}
\end{center}
\caption{The graph $\cyc(2, \mathcal T_{(3)})\oplus (2\times \cyc(1, \mathcal T_{(3)}))\oplus (6\times \cyc(2))$.}
\label{fig:Q24}
\end{figure}

%%%%%%%%%%%%%%%%End Figure: Generalized quaternions Q_{24}%%%%%%%%%%%%%%%%%%

\end{example}

\begin{example}
Consider the group $Q_{48} (n=12)$ and the map $\varphi_{10}:Q_{48}\to Q_{48}$ with $\varphi_{10}(g)=g^{10}$. From Corollary~\ref{cor:qua}, we obtain that $\mathcal G(\varphi_{10}/Q_{48})$ is isomorphic to $2\times \{\mathcal{T}_{(2,2,2)}\} \oplus \{\mathcal{T}_{(2,2,2)} +_{2} \langle 24 \times \bullet \rangle  \}$; see Figure~\ref{fig:Q48}.

%%%%%%%%%%%%%%%%Figure: Generalized quaternions Q_{48}%%%%%%%%%%%%%%%%%%

\begin{figure}[H]
\begin{center}
\begin{tikzpicture}
  [scale=1, place/.style={circle,draw=black,thick,fill=red,
                 inner sep=0pt,minimum size=1mm}]
\draw (0,0.25) circle (2.5mm);
  \node (1) at (0,0) [place] {};
  \node (2) at (0,-0.5) [place] {};
  \node (3) at (0.35, -1) [place] {};
  \node (4) at (-0.35,-1) [place] {};
  \node (5) at (0.1,-1.5) [place] {};
  \node (6) at (0.6,-1.5) [place] {};
  \node (7) at (-0.1,-1.5) [place] {};
  \node (8) at (-0.6,-1.5) [place] {};
\foreach \x/\y in {1/2, 2/3, 4/8, 2/4, 3/5, 3/6, 4/7}
	\draw (\x) to (\y); 

\begin{scope}[xshift=2cm, yshift=0]
\draw (0,0.25) circle (2.5mm);
  \node (1) at (0,0) [place] {};
  \node (2) at (0,-0.5) [place] {};
  \node (3) at (0.35, -1) [place] {};
  \node (4) at (-0.35,-1) [place] {};
  \node (5) at (0.1,-1.5) [place] {};
  \node (6) at (0.6,-1.5) [place] {};
  \node (7) at (-0.1,-1.5) [place] {};
  \node (8) at (-0.6,-1.5) [place] {};
\foreach \x/\y in {1/2, 2/3, 4/8, 2/4, 3/5, 3/6, 4/7}
	\draw (\x) to (\y);
\end{scope}     
           
\begin{scope}[xshift=4cm, yshift=0]
\draw (0,0.25) circle (2.5mm);
  \node (1) at (0,0) [place] {};
  \node (2) at (0,-0.5) [place] {};
  \node (3) at (0.35, -1) [place] {};
  \node (4) at (-0.35,-1) [place] {};
  \node (5) at (0.1,-1.5) [place] {};
  \node (6) at (0.6,-1.5) [place] {};
  \node (7) at (-0.1,-1.5) [place] {};
  \node (8) at (-0.6,-1.5) [place] {};
  \node (10) at (0.8, -1) [place] {};
  \node (11) at (1, -1) [place] {};
  \node (ptos) at (1.25,-1) {...};
  \node (12) at (1.5,-1) [place] {};
  \node (underbrace) at (1.15,-1.3) {\scriptsize $\underbrace{\hspace{0.8cm}}_{24}$};
\foreach \x/\y in {1/2, 2/3, 4/8, 2/4, 3/5, 3/6, 4/7}
	\draw (\x) to (\y);
  \draw (2) to [bend left=20] (10);
  \draw (2) to [bend left=20] (11);
  \draw (2) to [bend left=20] (12);
\end{scope}     
              
\end{tikzpicture}
\end{center}
\caption{The graph $2\times \{\mathcal{T}_{(2,2,2)}\} \oplus \{\mathcal{T}_{(2,2,2)} +_{2} \langle 24 \times \bullet \rangle  \}$.}
\label{fig:Q48}
\end{figure}
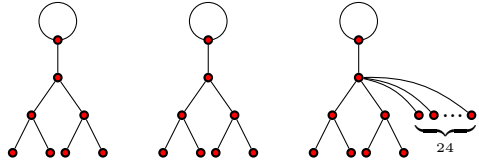

%%%%%%%%%%%%%%%%End Figure: Generalized quaternions Q_{48}%%%%%%%%%%%%%%%%%%

\end{example}

\subsection{Semidirect products of cyclic groups}

Let $\C_n=\langle a \rangle$ and $\C_m= \langle b \rangle$ be two cyclic groups. Every homomorphism $\phi:\C_m \to \operatorname{Aut}(\C_n)$ is defined by $\phi(b)(a)=a^s$ for some integer $s$ with $s^n\equiv 1 \pmod{m}$ and its associated semidirect product is denoted by $\C_n\rtimes_s \mathcal C_m$. 

In~\cite{deng}, the authors explored the digraph associated to the power map on such groups in the case where $n$ is a prime number. The following lemma provides a class of such groups that are also flower groups.

\begin{lemma}\label{lem:semi}
Let $m, n, s>1$ be positive integers such that \mqureshi{$\frac{s^m-1}{s-1}\equiv 0 \pmod{n}$} and $\gcd\left(n, \frac{s^j-1}{s-1}\right)=1$ for $1\le j<m$. Then the semi-direct product $$\C_n\rtimes_s \mathcal C_m=\langle a, b\,|\, b^n=a^m=1\, , aba^{-1}=b^s \rangle, $$ is a flower group of order $mn$ with pistil $C_0=\{1\}$. Moreover, the petals of $\C_n\rtimes_s \mathcal C_m$ comprises $n$ cyclic groups of order $m$ and one cyclic group of order $n$.
\end{lemma}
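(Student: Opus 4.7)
The plan is to exhibit the $n+1$ candidate petals explicitly, verify they are pairwise trivially intersecting cyclic subgroups covering $G$, and then appeal to Proposition \ref{prop:count}(iii) with $C_0=\{1\}$.

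First I would pin down the normal form. The relation $ab=b^sa$ gives by induction $a^jb^i=b^{is^j}a^j$, so every element of $G$ has a unique expression $b^ia^j$ with $0\le i<n$ and $0\le j<m$; in particular $|G|=mn$. A second induction then yields the key identity
$$(b^ia)^k \;=\; b^{\,i(1+s+s^2+\cdots+s^{k-1})}\,a^k \;=\; b^{\,i(s^k-1)/(s-1)}\,a^k,$$
valid for every $k\ge 0$, which drives everything that follows.

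Next I would take as candidate petals $C_{i+1}:=\langle b^ia\rangle$ for $0\le i<n$ together with $C_{n+1}:=\langle b\rangle$. The hypothesis $(s^m-1)/(s-1)\equiv 0\pmod n$ combined with the identity above gives $(b^ia)^m=1$, while for $0<k<m$ the exponent of $a$ in the normal form of $(b^ia)^k$ is nonzero, so no smaller power is trivial; hence $|C_j|=m$ for $1\le j\le n$ and $|C_{n+1}|=n$. For the pairwise trivial intersections: a nonidentity element of $\langle b^ia\rangle$ has normal form $b^\ast a^k$ with $k\ne 0$, so it cannot lie in $\langle b\rangle$; and if $(b^ia)^k=(b^{i'}a)^k$ with $0<k<m$ and $i\ne i'$, uniqueness of the normal form gives $n\mid (i-i')(s^k-1)/(s-1)$, and the coprimality hypothesis $\gcd\bigl(n,(s^k-1)/(s-1)\bigr)=1$ forces $i\equiv i'\pmod n$, a contradiction. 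To see that the $C_j$ cover $G$, take any $b^\alpha a^j$ with $j\ne 0$; since $(s^j-1)/(s-1)$ is a unit in $\Z_n$, the congruence $i(s^j-1)/(s-1)\equiv\alpha\pmod n$ has a unique solution $i\in\{0,1,\ldots,n-1\}$, and then $b^\alpha a^j=(b^ia)^j\in C_{i+1}$.

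Finally I would verify that each $C_j$ is a $\mu$-subgroup. If a cyclic subgroup $C$ of $G$ contains $C_j$, then $C=\bigcup_\ell (C\cap C_\ell)$ expresses $C$ as a union of subgroups, and since a cyclic group is not a union of its proper subgroups, $C=C\cap C_\ell$ for some $\ell$, i.e. $C\subseteq C_\ell$; trivial pairwise intersection then forces $\ell=j$, so $C=C_j$. In particular $G$ has $n+1\ge 3$ distinct proper maximal cyclic subgroups, hence $G$ is noncyclic. The count $\sum_{j=1}^{n+1}|C_j|-n\cdot 1 = nm+n-n = nm = |G|$ then lets Proposition \ref{prop:count}(iii) conclude that $G$ is a flower group of the asserted type: pistil $\{1\}$, with $n$ petals of order $m$ (the $\langle b^ia\rangle$) and one petal of order $n$ (the group $\langle b\rangle$). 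The main obstacle is the bookkeeping around the geometric-sum expression $(s^k-1)/(s-1)$; the two numerical hypotheses are precisely tuned so that this quantity is a unit in $\Z_n$ for $1\le k<m$ and vanishes mod $n$ for $k=m$, and each fact must be invoked at its proper place.
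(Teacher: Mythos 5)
Your proposal is correct and follows essentially the same route as the paper's proof: the normal form $b^ia^j$, the power identity $(b^ia)^k=b^{i(s^k-1)/(s-1)}a^k$, the identification of $\langle b\rangle$ and the $n$ subgroups $\langle b^ia\rangle$ as the petals via the coprimality of $(s^j-1)/(s-1)$ with $n$, and the counting criterion of Proposition \ref{prop:count}(iii). Your treatment is somewhat more explicit than the paper's (notably in verifying that each $C_j$ is a $\mu$-subgroup and that the pairwise intersections are trivial, which the paper leaves implicit in its covering-plus-counting argument), but the substance is identical.
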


\begin{proof}

We observe that every element of $G:=\C_n\rtimes_s \mathcal C_m$ is written uniquely as $b^ia^j$ with $0\le i<n$ and $0\le j<m$. It is direct to verify that, for any $0\le i, i_0<n$ and $0\le j, j_0<m$ we have that $(b^ia^j)\cdot (b^{i_0}a^{j_0})=b^{i+i_0\cdot s^j}a^{j+j_0}$. In particular, we obtain that 
\begin{equation}\label{eq:aux-dic}(b^ia^j)^t=b^{i\cdot \frac{s^{jt}-1}{s^j-1}}a^{jt},\, t>0.\end{equation} 

We claim that for every element $g=b^{i} a^{j}\in G$ with $0\leq i <n$, $1\leq j<m$ there is an integer $i_0$ such that $(b^{i_0} a)^{j} = b^{i}a^{j}$. Indeed, by Equation \ref{eq:aux-dic}, it suffices to take $i_0$ such that $i_0\left( \frac{s^j-1}{s-1} \right) \equiv i \pmod{n}$ and the existence of such $i_0$ is guaranteed by the fact that $\gcd\left(n, \frac{s^j-1}{s-1}\right)=1$ for every $1\le j<m$. Thus, the $\mu$-subgroups are $\langle b \rangle$ (of order $n$) and $\langle b^i a \rangle$ with $0\leq i <n$. \mqureshi{The order of $b^i a$ is $m$ because $(b^i a)^{t}=b^{i\cdot \frac{s^{t}-1}{s-1}}a^{t}=1$ implies $t\equiv 0 \pmod{m}$ and for $t=m$ we have $(b^i a)^{m} = b^{i\cdot \frac{s^{m}-1}{s-1}}=1$ because $\frac{s^m-1}{s-1}\equiv 0 \pmod{n}$.} Since the union \mqureshi{of the $\mu$-subgroups} covers $G$ and $(n-1)+(m-1)n=mn-1= |G|-1$ we conclude that they are the $\mu$-subgrups with pairwise trivial intersection which implies that $G$ is a flower group. 
\end{proof}

\begin{remark}
The conditions on $s, m$ and $n$ in Lemma~\ref{lem:semi} are fulfilled if, for instance, the order of $s$ modulo $p$ equals $m$ for every prime divisor $p$ of $n$.
\end{remark}

\begin{remark}
For every positive integer $n\ge 2$, the pair $(s, m)=(n-1, 2)$ satisfies the conditions of Lemma~\ref{lem:semi}. In this case, the corresponding group is just the dihedral group $\mathcal D_{2n}$ of order $2n$.
\end{remark}

The following corollary is a direct application of Lemma~\ref{lem:semi} and Theorem~\ref{thm:main}.

\begin{corollary}\label{cor:semi}
Fix $t$ a positive integer and let $m, n, s>1$ and  $\C_n\rtimes_s \mathcal C_m$ be as in Lemma~\ref{lem:semi}. Write $n=\nu_1\omega_1$ in a way that $\omega_1$ is the greatest divisor of $n$ that is relatively prime with $t$ and write $m=\nu_2\omega_2$ in the same way. Then, for $n_1=1$ and $n_2=n$, the functional graph $\mathcal G(\varphi_t/\C_n\rtimes_s \mathcal C_m)$ of the map $\varphi_t$ over $\C_n\rtimes_s \C_m$ is isomorphic to
$$\left(\bigoplus_{i=1, 2}\bigoplus_{d_i|\omega_i\atop d_i\ne 1}\frac{n_i\cdot \varphi(d_i)}{\ord_{ d_i}(t)}\times \cyc(\ord_{ d_i}(t), \mathcal T_{\gcd_t(\nu_i)})\right)\oplus \{ \mathcal T_{\gcd_t(\nu_1)}+n\cdot \mathcal T_{\gcd_t(\nu_2)}\}.$$
\end{corollary}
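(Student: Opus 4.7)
The plan is to combine Lemma~\ref{lem:semi} with Theorems~\ref{thm:main} and~\ref{thm:tree}. First, Lemma~\ref{lem:semi} tells us that $G=\mathcal{C}_n\rtimes_s \mathcal{C}_m$ is a flower group whose pistil is trivial ($c_0=1$) and whose petals consist of $n$ cyclic subgroups of order $m$ together with one cyclic subgroup of order $n$. Hence $G$ is of type $(1;m,\ldots,m,n)$ with $n$ copies of $m$. With the notation of the corollary, writing $n=\nu_1\omega_1$ and $m=\nu_2\omega_2$, we group the petals into two classes of common size: $n_1=1$ petal of order $n$ (producing the parameters $\nu_1,\omega_1$) and $n_2=n$ petals of order $m$ (each producing the parameters $\nu_2,\omega_2$).

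Next I apply Theorem~\ref{thm:main} to $G$. Since $c_0=1$ we have $\nu_0=\omega_0=1$, so the condition ``$d_i\mid\omega_i$ and $d_i\nmid\omega_0$'' in the first double sum of Theorem~\ref{thm:main} simplifies to ``$d_i\mid\omega_i$ and $d_i\neq 1$''. Because the petals within each class contribute identical components, and the central tree is common to all $\varphi_t$-periodic elements of $C_0$, collecting terms by class yields
\[
\bigoplus_{i=1,2}\bigoplus_{\substack{d_i\mid \omega_i\\ d_i\neq 1}}\frac{n_i\cdot \varphi(d_i)}{\ord_{d_i}(t)}\times \cyc\!\left(\ord_{d_i}(t),\mathcal{T}_{\gcd_t(\nu_i)}\right),
\]
for the non-central part. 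The second sum in Theorem~\ref{thm:main}, indexed by $d_0\mid\omega_0=1$, reduces to the single summand $\{\mathcal{T}_t(G)\}$ since $\varphi(1)=\ord_1(t)=1$.

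Finally, to identify the central tree $\mathcal{T}_t(G)$, I invoke Theorem~\ref{thm:tree}(ii): because $\gcd(c_0,t)=\gcd(1,t)=1$, the central tree is the root-sum of the elementary trees $\mathcal{T}_{\gcd_t(c_j)}$ over all petals. With one petal contributing $\mathcal{T}_{\gcd_t(\nu_1)}$ and $n$ petals each contributing $\mathcal{T}_{\gcd_t(\nu_2)}$, this root-sum is precisely $\mathcal{T}_{\gcd_t(\nu_1)}+n\cdot \mathcal{T}_{\gcd_t(\nu_2)}$. Assembling the three pieces yields the stated formula. The proof is essentially bookkeeping, with the only mild subtlety being the correct folding of the $n$ identical petals of order $m$ into the multiplicity factor $n_2=n$ in both the non-central cycles and the central tree; the heavy lifting has already been done in Lemma~\ref{lem:semi} and in the general structure theorems for flower groups.
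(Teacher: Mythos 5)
Your proposal is correct and follows exactly the route the paper intends: Lemma~\ref{lem:semi} gives the flower type $(1;m,\ldots,m,n)$, Theorem~\ref{thm:main} with $\omega_0=\nu_0=1$ yields the cycle structure with the condition $d_i\nmid\omega_0$ collapsing to $d_i\neq 1$ and the petal multiplicities $n_1=1$, $n_2=n$, and Theorem~\ref{thm:tree}(ii) (valid since $\gcd(c_0,t)=\gcd(1,t)=1$, and using $\gcd_t(c_i)=\gcd_t(\nu_i)$) identifies the central tree as $\mathcal T_{\gcd_t(\nu_1)}+n\cdot \mathcal T_{\gcd_t(\nu_2)}$. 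The paper gives no separate proof, stating the corollary as a direct application of these same results, so your bookkeeping is precisely the omitted argument.
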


We provide a numerical example, showing the applicability of Corollary~\ref{cor:semi}.

\begin{example}
\mqureshi{Consider the group $G=\C_{65}\rtimes_{8} \C_{4}$ as in Lemma~\ref{lem:semi} and let $\varphi_{10}:G\to G$ with $\varphi_{10}(g)=g^{10}$. From Corollary~\ref{cor:semi}, we have that $G(\varphi_{10}/G)$ is isomorphic to $ 2\times \cyc(6, \mathcal T_{(5)})\oplus \{ \mathcal{T}_{(5)} + 65\cdot \mathcal{T}_{(2,2)} \}$. Figure~\ref{fig:Semidirect} shows a picture of this graph.}

%%%%%%%%%%%%%%%%Figure: Semidirect products%%%%%%%%%%%%%%%%%%

\begin{figure}[H]
\begin{center}
\begin{tikzpicture}
  [scale=1, place/.style={circle,draw=black,thick,fill=red,
                 inner sep=0pt,minimum size=1mm}]
\draw (0,0) circle (5mm);
  \node (1) at (0,0.5) [place] {};
  \node (2) at (0.1,1) [place] {};
  \node (3) at (0.3,1) [place] {};
  \node (4) at (-0.1,1) [place] {};
  \node (5) at (-0.3,1) [place] {};
\foreach \x/\y in {1/2, 1/3, 1/4, 1/5}
	\draw (\x) to (\y); 

\begin{scope}[rotate=60]
  \node (1) at (0,0.5) [place] {};
  \node (2) at (0.1,1) [place] {};
  \node (3) at (0.3,1) [place] {};
  \node (4) at (-0.1,1) [place] {};
  \node (5) at (-0.3,1) [place] {};
\foreach \x/\y in {1/2, 1/3, 1/4, 1/5}
	\draw (\x) to (\y);   
\end{scope}

\begin{scope}[rotate=120]
  \node (1) at (0,0.5) [place] {};
  \node (2) at (0.1,1) [place] {};
  \node (3) at (0.3,1) [place] {};
  \node (4) at (-0.1,1) [place] {};
  \node (5) at (-0.3,1) [place] {};
\foreach \x/\y in {1/2, 1/3, 1/4, 1/5}
	\draw (\x) to (\y);   
\end{scope}

\begin{scope}[rotate=180]
  \node (1) at (0,0.5) [place] {};
  \node (2) at (0.1,1) [place] {};
  \node (3) at (0.3,1) [place] {};
  \node (4) at (-0.1,1) [place] {};
  \node (5) at (-0.3,1) [place] {};
\foreach \x/\y in {1/2, 1/3, 1/4, 1/5}
	\draw (\x) to (\y);   
\end{scope}

\begin{scope}[rotate=240]
  \node (1) at (0,0.5) [place] {};
  \node (2) at (0.1,1) [place] {};
  \node (3) at (0.3,1) [place] {};
  \node (4) at (-0.1,1) [place] {};
  \node (5) at (-0.3,1) [place] {};
\foreach \x/\y in {1/2, 1/3, 1/4, 1/5}
	\draw (\x) to (\y);   
\end{scope}

\begin{scope}[rotate=300]
  \node (1) at (0,0.5) [place] {};
  \node (2) at (0.1,1) [place] {};
  \node (3) at (0.3,1) [place] {};
  \node (4) at (-0.1,1) [place] {};
  \node (5) at (-0.3,1) [place] {};
\foreach \x/\y in {1/2, 1/3, 1/4, 1/5}
	\draw (\x) to (\y);   
\end{scope}

\begin{scope}[xshift=3cm, yshift=0cm]

\draw (0,0) circle (5mm);
  \node (1) at (0,0.5) [place] {};
  \node (2) at (0.1,1) [place] {};
  \node (3) at (0.3,1) [place] {};
  \node (4) at (-0.1,1) [place] {};
  \node (5) at (-0.3,1) [place] {};
\foreach \x/\y in {1/2, 1/3, 1/4, 1/5}
	\draw (\x) to (\y); 

\begin{scope}[rotate=60]
  \node (1) at (0,0.5) [place] {};
  \node (2) at (0.1,1) [place] {};
  \node (3) at (0.3,1) [place] {};
  \node (4) at (-0.1,1) [place] {};
  \node (5) at (-0.3,1) [place] {};
\foreach \x/\y in {1/2, 1/3, 1/4, 1/5}
	\draw (\x) to (\y);   
\end{scope}

\begin{scope}[rotate=120]
  \node (1) at (0,0.5) [place] {};
  \node (2) at (0.1,1) [place] {};
  \node (3) at (0.3,1) [place] {};
  \node (4) at (-0.1,1) [place] {};
  \node (5) at (-0.3,1) [place] {};
\foreach \x/\y in {1/2, 1/3, 1/4, 1/5}
	\draw (\x) to (\y);   
\end{scope}

\begin{scope}[rotate=180]
  \node (1) at (0,0.5) [place] {};
  \node (2) at (0.1,1) [place] {};
  \node (3) at (0.3,1) [place] {};
  \node (4) at (-0.1,1) [place] {};
  \node (5) at (-0.3,1) [place] {};
\foreach \x/\y in {1/2, 1/3, 1/4, 1/5}
	\draw (\x) to (\y);   
\end{scope}

\begin{scope}[rotate=240]
  \node (1) at (0,0.5) [place] {};
  \node (2) at (0.1,1) [place] {};
  \node (3) at (0.3,1) [place] {};
  \node (4) at (-0.1,1) [place] {};
  \node (5) at (-0.3,1) [place] {};
\foreach \x/\y in {1/2, 1/3, 1/4, 1/5}
	\draw (\x) to (\y);   
\end{scope}

\begin{scope}[rotate=300]
  \node (1) at (0,0.5) [place] {};
  \node (2) at (0.1,1) [place] {};
  \node (3) at (0.3,1) [place] {};
  \node (4) at (-0.1,1) [place] {};
  \node (5) at (-0.3,1) [place] {};
\foreach \x/\y in {1/2, 1/3, 1/4, 1/5}
	\draw (\x) to (\y);   
\end{scope}

\end{scope}

\begin{scope}[xshift=6cm, yshift=0.5cm]
\draw (0,0.25) circle (2.5mm);
  \node (1) at (0,0) [place] {};
  \node (2) at (0.3,-0.5) [place] {};
  \node (3) at (-0.3,-0.5) [place] {};
  \node (21) at (0.1,-1) [place] {};
  \node (22) at (0.5,-1) [place] {};
  \node (4) at (-0.5,-0.5) [place] {};
  \node (5) at (-0.7,-0.5) [place] {};
  \node (6) at (-0.9,-0.5) [place] {};  
  \node (7) at  (0.9,-0.5) [place] {};
  \node (71) at (0.7,-1) [place] {};
  \node (72) at (1.1,-1) [place] {};
  \node (ptos) at (1.5,-0.75) {.....};
  \node (8) at  (2.1,-0.5) [place] {};
  \node (81) at (1.9,-1) [place] {};
  \node (82) at (2.3,-1) [place] {};
  \node (underbrace) at (1.2, -1.3) {\scriptsize $\underbrace{\hspace{2.4cm}}_{65}$};
\foreach \x/\y in {1/2, 1/3, 1/4, 1/5, 1/6, 1/7, 1/8, 2/21, 2/22, 7/71, 7/72, 8/81, 8/82}
	\draw (\x) to (\y);
\end{scope}

\end{tikzpicture}
\end{center}
\caption{The graph $ 2\times \cyc(6, \mathcal T_{(5)})\oplus \{ \mathcal{T}_{(5)} + 65\cdot \mathcal{T}_{(2,2)} \}$.}
\label{fig:Semidirect}
\end{figure}
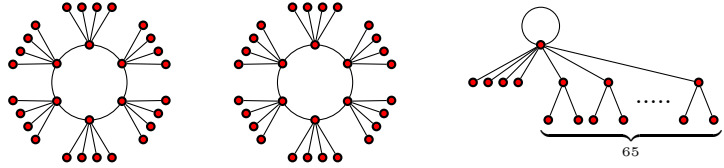

%%%%%%%%%%%%%%%%End Figure: Semidirect products%%%%%%%%%%%%%%%%%%

\end{example}

\subsection{The projective general linear group}

Fix $q=p^s$ a prime power and let $\F_q$ be the finite field with $q$ elements. The Projective General Linear group of order 2 over $\F_q$ is the quotient group $\mathbb G_q:=\PGL(2, q)=\frac{\mathrm{GL}(2, q)}{\F_q^*\cdot I}$, where $\mathrm{GL}(2, q)$ is the group of the $2\times 2$ non-singular matrices with entries in $\F_q$ and $I$ is the $2\times 2 $ identity matrix. It is well known that $\mathbb G_q$ has order $q^3-q$. We shall prove that $\mathbb G_q$ is, in fact, a flower group. For this, we need the following machinery.

\begin{definition}
For $A\in \GL(2, q)$ such that $[A]\ne [I]$, $A$ is of {\bf type} $1$ (resp. $2$, $3$ or $4$) if its eigenvalues are distinct and in $\F_q$ (resp. equal and in $\F_q$, symmetric and in $\F_{q^2}\setminus \F_q$ or not symmetric and in $\F_{q^2}\setminus \F_q$). 
\end{definition}

We observe that elements of type $3$ appear only when $q$ is odd. Moreover, the types of $A$ and $\lambda\cdot A$ are the same for any $\lambda\in \F_q^*$. For this reason, we say that $[A]$ is of type $t$ if $A$ is of type $t$. The number of elements of each type and the structure of the centralizers is well known and we display them in Table~\ref{table:1}.

\begin{table}[ht]
\begin{center}
\begin{tabular}{|c|c|c|c|c|}\hline
$p$ & \text{type of} $g$& \text{order of } $g$& $C_{\mathbb G_q}(g)$ &   $\#$ \text{elements} \\ \hline
 &  $1$ & $>2$ &  $\cong  \C_{q-1}$ &  $q(q+1)(q-2)/2$  \\ \cline{2-5}
 $2$ & $2$ & $2$ &  $\cong \mathbb \C_2^s$ &    $q^2-1$\\ \cline{2-5}
 &  $4$ & $>2$  &$\cong \mathbb \C_{q+1}$ &  $q^2(q-1)/2$\\ \hline
 &  $1$ &  $>2$ &  $\cong \mathbb \C_{q-1}$ &   $q(q+1)(q-3)/2$ \\ \cline{2-5}
 &  $1$ & $2$ &$\cong \mathcal{D}_{2(q-1)}$ &   $q(q+1)/2$ \\ \cline{2-5}
 $ >2 $ & $2$ & $p$ & $\cong \mathbb \C_p^s$   &$q^2-1$
 \\ \cline{2-5}
 &  $3$ &  $2$ & $\cong \mathcal{D}_{2(q+1)}$ &  $q(q-1)/2$  \\ \cline{2-5}
 &  $4$ &  $>2$ & $\cong \mathbb \C_{q+1}$&  $q(q-1)^2/2$\\ \hline
\end{tabular}
\caption{Element structure in $\mathbb G_q=\mathrm{PGL}(2, q)$, where $q=p^s$. Here $\C_n$ and $\mathcal{D}_{2n}$ denote the cyclic group or order $n$ and the dihedral group of order $2n$, respectively.}
\label{table:1}
\end{center}
\end{table}

We obtain the following result.

\begin{proposition}\label{prop:pgl}
For any prime power $q=p^s$, the projective general linear group $\mathbb G_q=\PGL(2, q)$ is a flower group with pistil $C_0=\{[I]\}$. Moreover, for $q\ge 3$, the set of petals of $\PGL(2, q)$ comprises $\frac{q(q+1)}{2}$ cyclic groups of order $q-1$, $\frac{q(q-1)}{2}$ cyclic groups of order $q+1$ and $\frac{q^2-1}{p-1}$ cyclic groups of order $p$.
\end{proposition}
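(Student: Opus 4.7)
The plan is to apply Proposition~\ref{prop:com}. First I would record that $Z(\mathbb G_q)=\{[I]\}$, since the center of $\GL(2,q)$ is $\F_q^*\cdot I$, and that $\mathbb G_q$ is non-abelian for every prime power $q$. It then suffices to verify that for every $g\in\mathbb G_q\setminus\{[I]\}$ the centralizer $C_{\mathbb G_q}(g)$ admits a unique $\mu$-subgroup $C_g^*$ containing $g$; the requirement that $C_g^*$ contain $Z(\mathbb G_q)=\{[I]\}$ is automatic.

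To carry out this verification I would proceed by a case analysis driven by Table~\ref{table:1}. When $C_{\mathbb G_q}(g)$ is cyclic (type 1 with $\ord(g)>2$, and type 4), the centralizer is itself the unique $\mu$-subgroup containing $g$, so $C_g^*=C_{\mathbb G_q}(g)$ has order $q-1$ or $q+1$. When $g$ is of type 2, $C_{\mathbb G_q}(g)\cong\mathcal C_p^s$ is elementary abelian, so every nontrivial cyclic subgroup has order $p$ and $C_g^*=\langle g\rangle$ is the unique such $\mu$-subgroup. The remaining cases (type 3, and type 1 with $\ord(g)=2$ in odd characteristic) have dihedral centralizer $\mathcal D_{2n}$ with $n=q+1$ or $n=q-1$, while $g$ has order $2$. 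Here $g$ lies in $Z(C_{\mathbb G_q}(g))$ and, since $n$ is even, the center of $\mathcal D_{2n}$ is a two-element group; therefore $g$ is the unique central involution of $\mathcal D_{2n}$ and belongs to the rotation subgroup $\mathcal C_n$. Any cyclic subgroup of $\mathcal D_{2n}$ containing $g$ must lie in $\mathcal C_n$, because $g$ together with any reflection would generate a Klein four-group. Hence $C_g^*=\mathcal C_n$.

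From Proposition~\ref{prop:com} I would conclude that $\mathbb G_q$ is a flower group with pistil $\{[I]\}$ and petals the subgroups $C_g^*$ identified above. To enumerate the petals when $q\ge 3$, I would count the non-identity elements contributed by each family. Each petal of order $q-1$ consists of $[I]$ together with $q-2$ type 1 elements (the non-identity elements of a torus of order $q-1$ share an eigenbasis over $\F_q$); using the total $q(q+1)(q-2)/2$ of type 1 elements from Table~\ref{table:1}, this yields $q(q+1)/2$ petals of order $q-1$. Analogously, each petal of order $q+1$ contributes $q$ non-identity elements of types 3 and 4, producing $q(q-1)/2$ petals; and each petal of order $p$ contains $p-1$ type 2 elements, producing $(q^2-1)/(p-1)$ petals. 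The main obstacle is the dihedral case of the structural analysis, where one must identify $g$ as the unique central involution of its centralizer and rule out cyclic subgroups extending through a reflection; once this is settled, the enumeration against Table~\ref{table:1} is routine.
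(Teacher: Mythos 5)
Your proposal is correct and follows essentially the same route as the paper: both invoke Proposition~\ref{prop:com} with $Z(\mathbb G_q)=\{[I]\}$, dispose of the cyclic and prime-exponent centralizers immediately, identify $g$ in the dihedral case as the unique central involution $h^{n/2}$ lying only in the rotation subgroup, and then count petals against Table~\ref{table:1}. Your Klein four-group argument and the explicit element counts merely spell out details the paper leaves as ``direct to verify.''
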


\begin{proof}
It is well known that the center of $\mathbb G_q$ equals $\{[I]\}$. By Proposition \ref{prop:com}, it suffices to prove that every non identity element $g\in \mathbb{G}_q$ is contained in a unique $\mu$-subgroup of $C_G(g)$. This is trivially verified if $C_G(g)$ is cyclic or has prime exponent. Otherwise, $C_G(g)$ is a dihedral group, the order of $g$ is $2$ and $q$ is odd (see Table \ref{table:1}). In this case, we observe that $n=q\pm 1$ is even and the center of $\mathcal D_{2n}$ equals $\{1, h^{\frac{n}{2}}\}$ where $h$ is an element of order $n$. Hence $g=h^{\frac{n}{2}}$ and it is direct to verify that such element is contained in a unique $\mu$-subgroup of $\mathcal D_{2n}$, namely the group of order $n$ generated by $h$. Thus $\mathbb G_q$ is a flower group with pistil $\{[I]\}$ whose set of petals comprises the cyclic subgroups of orders $p, q+1$ and $q-1$. A detailed account in Table~\ref{table:1} provides the number of such subgroups, according to their orders.
\end{proof}

The following corollary is an immediate application of Proposition~\ref{prop:pgl} and Theorems~\ref{thm:main} and \ref{thm:tree}.

\begin{corollary}\label{cor:pgl}
Fix $t$ a positive integer, let $q=p^s\ge 4$ be a prime power and $\mathbb G_q=\PGL(2, q)$. Write $q-1=\nu_1\omega_1$ in a way that $\omega_1$ is the greatest divisor of $q-1$ that is relatively prime with $t$ and write $q+1=\nu_2\omega_2$ in the same way. Then, for $d_1=\frac{q(q+1)}{2}$ and $d_2=\frac{q(q-1)}{2}$, the functional graph $\mathcal G(\varphi_t/\mathbb G_q)$ of the map $\varphi_t:\mathbb G_q\to \mathbb G_q$ with $g\mapsto g^t$ is isomorphic to
$$\left(\bigoplus_{i=1, 2}\bigoplus_{d_i|\omega_i\atop d_i\ne 1}\frac{d_i\cdot \varphi(d_i)}{\ord_{ d_i}(t)}\times \cyc(\ord_{ d_i}(t), \mathcal T_{\gcd_t(\nu_i)})\right)\oplus \mathcal G,$$
where $\mathcal G = \left\{ \begin{tabular}{ll}
$\{ \frac{q(q+1)}{2}\cdot \mathcal T_{\gcd_t(\nu_1)}+\frac{q(q-1)}{2}\cdot \mathcal T_{\gcd_t(\nu_2)}+\frac{q^2-1}{p-1}\cdot \mathcal T_{(p)}\}$ & $\textrm{if }p\mid t$; \\
$\frac{(q^2-1)}{\ord_{p}(t)}\times \cyc(\ord_{p}(t)) \oplus \{ \frac{q(q+1)}{2}\cdot \mathcal T_{\gcd_t(\nu_1)}+\frac{q(q-1)}{2}\cdot \mathcal T_{\gcd_t(\nu_2)} \}$ & $\textrm{if }p \nmid t$.\end{tabular}    \right.$

%where $\mathcal G$ equals
%\begin{enumerate}[(i)]
%\item $\{ \frac{q(q+1)}{2}\cdot \mathcal T_{\gcd_t(\nu_1)}+\frac{q(q-1)}{2}\cdot \mathcal T_{\gcd_t(\nu_2)}+\frac{q^2-1}{p-1}\cdot \mathcal T_{(p)}\}$ if $p\mid t$;
%\item $\frac{(q^2-1)}{\ord_{p}(t)}\times \cyc(\ord_{p}(t)) \oplus \{ \frac{q(q+1)}{2}\cdot \mathcal T_{\gcd_t(\nu_1)}+\frac{q(q-1)}{2}\cdot \mathcal T_{\gcd_t(\nu_2)} \}$ if $p \nmid t$.
%\end{enumerate}

\end{corollary}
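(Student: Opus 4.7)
The plan is to combine Proposition~\ref{prop:pgl}, which identifies $\mathbb{G}_q$ as a flower group of type $\bigl(1;\underbrace{q-1,\ldots,q-1}_{q(q+1)/2},\underbrace{q+1,\ldots,q+1}_{q(q-1)/2},\underbrace{p,\ldots,p}_{(q^2-1)/(p-1)}\bigr)$ with pistil $C_0 = \{[I]\}$, together with Theorem~\ref{thm:main} for the overall structure and Theorem~\ref{thm:tree}(ii) for the central tree. A key simplification is that $c_0 = 1$ forces $\omega_0 = 1$, so the condition ``$d_i \mid \omega_i$ and $d_i \nmid \omega_0$'' appearing in Theorem~\ref{thm:main} reduces simply to ``$d_i \mid \omega_i$ and $d_i \ne 1$''.

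First I would handle the non-central cyclic part by grouping petals by order. The $q(q+1)/2$ petals of order $q-1$ collectively contribute $\bigoplus_{d_1\mid \omega_1,\, d_1\ne 1}\tfrac{q(q+1)}{2}\cdot \tfrac{\varphi(d_1)}{\ord_{d_1}(t)}\times \cyc(\ord_{d_1}(t),\mathcal{T}_{\gcd_t(\nu_1)})$, and similarly for the $q(q-1)/2$ petals of order $q+1$; together these are precisely the first displayed factor of the corollary. The $(q^2-1)/(p-1)$ petals of order $p$ require a case split. If $p \mid t$ then the greatest divisor of $p$ coprime to $t$ is $1$, so these petals contribute no periodic elements outside the pistil, hence nothing to the cyclic part. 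If $p \nmid t$ then $\nu = 1$ and $\omega = p$, so each such petal produces $\tfrac{p-1}{\ord_p(t)}$ cycles of length $\ord_p(t)$ with trivial attached trees; summing over the $(q^2-1)/(p-1)$ petals yields the single summand $\tfrac{q^2-1}{\ord_p(t)}\times \cyc(\ord_p(t))$ appearing in the second case of $\mathcal G$.

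Next, for the central tree $\mathcal{T}_t(\mathbb{G}_q)$, I would invoke Theorem~\ref{thm:tree}(ii): since $\gcd(c_0,t)=1$, the tree decomposes as the sum $\sum_{i}\mathcal{T}_{\gcd_t(\nu_i)}$ (identifying all roots) taken over all petals. Aggregating by petal type gives the contribution $\tfrac{q(q+1)}{2}\cdot \mathcal{T}_{\gcd_t(\nu_1)} + \tfrac{q(q-1)}{2}\cdot \mathcal{T}_{\gcd_t(\nu_2)}$ from the $q\pm 1$ petals. The $p$-petals behave asymmetrically: when $p \mid t$ one has $\gcd_t(p) = (p)$, so they add the summand $\tfrac{q^2-1}{p-1}\cdot \mathcal{T}_{(p)}$, while when $p \nmid t$ one has $\gcd_t(p) = (1)$ and $\mathcal{T}_{(1)} = \bullet$, so they are absorbed into the shared root and contribute nothing new to the central tree.

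The main obstacle is essentially the bookkeeping around the order-$p$ petals, which ``migrate'' between the cyclic structure (when $p \nmid t$) and the central tree (when $p \mid t$); one must verify that aggregating the $(q^2-1)/(p-1)$ identical cycle decompositions simplifies cleanly to $\tfrac{q^2-1}{\ord_p(t)}\times \cyc(\ord_p(t))$ without doubly counting the pistil element $[I]$, which is already accounted for as the unique periodic point in $C_0$ sitting at the root of $\mathcal{T}_t(\mathbb{G}_q)$.
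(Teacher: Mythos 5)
Your proof is correct and follows precisely the paper's route: the paper gives no detailed argument, stating the corollary as an immediate application of Proposition~\ref{prop:pgl} together with Theorems~\ref{thm:main} and~\ref{thm:tree}, which is exactly the combination you carry out (with $c_0=1$ forcing $\omega_0=1$, so Theorem~\ref{thm:tree}(ii) applies unconditionally to the central tree). Your case analysis of the order-$p$ petals --- contributing $\frac{q^2-1}{\ord_{p}(t)}\times \cyc(\ord_{p}(t))$ to the cyclic part when $p\nmid t$, versus $\frac{q^2-1}{p-1}\cdot \mathcal T_{(p)}$ to the central tree when $p\mid t$ --- correctly fills in the bookkeeping the paper leaves implicit.
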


We provide a numerical example, showing the applicability of the previous corollary.

\begin{example}    
Consider the map $\varphi_2:\PGL(2, 5)\to \PGL(2, 5), g\mapsto  g^2$. From Corollary~\ref{cor:pgl}, we obtain that
$G(\varphi_2/\PGL(2, 5))$ is isomorphic to 

$10\times \cyc(2, \mathcal T_{(2)})\oplus 6\times \cyc(4) \oplus \{ 15\cdot \mathcal T_{(2, 2)}+10\cdot \mathcal T_{(2)}\}$; see Figure~\ref{fig:PGL}.

%%%%%%%%%%%%%%%%Figure: Projective General Linear Groups%%%%%%%%%%%%%%%%%%

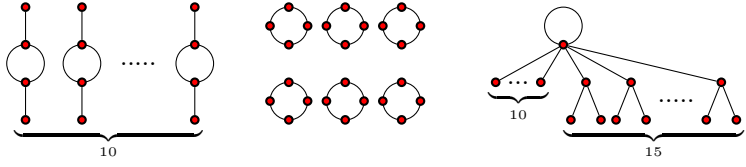
\begin{figure}
\begin{center}
\begin{tikzpicture}
  [scale=1, place/.style={circle,draw=black,thick,fill=red,
                 inner sep=0pt,minimum size=1mm}]
\draw (0,0) circle (2.5mm);
  \node (1) at (0, 0.25) [place] {};
  \node (2) at (0, -0.25) [place] {};
  \node (11) at (0, 0.75) [place] {};
  \node (21) at (0, -0.75) [place] {};
\foreach \x/\y in { 2/21, 1/11}
	\draw (\x) to (\y); 

\begin{scope}[xshift=0.75cm, yshift=0cm]
\draw (0,0) circle (2.5mm);
  \node (1) at (0, 0.25) [place] {};
  \node (2) at (0, -0.25) [place] {};
  \node (11) at (0, 0.75) [place] {};
  \node (21) at (0, -0.75) [place] {};
\foreach \x/\y in { 2/21, 1/11}
	\draw (\x) to (\y); 
\end{scope}

\node (ptos1) at (1.5,0) {.....};

\begin{scope}[xshift=2.25cm, yshift=0cm]
\draw (0,0) circle (2.5mm);
  \node (1) at (0, 0.25) [place] {};
  \node (2) at (0, -0.25) [place] {};
  \node (11) at (0, 0.75) [place] {};
  \node (21) at (0, -0.75) [place] {};
\foreach \x/\y in { 2/21, 1/11}
	\draw (\x) to (\y); 
\end{scope}

\node (underbrace1) at (1.1,-1.05) {\scriptsize $\underbrace{\hspace{2.5cm}}_{10}$};

\begin{scope}[xshift=3.5cm, yshift=0.5cm]
\draw (0,0) circle (2.5mm);
  \node (1) at (0, 0.25) [place] {};
  \node (2) at (0, -0.25) [place] {};
  \node (3) at (0.25, 0) [place] {};
  \node (4) at (-0.25, 0) [place] {}; 
\draw (0.75,0) circle (2.5mm);
  \node (1) at (0.75, 0.25) [place] {};
  \node (2) at (0.75, -0.25) [place] {};
  \node (3) at (1, 0) [place] {};
  \node (4) at (0.5, 0) [place] {}; 
\draw (1.5,0) circle (2.5mm);
  \node (1) at (1.5, 0.25) [place] {};
  \node (2) at (1.5, -0.25) [place] {};
  \node (3) at (1.75, 0) [place] {};
  \node (4) at (1.25, 0) [place] {}; 
\end{scope}

\begin{scope}[xshift=3.5cm, yshift=-0.5cm]
\draw (0,0) circle (2.5mm);
  \node (1) at (0, 0.25) [place] {};
  \node (2) at (0, -0.25) [place] {};
  \node (3) at (0.25, 0) [place] {};
  \node (4) at (-0.25, 0) [place] {}; 
\draw (0.75,0) circle (2.5mm);
  \node (1) at (0.75, 0.25) [place] {};
  \node (2) at (0.75, -0.25) [place] {};
  \node (3) at (1, 0) [place] {};
  \node (4) at (0.5, 0) [place] {}; 
\draw (1.5,0) circle (2.5mm);
  \node (1) at (1.5, 0.25) [place] {};
  \node (2) at (1.5, -0.25) [place] {};
  \node (3) at (1.75, 0) [place] {};
  \node (4) at (1.25, 0) [place] {}; 
\end{scope}

\begin{scope}[xshift=7.15cm, yshift=0.25cm]
\draw (0,0.25) circle (2.5mm);
  \node (1) at (0,0) [place] {};
  \node (2) at (0.3,-0.5) [place] {};
  \node (3) at (-0.3,-0.5) [place] {};
  \node (21) at (0.1,-1) [place] {};
  \node (22) at (0.5,-1) [place] {};
  \node (4y5) at (-0.6,-0.5) {...};
  \node (underbrace1) at (-0.6, -0.8) {\scriptsize $\underbrace{\hspace{0.8cm}}_{10}$};
  \node (6) at (-0.9,-0.5) [place] {};  
  \node (7) at  (0.9,-0.5) [place] {};
  \node (71) at (0.7,-1) [place] {};
  \node (72) at (1.1,-1) [place] {};
  \node (ptos) at (1.5,-0.75) {.....};
  \node (8) at  (2.1,-0.5) [place] {};
  \node (81) at (1.9,-1) [place] {};
  \node (82) at (2.3,-1) [place] {};
  \node (underbrace2) at (1.2, -1.3) {\scriptsize $\underbrace{\hspace{2.4cm}}_{15}$};
\foreach \x/\y in {1/2, 1/3, 1/6, 1/7, 1/8, 2/21, 2/22, 7/71, 7/72, 8/81, 8/82}
	\draw (\x) to (\y);
\end{scope}

\end{tikzpicture}
\end{center}
\caption{The graph  $10\times \operatorname{Cyc}(2, \mathcal T_{(2)})\oplus 6\times \operatorname{Cyc}(4) \oplus \{ 15\cdot \mathcal T_{(2, 2)}+10\cdot \mathcal T_{(2)}\}$.}
\label{fig:PGL}
\end{figure}

%%%%%%%%%%%%%%%%End Figure: Projective General Linear Groups%%%%%%%%%%%%%%%%%%

\end{example}

From Remark \ref{RemarkNumberOfTrees} a functional graph $\mathcal{G}(\varphi_t/PGL(2,q))$ has at most four non isomorphic trees. In the example above we have three non isomorphic trees but there are examples with four (for example the functional graph of $\varphi_2$ over $\PGL(2,11)$).

\mqureshi{
\section{Closing remarks}\label{Section:ClosingRemarks}

In this paper we describe the functional graph $\mathcal{G}(\varphi_t/G)$ when $G$ is an abelian group or a flower group. In both cases the cyclic part is easier to describe than the non cyclic part (i.e. the structure of the tree attached to periodic points). In contrast with the abelian case where all the trees attached to periodic points are isomorphic, for flower groups we proved that several non-isomorphic classes of trees can appear (this number depends on the cardinality of the petals and $t$). However for the families of groups considered in Section \ref{Section:applications} the number of non isomorphic trees in the functional graph $\mathcal{G}(\varphi_t/G)$ is at most four. We raise the following questions: 
\begin{itemize}
\item Is this number unbounded for general groups?
\item Is this number unbounded if we restrict to flower groups?
\item Determine necessary and sufficient conditions for the sequence $(c_0; c_1,\ldots,c_k)$ being the type of some flower group. 
\end{itemize}

These question are stated in increasing order of difficult. Relating to the above questions it is natural to consider the number $\tau(n)$, the maximum number of non isomorphic trees that can appear in a graph $\mathcal{G}(\varphi_t/G)$ for some group $G$ of order $n$ (restricted or not to flower groups) and some positive integer $t$. It could be interesting to determine the asymptotic behavior of the sequence $\tau(n)$.}

\section*{Acknowledgments}
Part of this paper were developed during a pleasant stay by the second author at Universidad de la Rep\'ublica, supported by PEDECIBA. \mqureshi{The authors thank S\'avio Ribas to pointing out a mistake in Lemma \ref{lem:semi} in a previous version of this paper}.

\end{document}